\documentclass[12pt]{amsart} 
\usepackage[english]{babel}
\usepackage[utf8]{inputenc}
\usepackage{amsmath}
\usepackage{graphicx}
\usepackage{subfigure}
\usepackage{amssymb}
\usepackage{amsthm}
\usepackage{bm}
\newcommand{\nm}{\noalign{\smallskip}}
\newcommand{\ds}{\displaystyle}
\usepackage{tikz-cd}
\usepackage{mathrsfs}
\usepackage[colorinlistoftodos]{todonotes}
\usepackage{enumitem}
\usepackage{yfonts}
\usepackage{ dsfont }
\usepackage{bbm}
\usepackage{geometry}
\usepackage{setspace}
\usepackage{tikz}
\usepackage{pgfplots}
\usepackage{cite}
\pgfplotsset{compat=newest}
\usepgfplotslibrary{fillbetween}
\usetikzlibrary{positioning}
\usetikzlibrary{decorations.pathreplacing}
\usetikzlibrary{decorations,arrows}
\usetikzlibrary{decorations.markings}
\usetikzlibrary{patterns}
\usetikzlibrary{quotes,angles}
\usetikzlibrary{arrows}

\numberwithin{equation}{section}
\newtheorem{theorem}{Theorem}[section]
\newtheorem{lemma}[theorem]{Lemma}

\newtheorem{definition}[theorem]{Definition}

\newtheorem{hypothesis}[theorem]{Hypothesis}
\newtheorem{corollary}[theorem]{Corollary}

\newtheorem{remark}[theorem]{Remark}
\newtheorem{proposition}[theorem]{Proposition}
\newtheorem{assumption}[theorem]{Assumption}
\geometry{a4paper,scale=0.8}
\setlength {\marginparwidth }{2cm} 
\allowdisplaybreaks[4]

\newcommand*{\rom}[1]{\expandafter\@slowromancap\romannumeral #1@}

\title[Bulk-Edge Correspondence for Finite Disordered Systems]{Bulk-Edge Correspondence for Finite Two-dimensional Ergodic Disordered Systems}

\author{Habib Ammari} %
\address[H. Ammari]{Department of Mathematics, ETH Z\"{u}rich, R\"{a}mistrasse 101, CH-8092 Z\"{u}rich, Switzerland; Hong Kong Institute for Advanced Study, City University of Hong Kong, Kowloon Tong, Hong Kong}
\email{habib.ammari@math.ethz.ch}

\author{Jiayu Qiu} %
\address[J. Qiu]{Department of Mathematics, ETH Z\"{u}rich, R\"{a}mistrasse 101, CH-8092 Z\"{u}rich, Switzerland}
\email{jiayu.qiu@sam.math.ethz.ch}

\date{}

\begin{document}

\begin{abstract}
In this paper, we rigorously prove the bulk-edge correspondence for finite two-dimensional ergodic disordered systems. Specifically, we focus on the short-range Hamiltonians with ergodic disordered on-site potentials. We first introduce the bulk and edge indices, which are both well-defined within the Aizenman-Molchanov mobility gap. On the one hand, the bulk index is the usual Hall conductance, which is a well-studied quantized topological number. On the other hand, the edge index, which characterizes the averaged angular momentum of edge modes in the mobility gap, is uniquely associated with finite systems. Our main result proves that as the sample size tends to infinity, the edge index converges to the bulk index almost surely. Our findings provide a rigorous foundation for the bulk-edge correspondence principle for finite disordered systems. The existence of the Aizenman-Molchanov mobility gap is proved by the geometric decoupling method, introduced by Aizenman and Molchanov [Comm. Math. Phys., 1993], under a rational assumption on the distribution of the random potential. For completeness, all assumptions are checked on a prototypical model for (quantum) anomalous Hall physics.
\end{abstract}

\maketitle

\bigskip

\noindent \textbf{Keywords.}  Ergodic disordered system, random potential, Aizenman-Molchanov mobility gap, Anderson localization, edge index, bulk index, 
bulk-edge correspondence, Qi-Wu-Zhang model\par

\bigskip

\noindent \textbf{AMS subject classifications.} 
82B44 
35P20, 
37A30, 
82D03 
\\

\tableofcontents

\section{Introduction}

Since the discovery of the celebrated quantum Hall effect (QHE) and its topological origin \cite{Klitzing80qhe,vonKlitzing2020forty_years}, the topological phase of matter has gained great interest within both the physics and mathematics communities. The most remarkable properties of topological materials are rooted in the fundamental concept of the bulk-edge correspondence (BEC) principle. This principle asserts that there necessarily exist localized waves at the edge of the system if the bulk medium possesses nontrivial topological characters, which take quantized values and originate from the ground state topology of the system \cite{qizhang11topo_insulator,bernivig13topo_insulator}. Due to their topological nature, these special edge modes are robust against disorder present in the system, enabling a wide range of applications. Motivated by the study in condensed matter systems, it has been argued and demonstrated that the BEC principle is universal and ubiquitous in various physical systems, such as photonic crystals and others \cite{haldane08realization,wang2009observation,ozawa19topological_photonics}.

Since the seminal work by Hatsugai \cite{Hatsugai93chern}, the BEC principle has been established rigorously across various set-ups, primarily focusing on electronic systems described by the (magnetic or non-magnetic) Schrödinger operator \cite{Kellendonk02landau+ktheory,Kellendonk2004landau+ktheory,Kellendonk04landau+functional,taarabt2014landau+Ktheory,cornean2021landau+functional,shapiro2022shrodinger+functional,drouot2021microlocal}, Dirac Hamiltonians \cite{bal2019dirac+functional,bal2023dirac+microlocal}, tight-binding models \cite{graf2018shortrange+transfer,avila2013shortrange+transfer,graf2013shortrange+scattering,shapiro2022shrodinger+functional,graf2005equality,tauber2022chiral_finite_chain,qiu2025generalized,braverman2018spectralflowfamilytoeplitz,bourne2017ktheory,kubota2017ktheory}, and classical wave systems \cite{lin2022transfer,thiang2023transfer,ammari2024toeplitz_1,ammari2024toeplitz_2,
ammari2024toeplitz_3,SWP1,SWP2,SWP3,SWP4,qiu2025bec_finite}. The techniques used in these studies cover a broad spectrum, including K-theory, scattering theory, microlocal analysis, functional analysis, Toeplitz theory and coarse geometry, etc.; we refer the reader to the monograph \cite{prodan2016ktheory} and introduction of \cite{cornean2021landau+functional,qiu2025generalized} for an overview, and the review paper \cite{bal2024continuous} for an exhibition of recent results. Note that while extensive research has established the BEC principle for infinite domains, such as half-plane models with edges or interface models consisting of two separated half-planes, the mathematical framework for finite geometries remains underexplored. Recent works \cite{bal2023edge_curved,ludewig2020shortrange+coarse,drouot2024bec_curvedinterfaces} have addressed more general domains in which the edge or interface may not be straight, while still requiring the infinite size of domains. We note that the work \cite{tauber2022chiral_finite_chain} has dealt with the BEC principle for finite one-dimensional chains. Motivated by the growth of topological physics and the practical reality that experiments are performed on finite two-dimensional samples, a rigorous theory of the BEC principle for finite high-dimensional systems is needed. In our recent work \cite{qiu2025bec_finite}, we proved the BEC principle for finite two-dimensional domains for the first time, where the bulk medium is assumed to be periodic (i.e., without disorder) and possess a spectral gap. In such cases, it is well-known that the topological character of the bulk medium is the Chern number. Remarkably, by introducing an edge index that is uniquely associated with finite systems and characterizes the circulation of waves near the sample edge, it was shown in \cite{qiu2025bec_finite} that the edge index converges to the Chern number as the size of the domain grows to infinity. Although this result establishes the BEC principle for finite systems, its practical significance is limited by the strong assumption that the bulk medium is periodic. As robustness against disorder constitutes the most remarkable feature of topological materials, it is highly desirable to extend the idea developed in \cite{qiu2025bec_finite} to the disorder regime, which motivates this work.

In this paper, we rigorously prove the bulk-edge correspondence for finite two-dimensional ergodic disordered systems. Specifically, we focus on short-range Hamiltonians with ergodic disordered on-site potentials (Assumptions \ref{asmp_Hamiltonian_deter_part} and \ref{asmp_random_potential}). For such Hamitonians, a natural characterization of Anderson localization in a given frequency range is the Aizenman-Molchanov mobility gap (Hypothesis \ref{thm_anderson_localization}), which generalizes the usual spectral gap in the presence of disorder. We first introduce the bulk and edge indices in such mobility gaps. First, the bulk index is taken as the usual Hall conductance, which takes quantized values in the mobility gap and recovers the Chern number in the absence of disorder. On the other hand, the edge index is the same as in \cite{qiu2025bec_finite}, which measures the averaged angular momentum of modes in the mobility gap, minus the contribution of bulk-localized modes caused by disorder; see Definition \ref{prop_refined_bulk_index}. Hence, the refined edge index measures purely the circulation of edge modes in the finite disordered system. Our main result (Theorem \ref{thm_bec_disorder}) proves that, as the sample size tends to infinity, the edge index converges to the bulk index almost surely, thereby establishing the BEC principle for finite disordered systems and extending the periodic result to the disorder regime. Moreover, the existence of the Aizenman-Molchanov mobility gap is proved by the geometric decoupling method introduced by Aizenman et al. \cite{aizenman1993localization_elementary_derivation,aizenman1994localization,aizenman1998localization,aizenman2001finite_volime} under a rational assumption on the distribution of the random potential (Theorem \ref{thm_anderson_localization}). For completeness, we check all assumptions for the Qi–Wu–Zhang (QWZ) model in Appendix \ref{appendixA}, a prototypical Hamiltonian for (quantum) anomalous Hall physics. We also check in Appendix \ref{appendixB} that the bulk medium is topologically nontrivial in the presence of weak disorder. Our results in this paper provide a rigorous foundation for the BEC principle for finite disordered systems and support emerging theories of two-dimensional topological alloy-type materials.

\section{Set-up and Main Results}

We consider a tight-binding periodic Hamiltonian $\mathcal{H}_0$ acting on $\mathcal{X}=\ell^2(\mathbb{Z}^2)\times \mathbb{C}^d$, where $\mathbb{C}^d$ represents internal degree of freedom (DOF) such as the sublattices and spins. Denote $\mathcal{H}_0(\bm{n},\bm{m})\in \mathbb{C}^{d\times d}$ by its matrix elements. The basic properties of $\mathcal{H}_0$ are summarized as follows.
\begin{assumption} \label{asmp_Hamiltonian_deter_part}
i) (Periodicity) $\mathcal{H}_0(\bm{n}+\bm{e}_i,\bm{m}+\bm{e}_i)=\mathcal{H}_0(\bm{n},\bm{m})$;

ii) (Short-rangeness) $\mathcal{H}_0(\bm{n},\bm{m})=0$ if $\|\bm{n}-\bm{m}\|_{1}\geq 2$.
\end{assumption}

Throughout this paper, we will use the $\ell^1$ norm for $\mathbb{Z}^2$ vectors and abbreviate the notion as $|\bm{n}|:=\|n\|_{1}$. The norm notation without subscripts is preserved for (Frobenius) norm for $n\times n$ matrices, i.e., $\|A\|:=\|A\|_{F}$. The norm of other linear spaces always comes with specified subscripts.

\begin{remark}
Note that by the arbitrariness of the internal DOF $d\geq 1$, all periodic Hamiltonians with finite range are covered by Assumption \ref{asmp_Hamiltonian_deter_part}. We also remark that the assumption of finite range can be relaxed: with more technical effort, the main result of this paper is expected to hold for general Hamiltonians with exponentially decaying kernels; this extension is left for the interested reader.
\end{remark}

Next, we introduce the random (disordered) potential. Its properties are summarized as follows.
\begin{assumption}[Random potential]
\label{asmp_random_potential}
Let $V=\text{diag}(V_1,\cdots ,V_d)\in \mathbb{R}^{d\times d}$ be a diagonal matrix, and $\{\omega_{\bm{n}}^{(i)}\}_{\bm{n}\in\mathbb{Z}^{2},1\leq i\leq d}$ be a sequence of i.i.d. random variables with a common bounded density $\mu\in L^{\infty}(\mathbb{R})$ of compact support. Then the random potential operator is defined as
\begin{equation*}
V_{\omega}\psi\in \mathcal{B}(\mathcal{X}),\quad (V_{\omega}\psi)(\bm{n}):=\text{diag}(\omega_{\bm{n}}^{(1)}V_1,\cdots ,\omega_{\bm{n}}^{(d)}V_d)\cdot \psi(\bm{n}).
\end{equation*}
\end{assumption}
The random operator $V_{\omega}$ is covariant with respect to the spatial translation; see Section \ref{sec_covariant_tpuv} for more details. Assumptions \ref{asmp_Hamiltonian_deter_part} and \ref{asmp_random_potential} will be in place throughout the paper, so we do not repeat them. The main object of study is the transport property of the following random Hamiltonian:
\begin{equation*}
\mathcal{H}_{\omega}:=\mathcal{H}+V_{\omega}\in \mathcal{B}(\mathcal{X}).
\end{equation*}
The random variable $\omega$ belongs to the product space $\Omega=\mathbb{R}^{\mathbb{Z}^2 d}$ equipped with the probability measure $\mathbb{P}$, which is generated by the pre-measure induced by $\mu$ on the Borel cylinder sets in $\Omega$. Due to randomness, the spectrum of $\mathcal{H}_{\omega}$ is expected to exhibit Anderson localization. That is, within a given range of spectral variables, the spectrum of $\mathcal{H}_{\omega}$ is a pure point spectrum with exponentially localized eigenfunctions; see the monographs \cite{kirsch2007invitation,stollmann2001caught,stolz2011introduction} for an overview of Anderson localization. An elegant description of such a localization regime is based on the fractional momentum of the Green function $G_\omega$ of $\mathcal{H}_{\omega}$ introduced by Aizenman and Molchanov \cite{aizenman1993localization_elementary_derivation}, which is supposed (and verified later) as follows.
\begin{hypothesis}[Aizenman–Molchanov mobility gap] \label{hypo_anderson_localization}
An open interval $\mathcal{I}\subset \mathbb{R}$ is an Aizenman–Molchanov mobility gap if it satisfies the following condition: for any $s\in (0,1)$, there exist $\alpha,C>0$ such that
\begin{equation*}
\mathbb{E}(\|G_\omega(\bm{n},\bm{m};\lambda+i\epsilon)\|^s)\leq Ce^{-\alpha|\bm{n}-\bm{m}|}
\end{equation*}
for all $\lambda\in \mathcal{I}$ and $\epsilon>0$. Here, $G_\omega(\bm{n},\bm{m};z)=\langle\mathbbm{1}_{\{\bm{n}\}},(\mathcal{H}_{\omega}-z)^{-1}\mathbbm{1}_{\{\bm{m}\}}\rangle_{\ell^2(\mathbb{Z}^2)}$ denotes the Green function of $\mathcal{H}_{\omega}$.
\end{hypothesis}
As discussed in \cite[Remark 1]{graf2005equality} (see also \cite{aizenman1993localization_elementary_derivation,aizenman1998localization,graf2005equality,stolz2011introduction}), Hypothesis \ref{hypo_anderson_localization} implies that the spectral projection of $\mathcal{H}_{\omega}$ with end point lying within the mobility gap (ground state projection) is exponentially localized.
\begin{proposition} \label{prop_localization_ground_state}
There exists $C>0$ such that, for any $\bm{n},\bm{m}\in\mathbb{Z}^2$,
\begin{equation}
\label{eq_expectation_all_B1_function}
\mathbb{E}\big(\sup_{f\in B_1(\mathcal{I})}\|f(\mathcal{H}_{\omega})(\bm{n},\bm{m})\|  \big)\leq Ce^{-\alpha|\bm{n}-\bm{m}|}.
\end{equation}
Here, $B_1(\mathcal{I})$ denotes the set of Borel measurable functions $f$ that are constant outside $\mathcal{I}$ with $|f(x)|\leq 1$ for every $x\in\mathbb{R}$. Consequently, there exists $\nu>0$ such that for $\mathbb{P}-$a.s. $\omega$
\begin{equation*}
\sup_{f\in B_1(\mathcal{I})} \sum_{\bm{n},\bm{m}\in\mathbb{Z}^2}  \|f(\mathcal{H}_{\omega})(\bm{n},\bm{m})\|(1+|n|^{-\nu})e^{\alpha |\bm{m}-\bm{n}|}\leq C <\infty,
\end{equation*}
where $C=C(\omega)$ may depend on the realization. In particular, setting $P_{\omega,\lambda}:=\mathbbm{1}_{(-\infty,\lambda)}(\mathcal{H}_{\omega})$, it holds almost surely that
\begin{equation*}
\sum_{\bm{n},\bm{m}\in\mathbb{Z}^2}\|P_{\omega,\lambda}(\bm{n},\bm{m})\|(1+|n|^{-\nu})e^{\alpha |\bm{m}-\bm{n}|}\leq C <\infty,
\end{equation*}
for any $\lambda\in \mathcal{I}$.
\end{proposition}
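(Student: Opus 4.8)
The plan is to establish the eigenfunction correlator bound \eqref{eq_expectation_all_B1_function} first --- this is the only substantive point --- and then to derive the two almost sure statements by a routine first-moment estimate combined with the Borel--Cantelli lemma. For \eqref{eq_expectation_all_B1_function} I would reduce to the eigenfunction correlator and then invoke the standard fractional-moment machinery. The diagonal term $\bm{n}=\bm{m}$ is harmless, since $\|f(\mathcal{H}_\omega)\|_{\mathcal{B}(\mathcal{X})}\le 1$ for $f\in B_1(\mathcal{I})$; for $\bm{n}\neq\bm{m}$, any $f\in B_1(\mathcal{I})$ can be written as $f=c\,\mathbbm{1}_{\mathbb{R}}+g$ with $|c|\le 1$ and $g$ Borel, $|g|\le 2$, $\operatorname{supp} g\subset\overline{\mathcal{I}}$, so that $f(\mathcal{H}_\omega)(\bm{n},\bm{m})=g(\mathcal{H}_\omega)(\bm{n},\bm{m})$ and it suffices to bound $\mathbb{E}\big(Q_\omega(\bm{n},\bm{m};\mathcal{I})\big)$, where $Q_\omega(\bm{n},\bm{m};\mathcal{I}):=\sup\{\,\|g(\mathcal{H}_\omega)(\bm{n},\bm{m})\| : g\ \text{Borel},\ \operatorname{supp} g\subset\overline{\mathcal{I}},\ |g|\le 1\,\}$ is the off-diagonal eigenfunction correlator. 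I would control $\mathbb{E}(Q_\omega)$ by passing to finite boxes $\Lambda\uparrow\mathbb{Z}^2$: the box restriction $\mathcal{H}_\omega^\Lambda$ has finitely many eigenvalues in $\mathcal{I}$ and $Q_\omega^\Lambda(\bm{n},\bm{m};\mathcal{I})=\sum_{E_j^\Lambda\in\mathcal{I}}\|P_{E_j^\Lambda}(\bm{n},\bm{m})\|$; the Aizenman--Molchanov decoupling/spectral-averaging argument in the on-site couplings $\{\omega_{\bm{k}}^{(i)}\}$ (in the form of \cite{aizenman1994localization,aizenman1998localization,aizenman2001finite_volime,graf2005equality}), together with the a priori fractional-moment bound $\mathbb{E}(\|G_\omega^\Lambda(\bm{n},\bm{m};E+i0)\|^s)\le C_s<\infty$ at real energies (which holds since $\mu\in L^\infty$), bounds $\mathbb{E}(Q_\omega^\Lambda(\bm{n},\bm{m};\mathcal{I}))$ by fractional moments of the finite-volume Green function integrated over $\mathcal{I}$, uniformly in $\Lambda$; since the fractional-moment decay of Hypothesis \ref{hypo_anderson_localization} is stable under passing to box boundary conditions uniformly in $\Lambda$, this yields $\mathbb{E}(Q_\omega^\Lambda(\bm{n},\bm{m};\mathcal{I}))\le C e^{-\alpha|\bm{n}-\bm{m}|}$, and letting $\Lambda\uparrow\mathbb{Z}^2$ (Fatou together with strong resolvent convergence of $\mathcal{H}_\omega^\Lambda$ to $\mathcal{H}_\omega$) gives \eqref{eq_expectation_all_B1_function}, with a decay rate $\alpha$ possibly smaller than the one in Hypothesis \ref{hypo_anderson_localization}, still denoted $\alpha$.

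For the second step, fix $\alpha'<\alpha$ and put $X_{\bm{n}}:=\sum_{\bm{m}\in\mathbb{Z}^2}\sup_{f\in B_1(\mathcal{I})}\|f(\mathcal{H}_\omega)(\bm{n},\bm{m})\|\,e^{\alpha'|\bm{m}-\bm{n}|}$. By Tonelli and \eqref{eq_expectation_all_B1_function}, $\mathbb{E}(X_{\bm{n}})\le C\sum_{\bm{m}\in\mathbb{Z}^2}e^{-(\alpha-\alpha')|\bm{n}-\bm{m}|}\le C'$ uniformly in $\bm{n}$, so Markov's inequality gives $\mathbb{P}\big(X_{\bm{n}}>(1+|\bm{n}|)^{\nu}\big)\le C'(1+|\bm{n}|)^{-\nu}$; taking $\nu>2$, the Borel--Cantelli lemma yields, $\mathbb{P}$-almost surely, a finite $C(\omega)$ with $X_{\bm{n}}\le C(\omega)(1+|\bm{n}|)^{\nu}$ for every $\bm{n}\in\mathbb{Z}^2$. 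Multiplying by the summable weight $(1+|\bm{n}|)^{-\nu-3}$, summing over $\bm{n}$, and using that the supremum over $f$ of the double sum is at most the double sum of the suprema, one obtains $\sup_{f\in B_1(\mathcal{I})}\sum_{\bm{n},\bm{m}}\|f(\mathcal{H}_\omega)(\bm{n},\bm{m})\|\,(1+|\bm{n}|)^{-\nu-3}e^{\alpha'|\bm{m}-\bm{n}|}\le C(\omega)\sum_{\bm{n}}(1+|\bm{n}|)^{-3}<\infty$ almost surely, which is the stated estimate (with $\nu$ chosen large enough and the exponential rate taken slightly below that of \eqref{eq_expectation_all_B1_function}). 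The last claim is the special case $f=\mathbbm{1}_{(-\infty,\lambda)}$, $\lambda\in\mathcal{I}$: this function is bounded by $1$ and constant on each component of $\mathbb{R}\setminus\mathcal{I}$, hence lies in $B_1(\mathcal{I})$ (equivalently, $\mathbbm{1}_{(-\infty,\lambda)}$ differs from an element of $B_1(\mathcal{I})$ only by the spectral projection of $\mathcal{H}_\omega$ onto energies below $\mathcal{I}$, whose kernel decays exponentially by a Combes--Thomas estimate, or is absent).

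The main obstacle lies entirely in the first step, namely the uniform-in-$\Lambda$ passage from the fractional-moment bound of Hypothesis \ref{hypo_anderson_localization} to the exponential decay of the averaged eigenfunction correlator --- i.e. the decoupling/spectral-averaging estimate for $\mathbb{E}(Q_\omega^\Lambda)$ and the control of the thermodynamic limit. This is, however, a standard part of the Aizenman--Molchanov theory (cf. \cite{aizenman1993localization_elementary_derivation,aizenman1998localization,graf2005equality}, \cite[Proposition 5.3 and the surrounding discussion]{stolz2011introduction}, and Proposition \ref{prop_deter_pp_spectrum}); accordingly, in the write-up I would state \eqref{eq_expectation_all_B1_function} with reference to these works and devote the detailed argument only to the reductions above and to the elementary probabilistic upgrade of the second step.
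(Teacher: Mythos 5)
The paper does not prove this proposition in detail; it cites \cite[Remark~1]{graf2005equality} and the Aizenman--Molchanov/Aizenman--Graf literature. Your proposal follows the same route --- eigenfunction correlator, finite-volume decoupling, thermodynamic limit for \eqref{eq_expectation_all_B1_function}, then a first-moment plus Borel--Cantelli argument for the almost-sure bounds --- and the second step, which the paper leaves implicit, is carried out cleanly, including the necessary strict decrease of the exponential rate relative to the one in \eqref{eq_expectation_all_B1_function} (a point the paper glosses over by re-using the letter $\alpha$).

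There is, however, a real flaw in your reduction and in the parenthetical at the end. The function $f=\mathbbm{1}_{(-\infty,\lambda)}$ with $\lambda\in\mathcal{I}$ takes \emph{different} constants on the two components of $\mathbb{R}\setminus\mathcal{I}$, so it cannot be written as $c\,\mathbbm{1}_{\mathbb{R}}+g$ with $g$ supported in $\overline{\mathcal{I}}$; the class $B_1(\mathcal{I})$ in the sense of \cite{graf2005equality} --- which is what the paper must intend, since it applies \eqref{eq_expectation_all_B1_function} to $P_{\omega,\lambda}$ --- consists of Borel functions bounded by $1$ that are constant on \emph{each} unbounded component of $\mathbb{R}\setminus\mathcal{I}$. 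Your fallback, namely that the discrepancy $\mathbbm{1}_{(-\infty,\inf\mathcal{I}]}(\mathcal{H}_\omega)$ has exponentially decaying kernel by a Combes--Thomas estimate, is incorrect: $\mathcal{I}$ is a mobility gap, not a spectral gap, so $\sigma(\mathcal{H}_\omega)$ generically fills $\mathcal{I}$ and $\inf\mathcal{I}$ is not in the resolvent set; Combes--Thomas then says nothing about this spectral projection, and combining Helffer--Sj\"ostrand with the off-axis Combes--Thomas bound on the resolvent yields only super-polynomial, not exponential, decay. The exponential decay of the Fermi projection kernel at an energy inside a mobility gap is itself a nontrivial output of the Aizenman--Graf theory, obtained by a contour argument that interpolates between the fractional-moment bound near the real axis and Combes--Thomas away from it. Since you already cite the correct references, the remedy is simply to drop the $c\,\mathbbm{1}_{\mathbb{R}}+g$ reduction and the Combes--Thomas patch and let the cited machinery carry the full claim \eqref{eq_expectation_all_B1_function} for the larger class $B_1(\mathcal{I})$; the Borel--Cantelli step is then unaffected.
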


Importantly, Proposition \ref{prop_localization_ground_state} implies that the Hall conductance is well-defined and quantized within the mobility gap. The proof can be found, for example, in \cite{Avron1994charge,graf2005equality,Schulz-Baldes01homotopy}.
\begin{proposition} \label{prop_quantization_index}
Define
\begin{equation} \label{eq_bulk_index}
\mathcal{E}_{bulk,\omega,\lambda}:=-i\cdot\mathcal{T}\big(P_{\omega,\lambda}\big[[P_{\omega,\lambda},x_1],[P_{\omega,\lambda},x_2]\big]\big),
\end{equation}
where $\mathcal{T}$ denotes the trace per unit of volume (TPUV) defined as
\begin{equation*}
\mathcal{T}(A):=\lim_{L\to\infty}\frac{1}{|\Lambda_L|}\text{Tr}_{\Lambda_L}(A),\quad \Lambda_L=[-L,L]\times [-L,L],
\end{equation*}
whenever the limit exists. Then for $\mathbb{P}-$a.e. $\omega$,
\begin{equation*}
\mathcal{E}_{bulk,\omega,\lambda}=\mathbb{E}(\mathcal{E}_{bulk,\omega,\lambda})\in\mathbb{Z}
\end{equation*}
and is constant in $\lambda\in\mathcal{I}$.
\end{proposition}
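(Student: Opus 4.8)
The plan is to prove the three assertions — that $\mathcal{E}_{bulk,\omega,\lambda}$ is $\mathbb{P}$-a.s.\ non-random, that it is an integer, and that it is $\lambda$-independent on $\mathcal{I}$ — in that order, in each case reducing matters to the localization bounds of Proposition~\ref{prop_localization_ground_state} together with the standard non-commutative Chern-number calculus. For the first point, recall (cf.\ Section~\ref{sec_covariant_tpuv}) that there is an ergodic $\mathbb{Z}^2$-action $\{T_{\bm a}\}$ on $(\Omega,\mathbb{P})$ — the shift of the i.i.d.\ potentials $\omega_{\bm n}^{(i)}$ — together with the ordinary lattice shifts $\{U_{\bm a}\}$ on $\mathcal{X}$, and by the periodicity of $\mathcal{H}_0$ (Assumption~\ref{asmp_Hamiltonian_deter_part}~i)) one has $U_{\bm a}\mathcal{H}_\omega U_{\bm a}^{*}=\mathcal{H}_{T_{\bm a}\omega}$, hence $U_{\bm a}P_{\omega,\lambda}U_{\bm a}^{*}=P_{T_{\bm a}\omega,\lambda}$ by functional calculus. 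Since $U_{\bm a}x_i U_{\bm a}^{*}=x_i-a_i$, one gets $U_{\bm a}[P_{\omega,\lambda},x_i]U_{\bm a}^{*}=[P_{T_{\bm a}\omega,\lambda},x_i]$, so the operator $A_\omega:=P_{\omega,\lambda}\big[[P_{\omega,\lambda},x_1],[P_{\omega,\lambda},x_2]\big]$ is covariant: $U_{\bm a}A_\omega U_{\bm a}^{*}=A_{T_{\bm a}\omega}$, i.e.\ $\text{tr}\,A_\omega(\bm n,\bm n)=g(T_{\bm n}\omega)$ with $g(\omega):=\text{tr}\,A_\omega(\bm 0,\bm 0)$. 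The diagonal kernel $A_\omega(\bm 0,\bm 0)$ is an absolutely convergent sum of triple products of $P_{\omega,\lambda}$-kernels weighted by polynomials in the displacements, so the weighted exponential-localization estimate of Proposition~\ref{prop_localization_ground_state} shows $g\in L^1(\Omega,\mathbb{P})$; writing $\text{Tr}_{\Lambda_L}(A_\omega)=\sum_{\bm n\in\Lambda_L}g(T_{\bm n}\omega)$ and invoking the $\mathbb{Z}^2$ Birkhoff ergodic theorem yields, $\mathbb{P}$-a.s.,
\[
\mathcal{T}(A_\omega)=\lim_{L\to\infty}\frac{1}{|\Lambda_L|}\text{Tr}_{\Lambda_L}(A_\omega)=\mathbb{E}\big(g\big)=\mathbb{E}\big(\mathcal{T}(A_\omega)\big),
\]
so $\mathcal{E}_{bulk,\omega,\lambda}=\mathbb{E}(\mathcal{E}_{bulk,\omega,\lambda})$ is non-random.

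For integrality, I would invoke the non-commutative area/index identity underlying the quantization of the Hall conductance (\cite{Avron1994charge,graf2005equality,Schulz-Baldes01homotopy}): choosing a base point $\bm c\in\mathbb{R}^2\setminus\mathbb{Z}^2$, setting $\Theta(\bm x):=\arg\big((x_1-c_1)+i(x_2-c_2)\big)$ and $U:=e^{i\Theta}$, one has the identity
\[
-i\,\mathcal{T}\big(P_{\omega,\lambda}\big[[P_{\omega,\lambda},x_1],[P_{\omega,\lambda},x_2]\big]\big)=\mathrm{Ind}\big(P_{\omega,\lambda}UP_{\omega,\lambda}\big),
\]
the right-hand side being the Fredholm index of $P_{\omega,\lambda}UP_{\omega,\lambda}$ regarded as an operator on $\mathrm{Ran}\,P_{\omega,\lambda}$. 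The hypothesis that makes both sides meaningful and the identity valid $\mathbb{P}$-a.s.\ is precisely the strong (super-polynomially weighted exponential) localization of $P_{\omega,\lambda}$ supplied by Proposition~\ref{prop_localization_ground_state}, which forces $P_{\omega,\lambda}[U,P_{\omega,\lambda}]$ to be trace class and lets the Connes cocycle computation go through. Since an index is an integer, $\mathcal{E}_{bulk,\omega,\lambda}\in\mathbb{Z}$ for a.e.\ $\omega$; combined with the first step, it is a non-random integer.

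For $\lambda$-independence, fix $\lambda<\lambda'$ in $\mathcal{I}$. The difference $P_{\omega,\lambda'}-P_{\omega,\lambda}=\mathbbm{1}_{[\lambda,\lambda')}(\mathcal{H}_\omega)$ is, by Proposition~\ref{prop_deter_pp_spectrum}, a spectral projection onto part of the pure point spectrum in $\mathcal{I}$, i.e.\ a sum of exponentially localized eigenprojections. Because the bound of Proposition~\ref{prop_localization_ground_state} is \emph{uniform} over $\lambda\in\mathcal{I}$ — indeed over all $f\in B_1(\mathcal{I})$ — the map $\lambda\mapsto P_{\omega,\lambda}UP_{\omega,\lambda}$ varies norm-continuously within the Fredholm operators on the (continuously varying) ranges, so $\lambda\mapsto \mathrm{Ind}(P_{\omega,\lambda}UP_{\omega,\lambda})=\mathcal{E}_{bulk,\omega,\lambda}$ is continuous on $\mathcal{I}$; being integer-valued on an interval, it is constant.

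The substance of the whole argument is carried by Proposition~\ref{prop_localization_ground_state}, and the only place that genuinely requires care — the main obstacle — is verifying that our covariant pair $(\mathcal{H}_\omega,P_{\omega,\lambda})$ satisfies the precise hypotheses of the index framework of \cite{Avron1994charge,graf2005equality,Schulz-Baldes01homotopy}: namely that $A_\omega$ has an integrable trace density and that $P_{\omega,\lambda}[U,P_{\omega,\lambda}]$ is trace class uniformly in $\lambda\in\mathcal{I}$. Once these estimates are transcribed from Proposition~\ref{prop_localization_ground_state}, together with the $\mathbb{Z}^2$ ergodic theorem for the deterministic-ness, no genuinely new difficulty arises.
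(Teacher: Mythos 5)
The paper does not actually prove Proposition~\ref{prop_quantization_index}; it cites \cite{Avron1994charge,graf2005equality,Schulz-Baldes01homotopy}. Your reconstruction follows the same general strategy as those sources — covariance plus the Birkhoff ergodic theorem for non-randomness, and the Fredholm index of the flux-insertion operator $P_{\omega,\lambda}UP_{\omega,\lambda}$ for integrality — and the first two parts of your sketch are sound modulo the standard trace-class bookkeeping you correctly identify.

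However, your argument for $\lambda$-independence contains a genuine gap. You assert that, because the localization bound of Proposition~\ref{prop_localization_ground_state} is uniform over $f\in B_1(\mathcal{I})$, the map $\lambda\mapsto P_{\omega,\lambda}UP_{\omega,\lambda}$ is norm-continuous, and then appeal to the stability of the Fredholm index. This is false: in a mobility gap the almost-sure spectrum $\Sigma$ of Proposition~\ref{prop_deter_pp_spectrum} is pure point and typically dense in $\mathcal{I}$, so for any $\lambda<\lambda'$ in $\mathcal{I}$ the difference $P_{\omega,\lambda'}-P_{\omega,\lambda}=\mathbbm{1}_{[\lambda,\lambda')}(\mathcal{H}_\omega)$ is a nonzero orthogonal projection, hence has operator norm $1$ no matter how small $\lambda'-\lambda$ is. Uniform decay of the kernels does not control the operator norm of this difference, and it also changes the Hilbert space $\mathrm{Ran}\,P_{\omega,\lambda}$ on which the Fredholm operator acts. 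So the index is not manifestly a continuous function of $\lambda$ in the topology you invoke, and ``integer-valued continuous on an interval, hence constant'' does not follow. The correct route in the cited literature is more delicate: for instance, \cite{graf2005equality} shows that the $\lambda$-derivative of the averaged bulk conductance vanishes by exploiting the Wegner estimate together with localization, while the index-theoretic route (cf.\ Aizenman--Graf) splits $\mathrm{Ind}(P_{\omega,\lambda'}UP_{\omega,\lambda'})=\mathrm{Ind}(P_{\omega,\lambda}UP_{\omega,\lambda})+\mathrm{Ind}(QUQ)$ with $Q=\mathbbm{1}_{[\lambda,\lambda')}(\mathcal{H}_\omega)$ using the compactness of the off-diagonal blocks, and then shows separately that $\mathrm{Ind}(QUQ)=0$ because $Q$ consists of exponentially localized states in the mobility gap. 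Either of these requires additional arguments beyond what you have written; the norm-continuity shortcut does not work.
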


\begin{figure}
\begin{center}
\begin{tikzpicture}[scale=0.8]
\draw[dashed] ({-2},{2})--({2},{2});
\draw[dashed] ({-2},{-2})--({2},{-2});
\draw[dashed] ({2},{-2})--({2},{2});
\draw[dashed] ({-2},{-2})--({-2},{2});
\draw[fill=black,opacity=1] ({1/2},{1/2}) ellipse(0.1 and 0.1);
\draw[fill=black,opacity=1] ({1+1/2},{1/2}) ellipse(0.1 and 0.1);
\draw[fill=black,opacity=1] ({-1+1/2},{1/2}) ellipse(0.1 and 0.1);
\draw[fill=black,opacity=1] ({-2+1/2},{1/2}) ellipse(0.1 and 0.1);

\draw[fill=red,opacity=0.5] ({-3+1/2},{1/2}) ellipse(0.1 and 0.1);
\draw[fill=red,opacity=0.5] ({-4+1/2},{1/2}) ellipse(0.1 and 0.1);
\draw[fill=red,opacity=0.5] ({-3+1/2},{1/2+1}) ellipse(0.1 and 0.1);
\draw[fill=red,opacity=0.5] ({-4+1/2},{1/2+1}) ellipse(0.1 and 0.1);
\draw[fill=red,opacity=0.5] ({-3+1/2},{1/2+2}) ellipse(0.1 and 0.1);
\draw[fill=red,opacity=0.5] ({-4+1/2},{1/2+2}) ellipse(0.1 and 0.1);
\draw[fill=red,opacity=0.5] ({-3+1/2},{1/2-1}) ellipse(0.1 and 0.1);
\draw[fill=red,opacity=0.5] ({-4+1/2},{1/2-1}) ellipse(0.1 and 0.1);
\draw[fill=red,opacity=0.5] ({-3+1/2},{1/2-2}) ellipse(0.1 and 0.1);
\draw[fill=red,opacity=0.5] ({-4+1/2},{1/2-2}) ellipse(0.1 and 0.1);
\draw[fill=red,opacity=0.5] ({-3+1/2},{1/2-3}) ellipse(0.1 and 0.1);
\draw[fill=red,opacity=0.5] ({-4+1/2},{1/2-3}) ellipse(0.1 and 0.1);

\draw[fill=red,opacity=0.5] ({2+1/2},{1/2}) ellipse(0.1 and 0.1);
\draw[fill=red,opacity=0.5] ({3+1/2},{1/2}) ellipse(0.1 and 0.1);
\draw[fill=red,opacity=0.5] ({2+1/2},{1/2+1}) ellipse(0.1 and 0.1);
\draw[fill=red,opacity=0.5] ({3+1/2},{1/2+1}) ellipse(0.1 and 0.1);
\draw[fill=red,opacity=0.5] ({2+1/2},{1/2+2}) ellipse(0.1 and 0.1);
\draw[fill=red,opacity=0.5] ({3+1/2},{1/2+2}) ellipse(0.1 and 0.1);
\draw[fill=red,opacity=0.5] ({2+1/2},{1/2-1}) ellipse(0.1 and 0.1);
\draw[fill=red,opacity=0.5] ({3+1/2},{1/2-1}) ellipse(0.1 and 0.1);
\draw[fill=red,opacity=0.5] ({2+1/2},{1/2-2}) ellipse(0.1 and 0.1);
\draw[fill=red,opacity=0.5] ({3+1/2},{1/2-2}) ellipse(0.1 and 0.1);
\draw[fill=red,opacity=0.5] ({2+1/2},{1/2-3}) ellipse(0.1 and 0.1);
\draw[fill=red,opacity=0.5] ({3+1/2},{1/2-3}) ellipse(0.1 and 0.1);

\draw[fill=red,opacity=0.5] ({1/2},{1/2+2}) ellipse(0.1 and 0.1);
\draw[fill=red,opacity=0.5] ({1/2+1},{1/2+2}) ellipse(0.1 and 0.1);
\draw[fill=red,opacity=0.5] ({1/2-1},{1/2+2}) ellipse(0.1 and 0.1);
\draw[fill=red,opacity=0.5] ({1/2-2},{1/2+2}) ellipse(0.1 and 0.1);

\draw[fill=red,opacity=0.5] ({1/2},{1/2-3}) ellipse(0.1 and 0.1);
\draw[fill=red,opacity=0.5] ({1+1/2},{1/2-3}) ellipse(0.1 and 0.1);
\draw[fill=red,opacity=0.5] ({-1+1/2},{1/2-3}) ellipse(0.1 and 0.1);
\draw[fill=red,opacity=0.5] ({-2+1/2},{1/2-3}) ellipse(0.1 and 0.1);

\draw[fill=black,opacity=1] ({1/2},{1/2+1}) ellipse(0.1 and 0.1);
\draw[fill=black,opacity=1] ({1+1/2},{1/2+1}) ellipse(0.1 and 0.1);
\draw[fill=black,opacity=1] ({-1+1/2},{1/2+1}) ellipse(0.1 and 0.1);
\draw[fill=black,opacity=1] ({-2+1/2},{1/2+1}) ellipse(0.1 and 0.1);
\draw[fill=black,opacity=1] ({1/2},{1/2-1}) ellipse(0.1 and 0.1);
\draw[fill=black,opacity=1] ({1+1/2},{1/2-1}) ellipse(0.1 and 0.1);
\draw[fill=black,opacity=1] ({-1+1/2},{1/2-1}) ellipse(0.1 and 0.1);
\draw[fill=black,opacity=1] ({-2+1/2},{1/2-1}) ellipse(0.1 and 0.1);
\draw[fill=black,opacity=1] ({1/2},{1/2-2}) ellipse(0.1 and 0.1);
\draw[fill=black,opacity=1] ({1+1/2},{1/2-2}) ellipse(0.1 and 0.1);
\draw[fill=black,opacity=1] ({-1+1/2},{1/2-2}) ellipse(0.1 and 0.1);
\draw[fill=black,opacity=1] ({-2+1/2},{1/2-2}) ellipse(0.1 and 0.1);

\node[left,scale=1.2] at ({-3.5},{1/2}) {$\cdots$};
\node[left,scale=1.2] at ({-3.5},{1/2+1}) {$\cdots$};
\node[left,scale=1.2] at ({-3.5},{1/2-1}) {$\cdots$};
\node[left,scale=1.2] at ({-3.5},{1/2-2}) {$\cdots$};
\node[left,scale=1.2] at ({-3.5},{1/2-3}) {$\cdots$};
\node[left,scale=1.2] at ({-3.5},{1/2+2}) {$\cdots$};
\node[left,scale=1.2] at ({5},{1/2+1}) {$\cdots$};
\node[left,scale=1.2] at ({5},{1/2-1}) {$\cdots$};
\node[left,scale=1.2] at ({5},{1/2}) {$\cdots$};
\node[left,scale=1.2] at ({5},{1/2-2}) {$\cdots$};
\node[left,scale=1.2] at ({5},{1/2-3}) {$\cdots$};
\node[left,scale=1.2] at ({5},{1/2+2}) {$\cdots$};

\node[above,scale=1.2] at ({1/2},{2.8}) {$\vdots$};
\node[above,scale=1.2] at ({1/2+1},{2.8}) {$\vdots$};
\node[above,scale=1.2] at ({1/2-1},{2.8}) {$\vdots$};
\node[above,scale=1.2] at ({1/2-2},{2.8}) {$\vdots$};
\node[above,scale=1.2] at ({1/2+2},{2.8}) {$\vdots$};
\node[above,scale=1.2] at ({1/2+3},{2.8}) {$\vdots$};
\node[above,scale=1.2] at ({1/2-3},{2.8}) {$\vdots$};
\node[above,scale=1.2] at ({1/2-4},{2.8}) {$\vdots$};

\node[below,scale=1.2] at ({1/2},{-2.5}) {$\vdots$};
\node[below,scale=1.2] at ({1/2+1},{-2.5}) {$\vdots$};
\node[below,scale=1.2] at ({1/2+2},{-2.5}) {$\vdots$};
\node[below,scale=1.2] at ({1/2+3},{-2.5}) {$\vdots$};
\node[below,scale=1.2] at ({1/2-1},{-2.5}) {$\vdots$};
\node[below,scale=1.2] at ({1/2-2},{-2.5}) {$\vdots$};
\node[below,scale=1.2] at ({1/2-3},{-2.5}) {$\vdots$};
\node[below,scale=1.2] at ({1/2-4},{-2.5}) {$\vdots$};

\end{tikzpicture}
\caption{A finite sample $\Lambda_L$ (dashed square with black sites).}
\label{fig_finite_sample}
\end{center}
\end{figure}

By the independence shown above, we will omit the subscript $\lambda$ in $\mathcal{E}_{bulk,\omega,\lambda}$ whenever no confusion arises. As being well-known, the quantized index $\mathcal{E}_{bulk,\omega}$ is a topological number, which recovers the first Chern number of the eigenbundle $P_{\omega,\lambda}$ when $\mathcal{H}_{\omega}$ is non-random and periodic\footnote{e.g. $\omega$ degenerates to the Dirac distribution on a single point.} and $\mathcal{I}$ is a real spectral gap in the sense that $\mathcal{I}\cap \sigma(\mathcal{H}_{\omega})=\emptyset$; see \cite{Avron1994charge,Bellissard94non_commutative_QHE} and the references therein. In such cases, the BEC principle suggests that $\mathcal{I}_{bulk,\omega}\neq 0$ and the bulk medium is truncated into pieces (e.g., into two half-planes or finite subdomains), there necessarily exist special eigenmodes of the Hamiltonian that are localized near the edge of the medium. Mathematically, this is formulated and proved as follows in the case of finite subdomains \cite{qiu2025bec_finite}.\footnote{Here, we only display the result for the special cases where the subdomains are finite rectangles, while the original result \cite{qiu2025bec_finite} holds for arbitrary single-connected domains with regular boundaries.} Define the restricted Hamiltonian to the finite box $\Lambda_L$ as
\begin{equation} \label{eq_box_Hamiltonian_simple}
\mathcal{H}_{\omega,L}^{sim}:=\iota^{*}_{\Lambda_L}\mathcal{H}_{\omega}\iota_{\Lambda_L}\in \mathcal{B}(\mathcal{X}_L),\quad \mathcal{X}_L:=\ell^2(\Lambda_L)\otimes \mathbb{C}^d,
\end{equation}
where $\iota_{\Lambda_L}:\mathcal{X}_L\to \mathcal{X}$ is the canonical inclusion operator and $\iota^{*}_{\Lambda_L}$ denotes its adjoint.\footnote{This way of defining Hamiltonians on finite domains is also referred to as imposing simple boundary conditions on the boundary $\partial \Lambda_L$ \cite{kirsch2007invitation}, as indicated by the superscript.} Next, we introduce the following edge index for the box Hamiltonian $\mathcal{H}_{\omega,L}^{sim}$:
\begin{equation} \label{eq_edge_index}
\mathcal{E}_{edge,\omega,\mathcal{I},\rho,L}:=\frac{i}{2|\Lambda_L|}\text{Tr}_{\Lambda_L}\big(\mathcal{M}_{\omega,L}\rho^{\prime}(\mathcal{H}_{\omega,L}^{sim})\big),
\end{equation}
where $\mathcal{M}_{\omega,L}$ is the orbital angular momentum operator defined as
\begin{equation*}
\mathcal{M}_{\omega,L}:=[\mathcal{H}_{\omega,L}^{sim},x_2]x_1-[\mathcal{H}_{\omega,L}^{sim},x_1]x_2,
\end{equation*}
and $\rho\in C^{\infty}$ satisfies $\rho(\lambda)=0$ for $\lambda>\sup \mathcal{I}$ and $\rho(\lambda)=1$ for $\lambda<\inf \mathcal{I}$.\footnote{i.e., $\rho^{\prime}$ is a mollified version of the Dirac function locating in the mobility gap $\mathcal{I}$.} The edge index \eqref{eq_edge_index} is well-defined as the trace of a matrix on the finite-dimensional space $\mathcal{X}_{L}$. Being sloppy in notation, the result of \cite{qiu2025bec_finite} can be stated as follows.
\begin{theorem}[BEC principle for finite systems \cite{qiu2025bec_finite}]
\label{thm_bec_non_random_finite}
When $\mathcal{H}_{\omega}$ is deterministic and periodic, and $\mathcal{I}\cap \sigma(\mathcal{H}_{\omega})=\emptyset$, it holds for all $\rho$ (satisfying the conditions following \eqref{eq_edge_index}) that
\begin{equation*}
\lim_{L\to\infty}\mathcal{E}_{edge,L}=\mathcal{E}_{bulk}=\mathcal{C}_1(P_{\omega,\lambda}),
\end{equation*}
where $\mathcal{C}_1(P_{\omega,\lambda})$ denotes the first Chern number associated with the eigenbundle $\bm{\kappa}\mapsto P_{\omega,\lambda}(\bm{\kappa})$ and the latter is obtained by applying the Floquet-Bloch transform to the periodic operator $P_{\omega,\lambda}$.
\end{theorem}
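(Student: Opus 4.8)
The plan for proving Theorem~\ref{thm_bec_non_random_finite} is to localise the finite-box trace in \eqref{eq_edge_index} to a thin collar of $\partial\Lambda_L$, to match the contribution of each of the four straight sides with the edge (Hall) conductance of an associated half-plane operator, and then to invoke the classical half-plane bulk--edge correspondence \cite{Hatsugai93chern,graf2005equality} together with the identification $\mathcal E_{bulk}=\mathcal C_1(P_{\omega,\lambda})$ of the Hall conductance with the first Chern number in the periodic gapped case \cite{Avron1994charge,Bellissard94non_commutative_QHE}. (In this deterministic setting ``$\mathbb P$--a.s.'' is vacuous.)

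First I would establish that the in-gap spectral data of $\mathcal H^{sim}_{\omega,L}$ live near the boundary: since $\mathcal I$ is a genuine gap for the bulk operator, a Combes--Thomas estimate applied to $\mathcal H_{\omega}$ yields $\|\chi_{\mathcal I}(\mathcal H^{sim}_{\omega,L})(\bm n,\bm m)\|\le Ce^{-\alpha\,\mathrm{dist}(\bm n,\partial\Lambda_L)}e^{-\alpha\,\mathrm{dist}(\bm m,\partial\Lambda_L)}$ with $C,\alpha$ independent of $L$; equivalently, every eigenfunction of $\mathcal H^{sim}_{\omega,L}$ with eigenvalue in $\mathcal I$ is exponentially localised near $\partial\Lambda_L$. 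Because $\rho'$ is supported in $\mathcal I$, the kernel of $\rho'(\mathcal H^{sim}_{\omega,L})$ obeys the same bound, so up to an error $O(L^{-N})$ for all $N$ the trace in \eqref{eq_edge_index} may be restricted to the collar $\Gamma_L=\{\bm n:\mathrm{dist}(\bm n,\partial\Lambda_L)\le C_0\log L\}$, which splits into four edge strips (one per side) and four corner boxes of $O(\log^2 L)$ sites each. The corner boxes carry $O(\log^2 L)$ sites on which $x_1,x_2$ are $O(L)$, hence contribute $O(L\log^2 L)=o(|\Lambda_L|)$ and drop out after the normalisation by $2|\Lambda_L|$.

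Next I would compute the contribution of a single side, say the bottom strip, where $x_2=-L+O(\log L)$ is essentially constant while $x_1$ ranges over $[-L,L]$. Using short-rangeness to commute the bounded operators $[\mathcal H^{sim}_{\omega,L},x_j]$ past the weights, and the previous step to replace $\mathcal H^{sim}_{\omega,L}$ on the strip by the half-plane operator $\mathcal H^{+}_{L}:=\iota^{*}_{\{x_2\ge -L\}}\mathcal H_{\omega}\,\iota_{\{x_2\ge -L\}}$ up to exponentially small errors, one finds that the bottom part of the trace equals $iL\cdot\mathrm{Tr}\big([\mathcal H^{+}_{L},x_1]\,\rho'(\mathcal H^{+}_{L})\,\Pi_{\mathrm{bottom}}\big)+o(L^2)$, where $\Pi_{\mathrm{bottom}}$ projects onto the strip. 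The term built from $[\mathcal H^{sim}_{\omega,L},x_2]$, the current transverse to the bottom edge, contributes only $o(L^2)$: by translation invariance of $\mathcal H^{+}_L$ in $x_1$ the column trace $\mathrm{Tr}([\mathcal H^{+}_L,x_2]\rho'(\mathcal H^{+}_L)\Pi_{\{x_1=k\}})$ is $k$-independent (indeed zero, by conservation of current along the edge), so its weighting by the odd function $x_1$ cancels away from the corners. Since $\rho'(\mathcal H^{+}_L)$ is translation invariant in $x_1$, the surviving trace is $(2L+O(1))$ times the per-column edge conductance, which is purely imaginary; thus the bottom strip contributes $2L^2\sigma_{\mathrm e}+o(L^2)$ for a real number $\sigma_{\mathrm e}$. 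The other three strips are handled identically, and by the fixed chirality of the edge modes --- equivalently, because the four half-plane operators are carried into one another by the orientation-preserving rotations of $\Lambda_L$, which preserve $\mathcal C_1$ --- each contributes the same $2L^2\sigma_{\mathrm e}+o(L^2)$. Summing and dividing by $2|\Lambda_L|=2(2L+1)^2$ gives $\mathcal E_{edge,\omega,\mathcal I,\rho,L}\to\sigma_{\mathrm e}$.

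Finally, $\sigma_{\mathrm e}$ is by construction the edge Hall conductance of $\mathcal H^{+}_L$, which is independent of the admissible $\rho$ by the standard theory; by the half-plane bulk--edge correspondence \cite{Hatsugai93chern,graf2005equality} it equals $\mathcal C_1(P_{\omega,\lambda})$, and by the non-commutative Kubo/Chern formula \cite{Avron1994charge,Bellissard94non_commutative_QHE} so does $\mathcal E_{bulk}$; hence $\lim_{L\to\infty}\mathcal E_{edge,L}=\mathcal E_{bulk}=\mathcal C_1(P_{\omega,\lambda})$. The hard part of this programme is the second step: obtaining the $L$-uniform off-diagonal decay of $\rho'(\mathcal H^{sim}_{\omega,L})$ into the interior, controlling the matching of $\mathcal H^{sim}_{\omega,L}$ to the four half-plane operators on $\Gamma_L$, and --- above all --- handling the corner regions, where edge modes may wrap around and translation invariance fails, so that all the ``corner'' and ``transverse-current'' remainders are genuinely $o(|\Lambda_L|)$; this uniform-in-$L$ bookkeeping is the technical core of \cite{qiu2025bec_finite}.
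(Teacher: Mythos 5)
Your proposal takes a genuinely different route from the one the paper follows. The paper obtains Theorem~\ref{thm_bec_non_random_finite} as the degenerate special case of Theorem~\ref{thm_bec_disorder} (no localized in-gap modes, so $P_{\omega,loc}=0$ and $\mathcal{E}^{ref}_{bulk}=\mathcal{E}_{bulk}$), whose proof in Section~\ref{sec_bec} is entirely algebraic: rewrite the edge index through the Helffer--Sj\"ostrand formula, swap the box resolvent $(\mathcal H_{\omega,L}^{sim}-z)^{-1}$ for the bulk resolvent $(\mathcal H_\omega -z)^{-1}$ using the exponential decay of resolvents at $\Im z\neq 0$ and the fact that the hopping operator $\mathcal F^{(L)}$ lives on $\partial\Lambda_L$ (an $O(L)$ set against an $O(L^2)$ normalization), and then massage the resulting bulk expression with the trace-per-unit-volume calculus into the Kubo--\v{S}treda form of the Hall conductance, which equals $\mathcal C_1$ by \cite{Avron1994charge,Bellissard94non_commutative_QHE}. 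No half-plane operators and no classical half-plane BEC are invoked anywhere. Your plan is instead a collar-and-corner decomposition: trap the in-gap spectral weight near $\partial\Lambda_L$, split the collar into four straight strips plus corner boxes, replace the box Hamiltonian on each strip by the corresponding half-plane operator, read off a per-column edge conductance by $x_1$-translation invariance, and import the classical half-plane BEC \cite{Hatsugai93chern,graf2005equality}. This gives a more geometric picture (edge currents circulating around the sample), which is nice, but it is less robust than the paper's route precisely because it leans on (i) translation invariance of the half-plane operator along the edge and (ii) a preexisting half-plane BEC theorem, both of which would have to be reargued in the disordered setting that the paper actually targets.

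Two points to correct or sharpen. First, the stated reason that all four strips contribute equally --- ``the four half-plane operators are carried into one another by the orientation-preserving rotations of $\Lambda_L$'' --- is false for a generic short-range periodic $\mathcal H$: lattice rotations do not commute with $\mathcal H_0$ in general (the QWZ model \eqref{eq_qwz_h0}, for example, is not rotation invariant), so the four half-plane restrictions are genuinely different operators. The correct justification is the one you offer first (``fixed chirality''): the half-plane BEC identifies each of the four edge conductances with the \emph{same} bulk quantity $\mathcal C_1(P_{\omega,\lambda})$, regardless of edge orientation, so all four strips contribute equally for that reason and not by symmetry. Second, the approximation of $\rho'(\mathcal H_{\omega,L}^{sim})$ on a strip by $\rho'(\mathcal H^+_L)$ cannot be justified by Combes--Thomas for the half-plane operator in the gap (the half-plane operator has no gap --- that is the whole point); the right tool is the Helffer--Sj\"ostrand representation of $\rho'$ together with Combes--Thomas at $\Im z\neq0$, which is precisely the mechanism the paper's proof makes explicit. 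You flag this and the corner bookkeeping as the ``hard part,'' and that is an accurate self-assessment: as written the proposal is a plausible and internally consistent sketch, but the genuine content (uniform-in-$L$ matching of the box operator to the four half-plane operators, control of the transverse-current remainder, sign bookkeeping across the four strips) is deferred, whereas the paper's HS/TPUV route handles those issues in a single stroke.
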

Physically, the edge index \eqref{eq_edge_index} computes the expectation of averaged angular momentum carried by in-gap eigenmodes, which necessarily locates near the boundary $\partial \Lambda_L$ since the bulk medium is insulating (gapped); with this interpretation, Theorem \ref{thm_bec_non_random_finite} establishes the BEC principle for (non-random) finite systems on a rigorous mathematical ground. Mathematically, Theorem \ref{thm_bec_non_random_finite} indicates that the box operator $\mathcal{H}_{L}^{sim}$ must have in-gap eigenvalues with eigenfunctions localized near the boundary $\partial \Lambda_L$, as long as the bulk Hamiltonian has nonzero Chern number and the size of the domain is sufficiently large; see \cite[Corollary 1.3]{qiu2025bec_finite}.

However, when the disorder is taken into account, the situation is more complicated. The key point is that, except for the angular momentum carried by the edge states, the index \eqref{eq_edge_index} also counts the contribution of the bulk-localized states (with an energy located in the mobility gap) which emerges as a consequence of Anderson localization. To remedy it, we need to modify the definition of the edge index by subtracting the contribution of bulk-localized modes. The refined index is defined as follows, whose physical meaning is illustrated in Remark \ref{rmk_illustrate_refined_index}.
\begin{definition}[Refined edge index] \label{prop_refined_bulk_index}
For any open sub-interval $\mathcal{I}^{\prime}$ of the mobility gap $\mathcal{I}$ that includes the transition region of the density function $\rho$, i.e.,
\begin{equation*}
\text{supp}(\rho^{\prime}) \Subset \mathcal{I}^{\prime} \Subset \mathcal{I} ,
\end{equation*}
we define
\begin{equation} \label{eq_refined_bulk_index}
\begin{aligned}
\mathcal{E}_{edge,\omega,\mathcal{I},\mathcal{I}^{\prime},\rho,L}^{ref}&:=\mathcal{E}_{edge,\omega,\mathcal{I},\rho,L} - \mathcal{E}_{edge,\omega,\mathcal{I},\mathcal{I}^{\prime},\rho,L}^{cor,(1)}-\mathcal{E}_{edge,\omega,\mathcal{I},\mathcal{I}^{\prime},\rho,L}^{cor,(2)} .
\end{aligned}
\end{equation}
Here
\begin{equation} \label{eq_correction_term_1}
\mathcal{E}_{edge,\omega,\mathcal{I},\mathcal{I}^{\prime},\rho,L}^{cor,(1)}
:=\frac{i}{2\pi |\Lambda_L|}\text{Tr}_{\Lambda_L}\big( P_{\omega,loc} \int_{\mathbb{C}}dm(z)\partial_{\overline{z}}\tilde{\rho}(z) (\mathcal{H}_{\omega}-z)^{-1} \mathcal{M}_{\omega} (\mathcal{H}_{\omega}-z)^{-1}P_{\omega,loc} \big),
\end{equation}
where $\mathcal{M}_{\omega}:=[\mathcal{H}_{\omega},x_1]x_2-[\mathcal{H}_{\omega},x_2]x_1$ is the orbital angular momentum operator (associated with $\mathcal{H}_{\omega}$), $\tilde{\rho}$ is the almost-analytic extension of $\rho$ (see \eqref{eq_almost_analyticity}), and the projection $P_{\omega,loc}:=\mathbbm{1}_{\mathcal{I}^{\prime} }(\mathcal{H}_{\omega})$. On the other hand,
\begin{equation} \label{eq_correction_term_2}
\begin{aligned}
\mathcal{E}_{edge,\omega,\mathcal{I},\mathcal{I}^{\prime},\rho,L}^{cor,(2)}
&:=\frac{i}{2|\Lambda_L|} \text{Tr}_{\Lambda_L}\Big( P_{\omega,loc} \big( [\rho(\mathcal{H}_{\omega}),x_1]x_2 + \int d\lambda 
\rho^{\prime}(\lambda) T_{\lambda,12} \big) P_{\omega,loc}  \Big) \\
&\quad -\frac{i}{2|\Lambda_L|}\text{Tr}_{\Lambda_L}\Big( P_{\omega,loc} \big( [\rho(\mathcal{H}_{\omega}),x_2]x_1 + \int d\lambda 
\rho^{\prime}(\lambda) T_{\lambda,21} \big) P_{\omega,loc}  \Big),
\end{aligned}
\end{equation}
where
\begin{equation*}
T_{\lambda,ij}:=P_{\omega,\lambda}x_i P_{\omega,\lambda}^{\perp} x_jP_{\omega,\lambda}-P_{\omega,\lambda}^{\perp}x_iP_{\omega,\lambda} x_jP_{\omega,\lambda}^{\perp} .
\end{equation*}
\end{definition}
By the almost-analyticity of $\tilde{\rho}$, the complex integral in \eqref{eq_correction_term_1} converges absolutely (cf. \cite[Theorem 14.8]{zworski2012semiclassical}); hence the first correction term $\mathcal{E}_{edge,\omega,\mathcal{I},\mathcal{I}^{\prime},\rho,L}^{cor,(1)}$ is well-defined. On the other hand, for a.e. $\omega$, one can check that the operator $T_{\lambda,ij}$ is uniformly bounded in $\lambda$ due to the exponential localization of $P_{\omega,\lambda}^{\perp}$ and $P_{\omega,\lambda}$ (Proposition \ref{prop_localization_ground_state}) and recalling the Schur test; hence the $\lambda$-integral in \eqref{eq_correction_term_2} absolutely converges and the second correction term $\mathcal{E}_{edge,\omega,\mathcal{I},\mathcal{I}^{\prime},\rho,L}^{cor,(2)}$ is also well-defined.

With the refined edge index, we are now ready to present the main result of this paper, which establishes the bulk-edge correspondence for finite disordered systems.
\begin{theorem}[BEC principle for finite disordered systems] \label{thm_bec_disorder}
Under Hypothesis \ref{hypo_anderson_localization}, the following identity holds $\mathbb{P}-$almost surely
\begin{equation*}
\lim_{L\to\infty}\mathcal{E}_{edge,\omega,\mathcal{I},\mathcal{I}^{\prime},\rho,L}^{ref}=\mathcal{E}_{bulk,\omega}\in\mathbb{Z} .
\end{equation*}
\end{theorem}

The proof is presented in Section \ref{sec_bec}. It generalizes Theorem \ref{thm_bec_non_random_finite} to the disordered regime. From a practical perspective, Theorem \ref{thm_bec_disorder} provides a rigorous foundation for the BEC principle for finite disordered systems and we hope it could support emerging theories of two-dimensional topological alloy-type materials; see \cite{chan2024topological_alloy} and the references therein. We also note that by slightly modifying the proof in this paper, we can prove Theorem \ref{thm_bec_disorder} for simply-connected domains with more general shapes other than the simple box $\Lambda_L=[-L,L]\times [-L,L]$ considered in this paper. This is left to the interested reader.

\begin{remark} \label{rmk_illustrate_refined_index}
It is necessary to illustrate the correction terms in the refined edge index \eqref{eq_refined_bulk_index}. First, both $\mathcal{E}_{edge,\omega,\mathcal{I},\mathcal{I}^{\prime},\rho,L}^{cor,(1)}$ and $\mathcal{E}_{edge,\omega,\mathcal{I},\mathcal{I}^{\prime},\rho,L}^{cor,(2)}$ vanish {if the mobility gap is actually a spectral gap}: in this case, one just takes a switch function $\rho$ with $\rho^{\prime}$ supported in the spectral gap and then takes $\mathcal{I}^{\prime}\cap \sigma(\mathcal{H}_{\omega})=\emptyset$. As a consequence,
\begin{equation*}
P_{\omega,loc}=\mathbbm{1}_{\mathcal{I}^{\prime} }(\mathcal{H}_{\omega})=0.
\end{equation*}
Hence, $\mathcal{E}_{edge,\omega,\mathcal{I},\mathcal{I}^{\prime},\rho,L}^{cor,(1)}=\mathcal{E}_{edge,\omega,\mathcal{I},\mathcal{I}^{\prime},\rho,L}^{cor,(2)}=0$ by definition. Now, we examine the first correction term $\mathcal{E}_{edge,\omega,\mathcal{I},\mathcal{I}^{\prime},\rho,L}^{cor,(1)}$, with an emphasis on its physical meaning. In fact, as we see in Theorem \ref{thm_bec_disorder}, the quantity of our interests is its limiting value, i.e.,
\begin{equation} \label{eq_illustration_extra_term_1}
\begin{aligned}
&\lim_{L\to\infty}\frac{i}{2\pi |\Lambda_L|}\text{Tr}_{\Lambda_L}\big( P_{\omega,loc} \int_{\mathbb{C}}dm(z)\partial_{\overline{z}}\tilde{\rho}(z) (\mathcal{H}_{\omega}-z)^{-1} \mathcal{M}_{\omega} (\mathcal{H}_{\omega}-z)^{-1}P_{\omega,loc} \big) \\
&=\frac{i}{2\pi}  \mathcal{T} \big( P_{\omega,loc}\int_{\mathbb{C}}dm(z)\partial_{\overline{z}}\tilde{\rho}(z) (\mathcal{H}_{\omega}-z)^{-1} \mathcal{M}_{\omega} (\mathcal{H}_{\omega}-z)^{-1}P_{\omega,loc} \big) .
\end{aligned}
\end{equation}
Next, noting that the resolvent $(\mathcal{H}_{\omega}-z)^{-1}$ commutes with the projections $P_{\omega,loc}$, we {formally} move the resolvent $(\mathcal{H}_{\omega}-z)^{-1}$ cyclically to the right side and see that
\begin{equation*}
\begin{aligned}
& \mathcal{T} \big( P_{\omega,loc} \int_{\mathbb{C}}dm(z)\partial_{\overline{z}}\tilde{\rho}(z) (\mathcal{H}_{\omega}-z)^{-1} \mathcal{M}_{\omega} (\mathcal{H}_{\omega}-z)^{-1}P_{\omega,loc} \big) \\
&\overset{(i)}{=} \mathcal{T} \big( P_{\omega,loc} \int_{\mathbb{C}}dm(z)\partial_{\overline{z}}\tilde{\rho}(z)  \mathcal{M}_{\omega} (\mathcal{H}_{\omega}-z)^{-2}P_{\omega,loc} \big) \\
&\overset{(ii)}{=} -\pi i\cdot  \mathcal{T} \big( P_{\omega,loc} \mathcal{M}_{\omega} \rho^{\prime}(\mathcal{H}_{\omega}) P_{\omega,loc} \big) \\
&\overset{(iii)}{=} -\pi i \cdot \mathcal{T} \big( \mathcal{M}_{\omega} \rho^{\prime}(\mathcal{H}_{\omega} )P_{\omega,loc} \big) \\
&= -\pi i  \cdot \mathcal{T} \big(  \mathcal{M}_{\omega} \rho^{\prime}(\mathcal{H}_{\omega})\big),
\end{aligned}
\end{equation*}
where the cyclicity of TPUV is applied informally in (i) and (iii), and the identity (ii) follows from the Helffer-Sjöstrand formula. With the above calculation, we see that the limit of the correction term $\mathcal{E}_{edge,\omega,\mathcal{I},\mathcal{I}^{\prime},\rho,L}^{cor,(1)}$ in the refined edge index \eqref{eq_refined_bulk_index} is essentially
\begin{equation} \label{eq_illustration_extra_term_2}
\mathcal{T} \big(  \mathcal{M}_{\omega} \rho^{\prime}(\mathcal{H}_{\omega})\big),
\end{equation}
which measures the expectation of orbital angular momentum associated with the bulk-localized modes in the mobility gap, weighted by the function $\rho^{\prime}$. By subtracting it from the original edge index $\mathcal{E}_{edge,\omega,\mathcal{I},\rho,L}$, we obtain a quantity that only measures the circulation of edge modes.

However, we note that the above derivation is non-rigorous: the applications of cyclicity in (i) and (iii) are not justified as the orbital angular momentum operator $\mathcal{M}_{\omega}$ is not covariant with respect to space translation. For the same reason, we cannot justify that the TPUV in \eqref{eq_illustration_extra_term_2} is well-defined. This explains why we do not use the more concise expression \eqref{eq_illustration_extra_term_2} in Definition \ref{prop_refined_bulk_index}.

Next, we look at the other correction term $\mathcal{E}_{edge,\omega,\mathcal{I},\mathcal{I}^{\prime},\rho,L}^{cor,(2)}$. In fact, we anticipate that it converges to zero as $L\to\infty$. Proving this is very hard. The reason for which we expect $\mathcal{E}_{edge,\omega,\mathcal{I},\mathcal{I}^{\prime},\rho,L}^{cor,(2)}$ to vanish asymptotically can be illustrated as follows: Look at the formal limit of the first line of \eqref{eq_correction_term_2}, i.e.,
\begin{equation*} 
\mathcal{T}\Big( P_{\omega,loc} \big( [\rho(\mathcal{H}_{\omega}),x_1]x_2 + \int d\lambda 
\rho^{\prime}(\lambda) T_{\lambda,12} \big) P_{\omega,loc}  \Big).
\end{equation*}
Consider the extreme case that $\rho=\mathbbm{1}_{(-\infty,\lambda_0)}$ for some $\lambda_0\in \mathcal{I}$, and hence $\rho^{\prime}$ becomes the Dirac mass supported at $\lambda_0$. (This is, again, informal because we have assumed that $\rho\in C^{\infty}$.) In that case, we have
\begin{equation} \label{eq_illustration_extra_term_3}
\begin{aligned}
&\mathcal{T}\Big( P_{\omega,loc} \big( [\rho(\mathcal{H}_{\omega}),x_1]x_2 + \int d\lambda 
\rho^{\prime}(\lambda) T_{\lambda,12} \big) P_{\omega,loc}  \Big) \\
&=\mathcal{T}\Big( P_{\omega,loc} \big( [P_{\omega,\lambda_0},x_1]x_2 -  T_{\lambda_0,12} \big) P_{\omega,loc}  \Big) .
\end{aligned}
\end{equation}
Using the definition of $T_{\lambda,12}$, we see that
\begin{equation*}
\begin{aligned}
\mathcal{T}\big( P_{\omega,loc}  T_{\lambda_0,12}  P_{\omega,loc}  \big)
&=\mathcal{T}\Big( P_{\omega,loc}  \big( P_{\omega,\lambda_0}x_1 P_{\omega,\lambda_0}^{\perp} x_2P_{\omega,\lambda_0}-P_{\omega,\lambda_0}^{\perp}x_1P_{\omega,\lambda_0} x_2P_{\omega,\lambda_0}^{\perp} \big)  P_{\omega,loc}  \Big) \\
&=\mathcal{T}\Big( P_{\omega,loc}  \big( -P_{\omega,\lambda_0} [P_{\omega,\lambda_0}^{\perp},x_1] x_2P_{\omega,\lambda_0}+P_{\omega,\lambda_0}^{\perp}[P_{\omega,\lambda_0},x_1] x_2P_{\omega,\lambda_0}^{\perp} \big)  P_{\omega,loc}  \Big) \\
&=\mathcal{T}\Big( P_{\omega,loc}  \big( P_{\omega,\lambda_0} [P_{\omega,\lambda_0},x_1] x_2P_{\omega,\lambda_0}+P_{\omega,\lambda_0}^{\perp}[P_{\omega,\lambda_0},x_1] x_2P_{\omega,\lambda_0}^{\perp} \big)  P_{\omega,loc}  \Big) \\
&\overset{(i)}{=}\mathcal{T}\Big( P_{\omega,loc}\big( P_{\omega,\lambda_0}+P_{\omega,\lambda_0}^{\perp}\big)  [P_{\omega,\lambda_0},x_1] x_2   P_{\omega,loc}\Big) \\
&=\mathcal{T}\Big( P_{\omega,loc}  [P_{\omega,\lambda_0},x_1] x_2   P_{\omega,loc}\Big),
\end{aligned}
\end{equation*}
where the cyclicity is applied informally to derive (i). (This is also not justified because the operator $[P_{\omega,\lambda_0},x_1] x_2$ is not covariant.) As a consequence, the right side of \eqref{eq_illustration_extra_term_3} vanishes. Nevertheless, as this derivation is non-rigorous, we keep the refined index as in \eqref{eq_refined_bulk_index}.
\end{remark}

For completeness of this paper, we prove the validity of Hypothesis \ref{hypo_anderson_localization} for a specific class of random Hamiltonians, where the on-site potential has the ability of opening up a spectral gap\footnote{This is indeed a classical scenario in experimental studies of topological alloy-type materials; see \cite{chan2024topological_alloy} and the references therein.} and are described as follows.
\begin{assumption} \label{asmp_Dirac_eigenpair}
Consider the periodic Hamiltonian defined by $\mathcal{H}_{per}=\mathcal{H}_0+1\otimes V\in \mathcal{B}(\mathcal{X})$, with $\mathcal{H}_0$ and $V$ being introduced in Assumptions \ref{asmp_Hamiltonian_deter_part} and \ref{asmp_random_potential}, respectively. We assume that $\mathcal{H}_{per}$ has a band gap in the sense that
\begin{equation*}
(\lambda_0-\Delta,\lambda_0+\Delta)\cap \sigma(\mathcal{H}_{per})=\emptyset,
\end{equation*}
for some $\lambda_0\in \mathbb{R}$ and $\Delta>0$. Moreover, we assume that the amplitude of the potential satisfies
\begin{equation*}
\sup_{1\leq i\leq d}|V_i|\leq \Delta .
\end{equation*}
\end{assumption}
As shown in Appendix \ref{appendixA}, a typical model that satisfies Assumptions \ref{asmp_Hamiltonian_deter_part} and \ref{asmp_Dirac_eigenpair} is the famous Qi–Wu–Zhang (QWZ) model, which is a prototypical Hamiltonian for (quantum) anomalous Hall physics \cite{qi2006qwz_model,ezawa2024nonlinear_phase_transition}. For the random variable $\omega_{\bm{n}}^{(k)}$, we assume that their common density satisfies the following. 
\begin{assumption} \label{asmp_distribution_density}
Assume that $\text{supp}(\mu)=[0,1]$, and that there exist $\tau,C>0,\eta >5$ such that
\begin{equation} \label{eq_poly_distribution}
\int_{0}^{t}d\mu\leq Ct^{\eta}\quad (\forall t\in [0,\tau]).
\end{equation}
\end{assumption}
Then the existence of a mobility gap is justified as follows. The proof is presented in Section \ref{sec_anderson_loc}.
\begin{theorem}
\label{thm_anderson_localization}
Under Assumptions \ref{asmp_Dirac_eigenpair} and \ref{asmp_distribution_density}, there exists an interval $\mathcal{I}\subset (\lambda_0-\Delta,\lambda_0+\Delta)$ such that it is an Aizenman-Molchanov mobility gap in the sense of Hypothesis \ref{hypo_anderson_localization}.
\end{theorem}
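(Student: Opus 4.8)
\bigskip
\noindent\textbf{Proof idea.} The plan is to verify the estimate of Hypothesis~\ref{hypo_anderson_localization} by the Aizenman--Molchanov fractional moment method, in the geometric-decoupling / finite-volume form of \cite{aizenman1993localization_elementary_derivation,aizenman2001finite_volime}, adapted to the $d$-orbital (matrix-valued) setting. The argument has three parts: (i) an a priori fractional-moment bound valid on all of $\mathbb{R}$; (ii) a reduction of the desired exponential decay to a single finite-volume smallness estimate; and (iii) the verification of that estimate near $\lambda_0$, which is where the clean spectral gap of $\mathcal{H}_{per}$ and Assumption~\ref{asmp_distribution_density} enter.

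For (i) I would prove that for every $s\in(0,1)$ there is $C_s<\infty$ with
\begin{equation*}
\sup_{\bm n,\bm m}\ \sup_{\lambda\in\mathbb{R},\,\epsilon>0}\ \mathbb{E}\big(\|G_\omega(\bm n,\bm m;\lambda+i\epsilon)\|^s\big)\le C_s,
\end{equation*}
and the same bound for the Green functions $G_{\omega,\Lambda}$ of the box restrictions, uniformly in $\Lambda$. Applying the resolvent identity at the (finite-range, by Assumption~\ref{asmp_Hamiltonian_deter_part}) sites $\bm n$ and $\bm m$ reduces this to a bound on the diagonal blocks; by the Schur-complement formula, $G_\omega(\bm n,\bm n;z)=\big(A_{\bm n}(z)+\mathrm{diag}(\omega^{(1)}_{\bm n}V_1,\dots,\omega^{(d)}_{\bm n}V_d)\big)^{-1}$ with $A_{\bm n}(z)$ independent of $\omega^{(i)}_{\bm n}$, so the matrix-valued single-site Wegner-type estimate --- which uses only $\mu\in L^\infty$ and the invertibility of $V$ --- yields $\int\big\|\big(A+\mathrm{diag}(\omega^{(i)}V_i)\big)^{-1}\big\|^{s}\prod_i\mu(\omega^{(i)})\,d\omega^{(i)}\le C_s$; see \cite{aizenman1994localization,aizenman2001finite_volime}. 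This part uses only Assumptions~\ref{asmp_Hamiltonian_deter_part}--\ref{asmp_random_potential}.

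For (ii) I would invoke the finite-volume criterion of \cite{aizenman2001finite_volime} in matrix form: to obtain $\mathbb{E}(\|G_\omega(\bm n,\bm m;\lambda+i\epsilon)\|^s)\le Ce^{-\alpha|\bm n-\bm m|}$ uniformly in $\lambda\in\mathcal{I}$ and $\epsilon>0$, it suffices to produce a single length $\ell$ and an open interval $\mathcal{I}\ni\lambda_0$ with
\begin{equation*}
\sup_{\lambda\in\mathcal{I}}\ \sum_{\bm m\in\partial\Lambda_\ell}\mathbb{E}\big(\|G_{\omega,\Lambda_\ell}(\bm n_0,\bm m;\lambda+i0)\|^{s}\big)<1,
\end{equation*}
$\bm n_0$ being the centre of $\Lambda_\ell$; the rate $\alpha$ is then proportional to $-\ell^{-1}$ times the logarithm of the left-hand side. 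For (iii) I would split the probability space over $\Lambda_\ell$ into a ``good'' event, on which the band gap of $\mathcal{H}_{per}$ survives (in a possibly narrower interval containing $\mathcal{I}$) for $\mathcal{H}_{\omega,\Lambda_\ell}$ and a Combes--Thomas estimate gives $\|G_{\omega,\Lambda_\ell}(\bm n_0,\bm m;\lambda)\|\le Ce^{-c\ell}$ for $\lambda\in\mathcal{I}$, hence a boundary sum $\le C\ell e^{-c\ell}<\tfrac12$ for $\ell$ large; and its complement, the ``bad'' event of atypical disorder, on which I would use Hölder's inequality, $\mathbb{E}(\mathbbm{1}_{\mathrm{bad}}\|G_{\omega,\Lambda_\ell}\|^{s})\le\mathbb{P}(\mathrm{bad})^{1-s/s'}\big(\mathbb{E}\|G_{\omega,\Lambda_\ell}\|^{s'}\big)^{s/s'}$ with $s<s'<1$, bound the last factor by (i), and bound $\mathbb{P}(\mathrm{bad})$ by a union bound over the $\sim\ell^2 d$ sites of $\Lambda_\ell$ through \eqref{eq_poly_distribution}. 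Balancing $\ell$, the threshold separating ``weak'' from ``strong'' potential values, and the $\sim\ell$ boundary sites against the polynomial rate in \eqref{eq_poly_distribution} yields the required smallness and determines $\mathcal{I}$ as a subinterval of $(\lambda_0-\Delta,\lambda_0+\Delta)$; this balance in two spatial dimensions is exactly what imposes $\eta>5$.

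I expect part (iii) to be the main obstacle. Unlike the clean case, $\mathcal{H}_{\omega,\Lambda_\ell}$ carries no deterministic spectral gap inside $\mathcal{I}$ --- localized in-gap states genuinely appear as a consequence of Anderson localization, so the fractional moment method rather than a soft argument is needed --- and the perturbation $V_\omega-1\otimes V$ is not small in operator norm, so the ``good'' event must be chosen with care (the anomalously weak on-site potentials, which locally degrade the gap, must be rare and confined to a sparse set). The quantitative tension is that $\ell$ must be large (for the Combes--Thomas term and hence the rate $\alpha$ to be under control) while the bad event must nonetheless be small, and the latter is only polynomially small, with exponent $\eta$, in the separating threshold, after the $\sim\ell^2$ volume and $\sim\ell$ boundary entropy factors are accounted for; reconciling these is what forces $\eta>5$. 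A secondary, technical task is to assemble the matrix-valued analogues of the scalar a priori bound and finite-volume criterion and to check that all estimates are uniform up to the real axis, so that $\epsilon\downarrow0$ in Hypothesis~\ref{hypo_anderson_localization} is harmless; Proposition~\ref{prop_localization_ground_state} and the downstream constructions then follow as cited.
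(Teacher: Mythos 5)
Your proposal matches the paper's proof in all its essentials: an a priori fractional-moment (Wegner-type) bound (Lemmas \ref{lem_apriori_bound_L} and \ref{lem_apriori_bound_whole}), a geometric-decoupling reduction of the exponential decay to a finite-volume smallness criterion (Section \ref{sec_geo_decouple}), and a Lifshitz-tail/Combes--Thomas verification of that smallness near $\lambda_0$ via a good/bad event split with H\"older (Propositions \ref{prop_lifshitz_tail} and \ref{prop_initial_length}), with exactly the balance of volume, boundary entropy, and H\"older exponent that forces $\eta>5$. The one point you flag as an unresolved obstacle --- how to choose the ``good event'' given that a finite-box Hamiltonian can have in-gap edge states even when the disorder is benign --- is precisely where the paper is more specific: it defines the good/bad split and proves the Lifshitz bound for the \emph{periodic}-boundary-condition box Hamiltonian $\mathcal{H}_{\omega,L}^{\sharp}$, whose $t=0$ member $\mathcal{H}_{per,L}^{\sharp}$ has spectrum contained in $\sigma(\mathcal{H}_{per})$ (so the gap is genuinely clean at the start of the deformation); analytic perturbation theory plus the Feynman--Hellmann formula along the linear interpolation $\mathcal{H}_{\omega,L}^{\sharp}(t)$ then reduces ``spectrum near $\lambda_0$'' to ``some $\omega^{(i)}_{\bm{n}}$ anomalously small,'' giving the $L^{2-\eta\beta}$ union bound, and the simple-boundary Green function $G^{\mathrm{sim}}_{\omega,L}$ that feeds the decoupling iteration is controlled on the good event through a resolvent identity back to $G^{\sharp}_{\omega,L}$ followed by Combes--Thomas.
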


Finally, in Appendix \ref{appendixB}, we prove that the bulk index $\mathcal{E}_{bulk,\omega,\lambda}$ is equal to the first Chern number associated with the spectral projection $\mathbbm{1}_{(-\infty,\lambda_0)}(\mathcal{H}_{per})$ of the deterministic periodic Hamiltonian. This indicates that the identity in Theorem \ref{thm_bec_disorder} is nontrivial and therefore completes our findings in this paper.

\begin{remark}
Assumption \ref{asmp_distribution_density}, especially the polynomial decay of the density near the origin, as supposed in \eqref{eq_poly_distribution}, is far from being optimal to prove Theorem \ref{thm_anderson_localization}. We note that \eqref{eq_poly_distribution} is posed to guarantee the Lifshitz tail estimate around $\lambda=\lambda_0$ (Proposition \ref{prop_lifshitz_tail}), which is the key element to establish the existence of mobility gap; see Section \ref{sec_lif_initial} for details. We note that, for a general random distribution, the Lifshitz tail estimate is extensively studied and verified for random Schrödinger Hamiltonians, whose proof relies heavily on the special structure of their deterministic part (i.e., the Laplacian operator) such as by the Dirichlet-Neumann bracketing method; see \cite{stolz2011introduction,kirsch2007invitation,stollmann2001caught} and the references therein. However, the Hamiltonian considered in this paper is general (e.g. the QWZ Hamiltonian), for which the Dirichlet-Neumann bracketing may not work. This is the reason for assuming \eqref{eq_poly_distribution}; see Remark \ref{rmk_beta_size} for a more detailed discussion. We note that similar assumptions on the fast decay of the distribution of the random variables near their extreme values have been assumed to study the Anderson localization \cite{barbaroux1997localization,stolz1998anderson}; see also the discussion in \cite{seelmann2020band}. On the other hand, we point out that the Lifshitz tail estimate is often assumed in the study of Anderson localization of general stochastic Hamiltonians (see, e.g. \cite{aizenman2001finite_volime}). It would be very interesting if one could prove Proposition \ref{prop_lifshitz_tail} without imposing \eqref{eq_poly_distribution}.
\end{remark}

\begin{remark}
We note that, in addition to being used for studying the BEC principle, Theorem \ref{thm_anderson_localization} is of its own interest in the study of Anderson localization. In fact,  the previous proofs of Anderson localization focus mainly on the case of large disorder or near band gaps of the unperturbed operator.\footnote{It is almost impossible to list all the literature in this large field of study; We refer the reader to the excellent monographs \cite{stolz2011introduction,kirsch2007invitation} for a review and \cite{aizenman1993localization_elementary_derivation,aizenman1994localization,aizenman1998localization,li2022anderson,ding2020localization,najar2006band_edge,combes1994localization,veselic2002localization,klopp2003note_anderson,nakamura2003anderson,prado2017dynamical,prado2021density,carmona1987anderson,chulaevsky2023anderson,cornean2019gapped_dirac}; see also the references therein.} In our case, there can be no gap for the unperturbed Hamiltonian $\mathcal{H}_{0}$. To the best of our knowledge, the present proof of Anderson localization under a random gap opening potential is new. In fact, as shown in Section \ref{sec_anderson_loc}, after establishing the Lifshitz-type estimate (Proposition \ref{prop_lifshitz_tail}) and an initial length estimate (Proposition \ref{prop_initial_length}), the proof of Theorem \ref{thm_anderson_localization} is completed by following the standard geometric decoupling approach.
\end{remark}

\section{Notation}
\noindent 1) Geometries: $\Lambda_L=[-L,L]\times [-L,L]\cap \mathbb{Z}^2$ (finite boxes), $\partial^{int}\Lambda_L$, $\partial^{ext}\Lambda_L$ and $\partial\Lambda_L$ (interior, exterior, and total boundary of $\Lambda_L$) are defined as
\begin{equation*}
    \partial^{int}\Lambda_L:=\{\bm{n}\in \mathbb{Z}^2:\, |n_1|=L\text{ or }|n_2|=L\},\quad \partial^{ext}\Lambda_L:=\{\bm{n}\in \mathbb{Z}^2:\, |n_1|=L+1\text{ or }|n_2|=L+1\},
\end{equation*}
and
\begin{equation*}
     \partial\Lambda_L:=\big\{(\bm{n},\bm{m})\in (\partial^{int}\Lambda_L\times \partial^{ext}\Lambda_L)\cup (\partial^{ext}\Lambda_L\times \partial^{int}\Lambda_L):\, |\bm{n}-\bm{m}|=1 \big\}.
\end{equation*}

\noindent 2) Hilbert spaces $\mathcal{X}_{L}=\ell^2(\Lambda_L)\times \mathbb{C}^d$, $\mathcal{X}=\ell^2(\mathbb{Z}^2)\times \mathbb{C}^d$.

\noindent 3) Hamiltonians: $\mathcal{H}_0$ (deterministic Hamiltonian; see Assumption \ref{asmp_Hamiltonian_deter_part}), $\mathcal{H}_{per}$ (periodically perturbed Hamiltonian; see Assumption \ref{asmp_Dirac_eigenpair}), $\mathcal{H}_{\omega}$ (random Hamiltonian), $\mathcal{H}_{\omega,L}^{sim}$ (random box Hamiltonian with simple boundary conditions), $\mathcal{H}_{\omega,L}^{\sharp}$ (random box Hamiltonian with periodic boundary conditions; see Section \ref{sec_lif_initial}).

\noindent 4) Norms: $\|\cdot\|=\|\cdot \|_{F}$ (Frobenius norm on $\mathbb{C}^{d\times d}$), $|\cdot|=\|\cdot\|_1$ ($\ell^1$ norm on $\mathbb{Z}^{2}$), $\|\cdot \|_{\mathcal{X}/\mathcal{X}_L}$ ($\ell^2$ norm on $\mathcal{X}/\mathcal{X}_L$, respectively), $\|\cdot\|_{\mathscr{T}_1}$ (trace norm on $\mathcal{X}$). With little abuse of notation, the operator norm on $\mathcal{X}/\mathcal{X}_L$ is also denoted by $\|\cdot \|_{\mathcal{X}/\mathcal{X}_L}$, respectively.

\noindent 5) Brackets: $\langle\cdot,\cdot\rangle_{\mathcal{V}}$ denotes the inner product of a Hilbert space $\mathcal{V}$. With a little abuse of notation, we will sometimes write $A(\bm{n},\bm{m})=\langle\mathbbm{1}_{\{\bm{n}\}},A\mathbbm{1}_{\{\bm{m}\}}\rangle_{\ell^2(\mathbb{Z}^d)}\in \mathbb{C}^{d\times d}$ as the kernel of operator $A\in\mathcal{B}(\mathcal{X})$. 

\noindent 6) Traces: Tr (the usual trace on the Hilbert space $\mathcal{X}$), $$\text{Tr}_{\Lambda_L}(\cdot)=\text{Tr}(\mathbbm{1}_{\Lambda_L}\cdot\mathbbm{1}_{\Lambda_L}),$$
\begin{equation} \label{def:t}
    \mathcal{T}(\cdot)=\lim_{L\to\infty}\frac{1}{|\Lambda_L|}\text{Tr}_{\Lambda_L}(\cdot)
\end{equation}
 (the trace per unit of volume, TPUV), and tr (the trace of $\mathbb{C}^{d\times d}$ matrices).

\section{Preliminaries} \label{sec_prelim}

\subsection{Some Algebraic Properties of Exponentially Localized Operators} \label{sec_trace_class}

In this section, we review some algebraic properties of exponentially localized operators, including composition and cyclicity properties. The proofs of Lemma \ref{lem_resolv_CT_estimate}-\ref{lem_composition_kernel} are found in \cite[Section 2]{drouot2024bec_curvedinterfaces}.

\begin{lemma} \label{lem_resolv_CT_estimate}
For any bounded self-adjoint operator $\mathcal{H}\in \mathcal{B}(\mathcal{X})$, its resolvent satisfies the Combes-Thomas estimate. Namely, there exists $C>0$ such that
\begin{equation*}
\|(\mathcal{H}-z)^{-1}(\bm{n},\bm{m})\|\leq \frac{C}{\text{dist}(z,\sigma(\mathcal{H}))}e^{-\text{dist}(z,\sigma(\mathcal{H}))|\bm{n}-\bm{m}|}
\end{equation*}
for any $\bm{n},\bm{m}\in\mathbb{Z}^2$ and $z\notin \sigma(\mathcal{H})$.
\end{lemma}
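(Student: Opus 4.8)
The statement to be proved is the Combes--Thomas estimate (Lemma \ref{lem_resolv_CT_estimate}): for any bounded self-adjoint $\mathcal{H}\in\mathcal{B}(\mathcal{X})$ whose kernel is short-ranged, one has $\|(\mathcal{H}-z)^{-1}(\bm{n},\bm{m})\|\leq \tfrac{C}{\mathrm{dist}(z,\sigma(\mathcal{H}))}e^{-c\,\mathrm{dist}(z,\sigma(\mathcal{H}))|\bm{n}-\bm{m}|}$.

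\textbf{Plan of proof.} The approach is the standard conjugation (gauge transformation) trick. Fix $\bm{m}_0\in\mathbb{Z}^2$ and, for a parameter $\eta>0$ to be chosen, let $W_\eta$ be the bounded multiplication operator $(W_\eta\psi)(\bm{n})=e^{\eta|\bm{n}-\bm{m}_0|}\psi(\bm{n})$, with inverse $W_\eta^{-1}$ given by $e^{-\eta|\bm{n}-\bm{m}_0|}$. First I would form the conjugated operator $\mathcal{H}_\eta:=W_\eta\mathcal{H}W_\eta^{-1}$ and write $\mathcal{H}_\eta=\mathcal{H}+B_\eta$ where $B_\eta=W_\eta[\mathcal{H},W_\eta^{-1}]$ has kernel $B_\eta(\bm{n},\bm{m})=\mathcal{H}(\bm{n},\bm{m})\bigl(e^{\eta(|\bm{n}-\bm{m}_0|-|\bm{m}-\bm{m}_0|)}-1\bigr)$. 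Because $\mathcal{H}$ is short-ranged (only $|\bm{n}-\bm{m}|\le 1$ contributes, using Assumption \ref{asmp_Hamiltonian_deter_part}(ii) — or more generally exponentially decaying, in which case one balances against the decay rate), the exponent $|\bm{n}-\bm{m}_0|-|\bm{m}-\bm{m}_0|$ is bounded by $|\bm{n}-\bm{m}|\le 1$, so $|e^{\eta(\cdots)}-1|\le C\eta e^{\eta}$ uniformly; a Schur test on rows and columns then gives $\|B_\eta\|_{\mathcal{B}(\mathcal{X})}\le C'\eta$ for $\eta\le 1$, with $C'$ depending only on the kernel bound of $\mathcal{H}$.

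Next I would invoke the second resolvent identity together with a Neumann series. Choose $\eta=\tfrac{1}{2C'}\,\mathrm{dist}(z,\sigma(\mathcal{H}))$ (capped at $1$ so that the estimate on $B_\eta$ applies); then $\|B_\eta(\mathcal{H}-z)^{-1}\|\le C'\eta/\mathrm{dist}(z,\sigma(\mathcal{H}))\le \tfrac12<1$, so $(\mathcal{H}_\eta-z)^{-1}=(\mathcal{H}-z)^{-1}\bigl(1+B_\eta(\mathcal{H}-z)^{-1}\bigr)^{-1}$ exists and satisfies $\|(\mathcal{H}_\eta-z)^{-1}\|_{\mathcal{B}(\mathcal{X})}\le \tfrac{2}{\mathrm{dist}(z,\sigma(\mathcal{H}))}$. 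On the other hand, $(\mathcal{H}_\eta-z)^{-1}=W_\eta(\mathcal{H}-z)^{-1}W_\eta^{-1}$, whose $(\bm{n},\bm{m})$ matrix element is $e^{\eta(|\bm{n}-\bm{m}_0|-|\bm{m}-\bm{m}_0|)}(\mathcal{H}-z)^{-1}(\bm{n},\bm{m})$. Taking $\bm{m}=\bm{m}_0$ and bounding the matrix element by the operator norm (since $\|A(\bm{n},\bm{m})\|_F\le \|A\|_{\mathcal{B}(\mathcal{X})}$ for the compressions $\langle\mathbbm 1_{\{\bm n\}},\cdot\,\mathbbm 1_{\{\bm m\}}\rangle$, up to the fixed factor $\sqrt d$), I get
\begin{equation*}
\|(\mathcal{H}-z)^{-1}(\bm{n},\bm{m}_0)\|\le \sqrt d\,e^{-\eta|\bm{n}-\bm{m}_0|}\,\|(\mathcal{H}_\eta-z)^{-1}\|_{\mathcal{B}(\mathcal{X})}\le \frac{2\sqrt d}{\mathrm{dist}(z,\sigma(\mathcal{H}))}e^{-\eta|\bm{n}-\bm{m}_0|}.
\end{equation*}
Substituting $\eta=\min\{1,\tfrac{1}{2C'}\mathrm{dist}(z,\sigma(\mathcal{H}))\}$ and absorbing constants yields a bound of the claimed form, possibly with an extra constant $c<1$ in front of $\mathrm{dist}(z,\sigma(\mathcal{H}))$ in the exponent, which is harmless (one may simply state the lemma with such a $c$, or note $\mathcal H$ bounded forces $\mathrm{dist}(z,\sigma(\mathcal H))$ bounded so the cap at $1$ only changes constants).

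\textbf{Main obstacle.} There is no deep difficulty here; the only points requiring care are (a) the uniform-in-$\bm{m}_0$ operator-norm bound $\|B_\eta\|\le C'\eta$, which must be established by a Schur test and crucially uses that the range of $\mathcal{H}$ is finite (or exponentially small) so that the weight ratio $e^{\eta(|\bm n-\bm m_0|-|\bm m-\bm m_0|)}$ stays controlled — this is where short-rangeness enters — and (b) keeping track of the relation between the exponential rate and $\mathrm{dist}(z,\sigma(\mathcal{H}))$ so that the final constants are uniform in $z$. Since the paper cites \cite[Section 2]{drouot2024bec_curvedinterfaces} for this lemma, the expectation is that the proof is quoted rather than reproduced in full; the sketch above is the content behind that citation.
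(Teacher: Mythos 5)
The paper does not prove Lemma~\ref{lem_resolv_CT_estimate} itself; it refers to \cite[Section 2]{drouot2024bec_curvedinterfaces} (as you anticipate). Your conjugation argument -- conjugating by the exponential weight $W_\eta$, bounding $B_\eta = W_\eta[\mathcal{H},W_\eta^{-1}]$ in operator norm via a Schur test using the short range of the kernel, choosing $\eta \sim \mathrm{dist}(z,\sigma(\mathcal{H}))$ so that the Neumann series converges, and then reading off the decay from the matrix element at $\bm{m}=\bm{m}_0$ -- is the standard Combes--Thomas proof and is exactly what that reference contains, so the approach matches.

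Two small points you handle correctly but are worth flagging. First, the lemma as literally stated (``any bounded self-adjoint operator'') is too generous: without a locality assumption on the kernel the estimate is false, and you correctly insert short-rangeness into the hypothesis, which is available from Assumption~\ref{asmp_Hamiltonian_deter_part}(ii). Second, the choice $\eta=\min\{1,\mathrm{dist}(z,\sigma(\mathcal{H}))/(2C')\}$ produces a decay rate $c\min\{1,\mathrm{dist}(z,\sigma(\mathcal{H}))\}$ rather than $\mathrm{dist}(z,\sigma(\mathcal{H}))$ itself; as you note, this is immaterial because in all uses of the lemma $z$ ranges over a bounded set (the support of $\tilde\rho$) and $\mathcal{H}$ is bounded, so $\mathrm{dist}(z,\sigma(\mathcal{H}))$ stays in a bounded window and the discrepancy is absorbed into $C$. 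The proposal is therefore correct and complete for the purposes of the paper.
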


\begin{definition}[Exponentially localized operator] \label{def_exp_loc_operator}
The operator $A\in\mathcal{B}(\mathcal{X})$ is said to be exponentially localized if its integral kernel satisfies
\begin{equation*}
\|A(\bm{n},\bm{m})\|\leq Ce^{-p|\bm{n}-\bm{m}|}
\end{equation*}
with some $p>0$.
\end{definition}

\begin{lemma} \label{lem_position_commutator}
Let $A\in\mathcal{B}(\mathcal{X})$ have exponentially localized kernels in the sense of Definition \ref{def_exp_loc_operator}. Then the commutator $[A,x_i]$ ($i=1,2$) is also exponentially localized with the estimate
\begin{equation*}
\big\|[A,x_i](\bm{n},\bm{m})\big\|\leq \frac{C}{|p|}e^{-\frac{p}{2}|\bm{n}-\bm{m}|},
\end{equation*}
where $C>0$ is independent of $p$.
\end{lemma}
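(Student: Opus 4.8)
The plan is to reduce the statement to an explicit computation of matrix elements, followed by the elementary bound $te^{-pt}\le\mathrm{(const)}\,p^{-1}e^{-pt/2}$. First I would make sense of the commutator on the dense subspace of finitely supported vectors $\psi\in\mathcal{X}$: for such $\psi$ the exponential decay of the kernel $A(\cdot,\cdot)$ forces $A\psi$ to decay exponentially, hence $A\psi\in\mathrm{Dom}(x_i)$ and $[A,x_i]\psi=(Ax_i-x_iA)\psi$ is well defined. Computing kernels directly,
\begin{equation*}
(Ax_i)(\bm{n},\bm{m})=A(\bm{n},\bm{m})\,m_i,\qquad (x_iA)(\bm{n},\bm{m})=n_i\,A(\bm{n},\bm{m}),
\end{equation*}
and since $m_i,n_i$ are scalars commuting with the matrix $A(\bm{n},\bm{m})$,
\begin{equation*}
[A,x_i](\bm{n},\bm{m})=(m_i-n_i)\,A(\bm{n},\bm{m}).
\end{equation*}

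Next I would estimate the scalar prefactor. With the paper's convention $|\bm{n}-\bm{m}|=\|\bm{n}-\bm{m}\|_1$ one has $|m_i-n_i|\le|\bm{n}-\bm{m}|$, so the hypothesis $\|A(\bm{n},\bm{m})\|\le Ce^{-p|\bm{n}-\bm{m}|}$ gives $\|[A,x_i](\bm{n},\bm{m})\|\le C\,|\bm{n}-\bm{m}|\,e^{-p|\bm{n}-\bm{m}|}$. Applying the elementary inequality $te^{-pt}\le\frac{2}{ep}\,e^{-pt/2}$, valid for all $t\ge0$ (the function $t\mapsto te^{-pt/2}$ attains its maximum $\tfrac{2}{ep}$ at $t=2/p$), with $t=|\bm{n}-\bm{m}|$ yields
\begin{equation*}
\big\|[A,x_i](\bm{n},\bm{m})\big\|\le\frac{2C}{e}\cdot\frac{1}{p}\,e^{-\frac{p}{2}|\bm{n}-\bm{m}|},
\end{equation*}
which is exactly the claimed bound with a constant $2C/e$ depending only on the constant appearing in the hypothesis on $A$, not on $p$. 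Finally, since this kernel is exponentially localized, a Schur-type test (or Lemma \ref{lem_trace_norm_kernel} together with the standard fact that an operator with absolutely summable kernel is $\ell^2$-bounded) shows that the operator defined by this kernel is bounded, and one identifies $[A,x_i]$ with the resulting bounded extension from the finitely supported vectors.

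I do not anticipate any genuine obstacle: the heart of the matter is the one-line kernel identity $[A,x_i](\bm{n},\bm{m})=(m_i-n_i)A(\bm{n},\bm{m})$ together with the optimization of $te^{-pt}$. The only point to state with a little care is the a priori domain issue for the unbounded multiplication operators $x_i$, which is disposed of by working first on finitely supported vectors and extending by density once the kernel estimate is in hand.
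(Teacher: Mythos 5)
Your proof is correct and follows the standard route; the paper itself does not reproduce a proof of this lemma but defers to an external reference (\cite[Section~2]{drouot2024bec_curvedinterfaces}), and what you wrote is exactly the argument one would find there: the kernel identity $[A,x_i](\bm{n},\bm{m})=(m_i-n_i)A(\bm{n},\bm{m})$, the bound $|m_i-n_i|\le|\bm{n}-\bm{m}|$ in the $\ell^1$ norm, and the optimization $t e^{-pt/2}\le 2/(ep)$ to absorb the linear factor at the cost of halving the decay rate and introducing the $1/p$. Your remarks on the domain issue (define $[A,x_i]$ first on finitely supported vectors, then extend by density once the kernel bound gives $\ell^2$-boundedness via a Schur test) are appropriate and harmless; no gap.
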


\begin{lemma} \label{lem_composition_kernel}
Let $A_i\in\mathcal{B}(\mathcal{X})$ ($i=1,2,\cdots,n$) have exponentially localized kernels in the sense of Definition \ref{def_exp_loc_operator}. Then the kernel of the composition operator $\Pi_{i=1}^{n}A_i$ is also exponentially localized with the estimate
\begin{equation*}
\big\|(\Pi_{i=1}^{n}A_i)(\bm{n},\bm{m})\big\|\leq \frac{C}{|p|^{2(n-1)}}e^{-\frac{p}{4}|\bm{n}-\bm{m}|},
\end{equation*}
where $C>0$ is independent of $p$.
\end{lemma}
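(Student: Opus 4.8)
Without loss of generality (replacing $p$ by $\min_i p_i$ and $C$ by $\max_i C_i$, using $e^{-p_it}\le e^{-(\min_j p_j)t}$ for $t\ge 0$) assume every factor satisfies $\|A_i(\bm x,\bm y)\|\le C e^{-p|\bm x-\bm y|}$ with a common $C$ and $p$. The plan is to expand the kernel of $\Pi_{i=1}^{n}A_i$ as an $(n-1)$-fold sum over intermediate lattice sites and then to split the available exponential decay into two halves: one half of the rate is spent, through the telescoping triangle inequality, to produce the overall factor $e^{-\frac{p}{2}|\bm n-\bm m|}$ (which is already stronger than the claimed $e^{-\frac{p}{4}|\bm n-\bm m|}$), while the other half is used to make the intermediate sums over $\mathbb Z^2$ converge. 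Each of the $n-1$ such sums contributes a factor of order $p^{-2}$, which accounts exactly for the prefactor $p^{-2(n-1)}$.

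In detail, setting $\bm k_0:=\bm n$ and $\bm k_n:=\bm m$, the hypothesis guarantees absolute convergence of
\begin{equation*}
(\Pi_{i=1}^{n}A_i)(\bm n,\bm m)=\sum_{\bm k_1,\dots,\bm k_{n-1}\in\mathbb Z^2}\ \prod_{i=1}^{n}A_i(\bm k_{i-1},\bm k_i),
\end{equation*}
and, using submultiplicativity of the Frobenius norm for the matrix products, gives $\|(\Pi_{i=1}^{n}A_i)(\bm n,\bm m)\|\le C^n\sum_{\bm k_1,\dots,\bm k_{n-1}}\prod_{i=1}^{n}e^{-p|\bm k_{i-1}-\bm k_i|}$. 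I would then write $e^{-pt}=e^{-\frac{p}{2}t}e^{-\frac{p}{2}t}$ in every factor; the bound $\sum_{i=1}^{n}|\bm k_{i-1}-\bm k_i|\ge|\bm k_0-\bm k_n|=|\bm n-\bm m|$ yields $\prod_{i=1}^{n}e^{-\frac{p}{2}|\bm k_{i-1}-\bm k_i|}\le e^{-\frac{p}{2}|\bm n-\bm m|}$, and the residual sum $\sum_{\bm k_1,\dots,\bm k_{n-1}}\prod_{i=1}^{n}e^{-\frac{p}{2}|\bm k_{i-1}-\bm k_i|}$ is handled by summing one variable at a time, starting from $\bm k_{n-1}$: after bounding the factor involving $\bm m$ by $1$ and translating, each summation produces the constant $\Sigma(p/2):=\sum_{\bm j\in\mathbb Z^2}e^{-\frac{p}{2}|\bm j|}$, so the residual sum is at most $\Sigma(p/2)^{\,n-1}$.

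It remains to estimate $\Sigma(s)=\big(\sum_{j\in\mathbb Z}e^{-s|j|}\big)^2=\big(\tfrac{1+e^{-s}}{1-e^{-s}}\big)^2$; since $1-e^{-s}\ge s/2$ for small $s$, one has $\Sigma(s)\le C_0 s^{-2}$ for $s$ in any bounded range (which covers every application in this paper, where the rate is bounded above by the norm of the relevant operator). Hence $\Sigma(p/2)^{\,n-1}\le(4C_0)^{n-1}p^{-2(n-1)}$, and combining the pieces gives $\|(\Pi_{i=1}^{n}A_i)(\bm n,\bm m)\|\le C^n(4C_0)^{n-1}p^{-2(n-1)}e^{-\frac{p}{2}|\bm n-\bm m|}\le \tfrac{C'}{p^{2(n-1)}}e^{-\frac{p}{4}|\bm n-\bm m|}$, where $C'=C'(n)$ absorbs the $n$-dependent constants and we used $e^{-\frac{p}{2}t}\le e^{-\frac{p}{4}t}$ for $t\ge 0$.

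I do not expect a deep obstacle; the statement is elementary. The one point requiring care — and hence the ``main obstacle'' — is the decay allocation: one must reserve a fixed fraction of the rate for summability rather than halving the rate at each composition. Concretely, a naive induction that composes two operators at a time using only the $n=2$ case would degrade the rate all the way to $e^{-p/2^{\,n-1}}$; keeping the rate at $p/2$ uniformly in $n$ relies on the slack in $p=\tfrac{p}{2}+\tfrac{p}{2}$. Equivalently, one may induct on $n$ with an inductive estimate that already carries rate $p/2$ and always retains a spare factor $e^{-\frac{p}{2}|\bm j-\bm m|}$ to absorb the newly introduced $\mathbb Z^2$-sum; each step then multiplies the constant by $\Sigma(p/2)\le 4C_0p^{-2}$, reproducing the $p^{-2(n-1)}$ prefactor after $n-1$ steps.
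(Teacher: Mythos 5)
Your argument is correct, and it is the standard one: the paper itself does not prove Lemma \ref{lem_composition_kernel} but defers to \cite[Section 2]{drouot2024bec_curvedinterfaces}, where the same device is used — expand the composed kernel over intermediate sites, split each rate as $p=\tfrac{p}{2}+\tfrac{p}{2}$, spend one half via the $\ell^1$ triangle inequality to extract the global factor $e^{-\frac{p}{2}|\bm n-\bm m|}$ (stronger than the stated $e^{-\frac{p}{4}|\bm n-\bm m|}$), and sum the intermediate variables one at a time, each sum contributing $\sum_{\bm j\in\mathbb Z^2}e^{-\frac{p}{2}|\bm j|}=O(p^{-2})$ exactly as in \eqref{eq_finite_volume_cyclicity_proof_2}. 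Your one caveat is worth keeping explicit: $\Sigma(s)\le C_0 s^{-2}$ (with $C_0$ independent of $s$) is only true for $s$ in a bounded range, since $\Sigma(s)\ge 1$ always; so the stated bound, with $C$ independent of $p$, should be read for $p$ bounded above, which is precisely how the lemma is invoked in the paper ($p\sim|\Im z|\le 1$ on $\mathrm{supp}(\tilde\rho)$). Modulo that reading, your proof is complete, and your closing remark about not halving the rate at each inductive composition correctly identifies why the rate stays $p/2$ (or $p/4$) uniformly in $n$.
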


\begin{lemma} \label{lem_finite_volume_cyclicity}
Suppose that $A,B\in\mathcal{B}(\mathcal{X})$ have exponentially localized kernels in the sense of Definition \ref{def_exp_loc_operator}. Then, for any $L>0$, it holds that
\begin{equation} \label{eq_finite_volume_cyclicity}
\big|\text{Tr}_{\Lambda_L}(AB-BA)\big|\leq \frac{C}{|p|^2}\Big(L^{\frac{3}{2}}+L^2e^{-\frac{p}{2}L^{\frac{1}{2}}}\Big)
\end{equation}
with $C>0$ being independent of $L$ and $p$. Consequently, the cyclicity of 
the TPUV holds in the sense that
\begin{equation*}
\mathcal{T}(AB-BA)=\lim_{L\to\infty}\frac{1}{|\Lambda_L|}\text{Tr}_{\Lambda_L}(AB-BA)=0.
\end{equation*}
\end{lemma}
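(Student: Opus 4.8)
The plan is to reduce $\text{Tr}_{\Lambda_L}(AB-BA)$ to a sum running only over pairs of sites that straddle $\partial\Lambda_L$, and then to control this boundary sum by a collar/bulk decomposition of $\Lambda_L$ with collar width $\sqrt L$. First I would expand the two traces in terms of the matrix-valued kernels,
\[
\text{Tr}_{\Lambda_L}(AB)=\sum_{\bm{n}\in\Lambda_L}\sum_{\bm{k}\in\mathbb{Z}^2}\mathrm{tr}\big(A(\bm{n},\bm{k})B(\bm{k},\bm{n})\big),\qquad
\text{Tr}_{\Lambda_L}(BA)=\sum_{\bm{n}\in\Lambda_L}\sum_{\bm{k}\in\mathbb{Z}^2}\mathrm{tr}\big(B(\bm{n},\bm{k})A(\bm{k},\bm{n})\big).
\]
For fixed $L$ the exponential decay of the kernels makes both double sums absolutely convergent (each is dominated by $C^2|\Lambda_L|\sum_{\bm{j}\in\mathbb{Z}^2}e^{-2p|\bm{j}|}<\infty$), so I may use cyclicity of the $\mathbb{C}^{d\times d}$-trace together with a relabelling of the summation indices in the second sum, rewriting $\text{Tr}_{\Lambda_L}(BA)=\sum_{\bm{k}\in\Lambda_L}\sum_{\bm{n}\in\mathbb{Z}^2}\mathrm{tr}\big(A(\bm{n},\bm{k})B(\bm{k},\bm{n})\big)$. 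Subtracting, the contributions with $\bm{n},\bm{k}\in\Lambda_L$ cancel, leaving
\[
\text{Tr}_{\Lambda_L}(AB-BA)=\sum_{\substack{\bm{n}\in\Lambda_L\\ \bm{k}\notin\Lambda_L}}\mathrm{tr}\big(A(\bm{n},\bm{k})B(\bm{k},\bm{n})\big)-\sum_{\substack{\bm{n}\notin\Lambda_L\\ \bm{k}\in\Lambda_L}}\mathrm{tr}\big(A(\bm{n},\bm{k})B(\bm{k},\bm{n})\big),
\]
whence, using $|\mathrm{tr}(XY)|\le\|X\|\,\|Y\|$ and the decay hypothesis,
\[
\big|\text{Tr}_{\Lambda_L}(AB-BA)\big|\le 2C^2\sum_{\bm{n}\in\Lambda_L,\ \bm{k}\notin\Lambda_L}e^{-2p|\bm{n}-\bm{k}|}.
\]

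Next I would estimate this geometric sum by splitting $\Lambda_L=\Lambda_L^{\mathrm{near}}\sqcup\Lambda_L^{\mathrm{far}}$, where $\Lambda_L^{\mathrm{near}}=\{\bm{n}\in\Lambda_L:\mathrm{dist}(\bm{n},\mathbb{Z}^2\setminus\Lambda_L)<\sqrt{L}\}$. The collar $\Lambda_L^{\mathrm{near}}$ is a strip of width $O(\sqrt L)$ about a boundary of length $O(L)$, hence contains at most $CL^{3/2}$ sites; for each such $\bm{n}$ I bound $\sum_{\bm{k}\notin\Lambda_L}e^{-2p|\bm{n}-\bm{k}|}\le\sum_{\bm{j}\in\mathbb{Z}^2}e^{-2p|\bm{j}|}=\coth^2 p$, which is $\le C/p^{2}$ (with $C$ absolute once $p$ is restricted to a bounded range, the only regime that matters), so this part contributes at most $(C/p^2)L^{3/2}$. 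For $\bm{n}\in\Lambda_L^{\mathrm{far}}$ and $\bm{k}\notin\Lambda_L$ one has $|\bm{n}-\bm{k}|\ge\sqrt L$; writing $e^{-2p|\bm{n}-\bm{k}|}\le e^{-\frac p2\sqrt L}e^{-\frac{3p}{2}|\bm{n}-\bm{k}|}$ gives $\sum_{\bm{k}\notin\Lambda_L}e^{-2p|\bm{n}-\bm{k}|}\le (C/p^2)e^{-\frac p2\sqrt L}$, and summing over the at most $|\Lambda_L|\le CL^2$ such sites contributes at most $(C/p^2)L^2 e^{-\frac p2\sqrt L}$. Adding the two bounds yields \eqref{eq_finite_volume_cyclicity}. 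The vanishing of the commutator under the TPUV is then immediate: dividing \eqref{eq_finite_volume_cyclicity} by $|\Lambda_L|\asymp L^2$ gives $|\Lambda_L|^{-1}\big|\text{Tr}_{\Lambda_L}(AB-BA)\big|\le C\big(L^{-1/2}+e^{-\frac p2\sqrt L}\big)\to 0$ as $L\to\infty$.

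I do not expect a genuine obstacle here — this is the standard ``boundary-term'' mechanism behind cyclicity of the trace per unit volume. The only points requiring some care are the bookkeeping in the cancellation step (justifying the index relabelling by absolute convergence at fixed $L$) and the lattice-point count for $\Lambda_L^{\mathrm{near}}$. The collar width $\sqrt L$ is chosen precisely to balance the polynomial collar size $L^{3/2}$ against the exponentially small bulk contribution $L^2 e^{-\frac p2\sqrt L}$; any width $\ell(L)$ with $\ell(L)\to\infty$ and $\ell(L)=o(L)$ would already force $\mathcal{T}(AB-BA)=0$, but the choice $\ell(L)=\sqrt L$ is what produces the precise quantitative estimate \eqref{eq_finite_volume_cyclicity} as stated.
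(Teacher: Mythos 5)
Your proof is correct and follows essentially the same route as the paper's: reduce the commutator trace to a boundary sum over pairs of sites straddling $\partial\Lambda_L$, then split via a collar of width $\sqrt L$ to balance the $O(L^{3/2})$ near-boundary contribution against the exponentially small far contribution. The only cosmetic difference is that you place the collar inside $\Lambda_L$ (splitting the $\bm{n}$-sum) whereas the paper places it in $\Lambda_L^{c}$ (splitting the $\bm{m}$-sum); the bookkeeping and the resulting bound are identical.
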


\begin{proof}[Proof of Lemma \ref{lem_finite_volume_cyclicity}]
Note that, since $\Lambda_L$ is finite-dimensional, both $\mathbbm{1}_{\Lambda_L}AB\mathbbm{1}_{\Lambda_L}$ and $\mathbbm{1}_{\Lambda_L}BA\mathbbm{1}_{\Lambda_L}$ are trace-class by Definition \ref{def_exp_loc_operator}, which allows us to calculate their traces in position basis
\begin{equation*}
\begin{aligned}
\text{Tr}_{\Lambda_L}(AB-BA)&=\text{Tr}_{\Lambda_L}(AB)-\text{Tr}_{\Lambda_L}(BA) \\
&=\text{tr}\sum_{\bm{n}\in\Lambda_L}\sum_{\bm{m}\in\Lambda_L^{c}}A(\bm{n},\bm{m})B(\bm{m},\bm{n})-\text{tr}\sum_{\bm{n}\in\Lambda_L}\sum_{\bm{m}\in\Lambda_L^{c}}B(\bm{n},\bm{m})A(\bm{m},\bm{n}).
\end{aligned}
\end{equation*}
Hence, it suffices to show that both of these two sums are bounded by the right side of \eqref{eq_finite_volume_cyclicity}. We only prove it for the first sum, while the other one can be treated similarly. The idea is to decompose the sum over $\bm{m}$ into two regions:
\begin{equation} \label{eq_finite_volume_cyclicity_proof_1}
\begin{aligned}
\big| \text{tr}\sum_{\bm{n}\in\Lambda_L}\sum_{\bm{m}\in\Lambda_L^{c}}A(\bm{n},\bm{m})B(\bm{m},\bm{n}) \big|
&\leq d\sum_{\bm{n}\in\Lambda_L}\sum_{\bm{m}\in\Lambda_L^{c}}\|A(\bm{n},\bm{m})\| \|B(\bm{m},\bm{n})\| \\
&\leq C\sum_{\bm{n}\in\Lambda_L}\sum_{\bm{m}\in\Lambda_L^{c}}e^{-p|\bm{n}-\bm{m}|} \\
&= C\Big(\sum_{\bm{n}\in\Lambda_L}\sum_{\bm{m}\in \Lambda_{L+L^{1/2}}\cap\Lambda_L^{c}}e^{-p|\bm{n}-\bm{m}|} + \sum_{\bm{n}\in\Lambda_L}\sum_{\bm{m}\in (\Lambda_{L+L^{1/2}})^c}e^{-p|\bm{n}-\bm{m}|} \Big).
\end{aligned}
\end{equation}
For the first sum, we first sum over $\bm{n}$ and utilize the elementary bound
\begin{equation} \label{eq_finite_volume_cyclicity_proof_2}
\sum_{\bm{m}\in\mathbb{Z}^2}e^{-p|\bm{n}-\bm{m}|}\leq \frac{C}{|p|^2} \quad \text{for all }\bm{n}\in\mathbb{Z}^2.
\end{equation}
Hence, we see that
\begin{equation} \label{eq_finite_volume_cyclicity_proof_3}
\sum_{\bm{n}\in\Lambda_L}\sum_{\bm{m}\in \Lambda_{L+L^{1/2}}\cap\Lambda_L^{c}}e^{-p|\bm{n}-\bm{m}|}\leq \sum_{\bm{m}\in \Lambda_{L+L^{1/2}}\cap\Lambda_L^{c}} \frac{C}{|p|^2} \leq \frac{C}{|p|^2}L^{\frac{3}{2}}.
\end{equation}
For the second sum in \eqref{eq_finite_volume_cyclicity_proof_1}, we note that, for all $\bm{n}\in\Lambda_L$ and $\bm{m}\in (\Lambda_{L+L^{1/2}})^c$,
\begin{equation*}
\begin{array}{lll}
\ds
e^{-p|\bm{n}-\bm{m}|} &\leq&\ds  e^{-\frac{p}{2}\text{dist}(\bm{m},\Lambda_L)} e^{-\frac{p}{2}|\bm{n}-\bm{m}|} \\
\nm
&\leq & \ds  e^{-\frac{p}{2}L^{\frac{1}{2}}}e^{-\frac{p}{2}|\bm{n}-\bm{m}|}.
\end{array}
\end{equation*}
Then we perform the calculation as in the first part (with now summing over $\bm{m}$ first) and obtain 
\begin{equation} \label{eq_finite_volume_cyclicity_proof_4}
\begin{array}{lll}
\ds 
\sum_{\bm{n}\in\Lambda_L}\sum_{\bm{m}\in (\Lambda_{L+L^{1/2}})^c}e^{-p|\bm{n}-\bm{m}|}
&\leq & \ds e^{-\frac{p}{2}L^{\frac{1}{2}}}\sum_{\bm{n}\in\Lambda_L}\sum_{\bm{m}\in (\Lambda_{L+L^{1/2}})^c}e^{-\frac{p}{2}|\bm{n}-\bm{m}|}\\
\nm
&\leq&\ds  e^{-\frac{p}{2}L^{\frac{1}{2}}}\sum_{\bm{n}\in\Lambda_L}\frac{C}{|p|^2}\\
\nm
&\leq&\ds  \frac{C}{|p|^2} L^2e^{-\frac{p}{2}L^{\frac{1}{2}}}.
\end{array}
\end{equation}
Hence, the proof is complete by combining \eqref{eq_finite_volume_cyclicity_proof_1}, \eqref{eq_finite_volume_cyclicity_proof_3} and \eqref{eq_finite_volume_cyclicity_proof_4}.
\end{proof}

We also need to deal with exponentially localized kernels in a weaker sense, as being already seen in Proposition \ref{prop_localization_ground_state}. Specifically, we introduce the following definition. 
\begin{definition}
$A\in \mathcal{B}(\mathcal{X})$ is said to be $(\nu,\alpha,D)$ localized if
\begin{equation*}
\sum_{\bm{n},\bm{m}\in\mathbb{Z}^2}\|A(\bm{n},\bm{m})\|(1+|n|^{-\nu})e^{\alpha |\bm{m}-\bm{n}|}\leq D <\infty.
\end{equation*}
\end{definition}
\begin{lemma} \label{lem_composition_weak_localized}
Suppose that $A_i\in\mathcal{B}(\mathcal{X})$ is $(\nu_i,\alpha_i,D_i)$ localized ($i=1,2$). Then the composition $A_1A_2$ is $(\nu,\alpha,D)$ localized, where
\begin{equation*}
\nu=\max\{2\mu_1,2\mu_2\},\quad \alpha=\min\{\alpha_1/2,\alpha_2/2,1\},\quad D=\frac{CD_1D_2}{\alpha^{\frac{\gamma}{2}}}
\end{equation*}
with $C>0$ depending only on $\nu$.
\end{lemma}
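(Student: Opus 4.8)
The plan is to prove Lemma~\ref{lem_composition_weak_localized} by a direct estimate on the kernel of the composition, splitting the spatial sum according to which of the two factors is ``responsible'' for the decay and for the growth near the origin. Write $(A_1A_2)(\bm{n},\bm{m})=\sum_{\bm{k}\in\mathbb{Z}^2}A_1(\bm{n},\bm{k})A_2(\bm{k},\bm{m})$, so that
\begin{equation*}
\sum_{\bm{n},\bm{m}}\|(A_1A_2)(\bm{n},\bm{m})\|(1+|n|^{-\nu})e^{\alpha|\bm{m}-\bm{n}|}\leq \sum_{\bm{n},\bm{m},\bm{k}}\|A_1(\bm{n},\bm{k})\|\,\|A_2(\bm{k},\bm{m})\|(1+|n|^{-\nu})e^{\alpha|\bm{m}-\bm{n}|}.
\end{equation*}
First I would handle the exponential factor: by the triangle inequality $|\bm{m}-\bm{n}|\leq|\bm{n}-\bm{k}|+|\bm{k}-\bm{m}|$, and since $\alpha\leq\alpha_1/2$ and $\alpha\leq\alpha_2/2$ we have $e^{\alpha|\bm{m}-\bm{n}|}\leq e^{\frac{\alpha_1}{2}|\bm{n}-\bm{k}|}e^{\frac{\alpha_2}{2}|\bm{k}-\bm{m}|}$. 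Each factor $A_i$ carries a full $e^{\alpha_i|\cdot|}$, so half of that exponent is consumed to dominate $e^{\alpha|\bm{m}-\bm{n}|}$ and the remaining half $e^{\frac{\alpha_i}{2}|\cdot|}$ stays available; the residual exponential weight is what makes the geometric sums in the intermediate variable $\bm{k}$ converge, contributing the factor $\alpha^{-\gamma/2}$ (here $\gamma$ should be the relevant dimensional constant, presumably $\gamma=2$ or similar — I would track the exact exponent coming from $\sum_{\bm{k}}e^{-\beta|\bm{k}|}\leq C\beta^{-2}$ as in \eqref{eq_finite_volume_cyclicity_proof_2}).

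Next I would deal with the polynomial-type weight $1+|n|^{-\nu}$. The subtlety is that this weight involves only the \emph{first} index $\bm{n}$ of the composed kernel, whereas the localization bound on $A_2$ involves weights in \emph{its} first index $\bm{k}$, not in $\bm{n}$. The key elementary inequality is that, for the pair $(\bm{n},\bm{k})$ appearing in $A_1$, one can bound $(1+|n|^{-\nu})$ in terms of $(1+|k|^{-\nu_1})$ up to a factor polynomial in $|\bm{n}-\bm{k}|$: roughly, if $|\bm{k}|$ is comparable to or larger than $|\bm{n}|$ the weight is harmless, while if $|\bm{k}|\ll|\bm{n}|$ then $|\bm{n}-\bm{k}|$ is comparable to $|\bm{n}|$ and the polynomial factor $|\bm{n}-\bm{k}|^{\nu}$ absorbs $|n|^{-\nu}\cdot|n|^\nu$. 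That polynomial growth is then swallowed by the residual exponential $e^{\frac{\alpha_1}{2}|\bm{n}-\bm{k}|}$ from $A_1$ (possibly after shrinking $\alpha_1/2$ slightly to $\alpha_1/4$, but the statement only requires $\alpha=\min\{\alpha_1/2,\alpha_2/2,1\}$, so one must be a little careful — I would instead peel off the polynomial using $|\bm{n}-\bm{k}|^{\nu}e^{-\frac{\alpha_1}{2}|\bm{n}-\bm{k}|}\leq C\alpha^{-\nu}e^{-\frac{\alpha_1}{4}|\bm{n}-\bm{k}|}$ and note that the $\frac{\alpha_1}{4}$ leftover is still enough to dominate both $\alpha\leq\alpha_1/2$... which it is \emph{not}; so more likely the intended reading is $\alpha\leq\alpha_i/2$ with the \emph{full} $\alpha_i/2$ only used against $e^{\alpha|\bm{m}-\bm{n}|}$ and the polynomial absorbed using a separate sliver, meaning $\nu$ and the constant $C(\nu)$ are chosen to make this work; I would follow whichever bookkeeping the paper's constants dictate). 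After this reduction, the weight $(1+|n|^{-\nu})$ is replaced, up to constants and a convergent exponential sum over $\bm{n}-\bm{k}$, by $(1+|k|^{-\nu_1})$, exactly the weight that $A_2$'s localization bound (in its first index $\bm{k}$) provides.

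The remaining step is then purely mechanical: after these reductions the triple sum factors as
\begin{equation*}
\Big(\sum_{\bm{n}}\text{(convergent in }\bm{n}-\bm{k})\Big)\cdot\Big(\sum_{\bm{k},\bm{m}}\|A_2(\bm{k},\bm{m})\|(1+|k|^{-\nu_1})e^{\alpha_2|\bm{k}-\bm{m}|}\cdot\text{(convergent in }\bm{k}-\bm{m})\Big)\leq C(\nu)\,\alpha^{-\gamma/2}D_1D_2,
\end{equation*}
where one factor of $D_1$ comes from summing $\|A_1(\bm{n},\bm{k})\|e^{\alpha_1|\bm{n}-\bm{k}|}(1+|n|^{-\nu_1})$ over $\bm{n}$ after fixing $\bm{k}$ (bounded by $D_1$ by definition), and one factor of $D_2$ similarly. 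I expect the \textbf{main obstacle} to be the second step — correctly transferring the origin-weight from the composed index $\bm{n}$ to the intermediate index $\bm{k}$ while keeping the exponential rate at exactly $\min\{\alpha_1/2,\alpha_2/2,1\}$ and making the dependence of the constant on $\alpha$ (the $\alpha^{-\gamma/2}$) and on $\nu$ explicit; everything else is routine geometric-sum bookkeeping of the type already carried out in the proof of Lemma~\ref{lem_finite_volume_cyclicity}. (There is also a small typo to watch: the statement writes $\nu=\max\{2\mu_1,2\mu_2\}$ with $\mu_i$ undefined — presumably $\mu_i=\nu_i$ — and $\gamma$ is not introduced; I would state at the outset that $\gamma$ denotes the constant such that $\sum_{\bm{x}\in\mathbb{Z}^2}e^{-\beta|\bm{x}|}\leq C\beta^{-\gamma}$, i.e. $\gamma=2$.)
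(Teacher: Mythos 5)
Your broad strategy---expand the kernel, split the exponential via the triangle inequality, move polynomial weight onto the intermediate index---is the right one and is essentially the paper's, but your two ``flagged'' places are exactly where your sketch goes wrong, and the paper handles both differently.

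\textbf{Weight transfer.} The paper does \emph{not} transfer the full weight from $\bm{n}$ to the intermediate index; it symmetrizes. The first step is the elementary inequality $(1+|\bm{n}|)^{-\frac{\nu}{2}}e^{-\frac{\alpha}{2}|\bm{n}-\bm{k}|}\leq \frac{C}{\alpha^{\nu/2}}(1+|\bm{k}|)^{-\frac{\nu}{2}}$, which dominates $(1+|\bm{n}|)^{-\nu}e^{\alpha|\bm{n}-\bm{k}|}$ by $\frac{C}{\alpha^{\nu/2}}(1+|\bm{n}|)^{-\nu/2}(1+|\bm{k}|)^{-\nu/2}e^{2\alpha|\bm{n}-\bm{k}|}$. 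This \emph{keeps} a weight $(1+|\bm{n}|)^{-\nu/2}$ on the first index of $A_1$ while also producing one for the $A_2$ factor. Your proposal to replace $(1+|\bm{n}|)^{-\nu}$ by $(1+|\bm{k}|)^{-\nu_1}$ (also: the correct exponent here would be $\nu_2$, since $\bm{k}$ is the first index of $A_2$, not $A_1$) discards the weight on $\bm{n}$ entirely. Since the weight is a factor $\leq 1$, dropping it makes the $A_1$-sum \emph{larger}, and the $(\nu_1,\alpha_1,D_1)$ hypothesis controls only the \emph{weighted} sum $\sum_{\bm{n},\bm{k}}\|A_1(\bm{n},\bm{k})\|(1+|\bm{n}|)^{-\nu_1}e^{\alpha_1|\bm{n}-\bm{k}|}$, not the unweighted one. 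The choice $\nu=2\max\{\nu_1,\nu_2\}$ (your guess $\mu_i=\nu_i$ is correct) is precisely what makes $\nu/2\geq\nu_1,\nu_2$ so the halved weight still suffices for each factor.

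\textbf{Decoupling the intermediate sum.} Your statement ``the triple sum factors'' and the parenthetical ``summing $\ldots$ over $\bm{n}$ after fixing $\bm{k}$ (bounded by $D_1$)'' is not how the argument closes, and your displayed inequality is self-inconsistent: the first parenthesis $\big(\sum_{\bm{n}}\cdot\big)$ depends on $\bm{k}$ but stands outside the $\bm{k}$-sum, and it contains no $A_1$ so it cannot produce $D_1$. The paper's decoupling is explicit: apply Cauchy-Schwarz in the intermediate variable,
\begin{equation*}
\sum_{\bm{m}}\|A_1(\bm{n},\bm{m})\|\|A_2(\bm{m},\bm{k})\|e^{2\alpha(|\bm{n}-\bm{m}|+|\bm{m}-\bm{k}|)}
\leq \Big(\sum_{\bm{m}}\|A_1(\bm{n},\bm{m})\|^2e^{4\alpha|\bm{n}-\bm{m}|}\Big)^{\frac12}\Big(\sum_{\bm{m}}\|A_2(\bm{m},\bm{k})\|^2e^{4\alpha|\bm{m}-\bm{k}|}\Big)^{\frac12},
\end{equation*}
and then the $\ell^2\subseteq\ell^1$ bound $\big(\sum_{\bm{m}}|a_{\bm{m}}|^2\big)^{1/2}\leq\sum_{\bm{m}}|a_{\bm{m}}|$ to remove the squares, after which the full expression factors cleanly into a product of two independent double sums, each of which is majorized by the corresponding $D_i$. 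Some such decoupling is unavoidable; without it the partial sums over $\bm{n}$ and over $\bm{m}$ are coupled through $\bm{k}$ and the hypotheses (which control only full double sums) cannot be invoked.

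One small correction on your side remark: the $\gamma$ in the statement is most plausibly a typo for $\nu$, not the dimension constant $2$---the only $\alpha$-dependence in the paper's argument comes from the weight-symmetrization inequality above, which produces $\alpha^{-\nu/2}$.
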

\begin{proof}
For $\nu>0$ and $\alpha\in (0,1)$, we note that
\begin{equation*}
\begin{aligned}
&\sum_{\bm{n},\bm{k}\in\mathbb{Z}^2}(1+|\bm{n}|)^{-\nu}e^{\alpha|\bm{n-\bm{k}}|}\big\|\sum_{\bm{m}\in\mathbb{Z}^2}A_1(\bm{n},\bm{m})A_2(\bm{m},\bm{k})\big\| \\
&\leq \frac{C}{\alpha^{\frac{\nu}{2}}}\sum_{\bm{n},\bm{k}\in\mathbb{Z}^2}(1+|\bm{n}|)^{-\frac{\nu}{2}}(1+|\bm{k}|)^{-\frac{\nu}{2}}e^{2\alpha|\bm{n-\bm{k}}|}\big\|\sum_{\bm{m}\in\mathbb{Z}^2} A_1(\bm{n},\bm{m})A_2(\bm{m},\bm{k})\big\| \\
&\leq \frac{C}{\alpha^{\frac{\nu}{2}}}\sum_{\bm{n},\bm{k}\in\mathbb{Z}^2}(1+|\bm{n}|)^{-\frac{\nu}{2}}(1+|\bm{k}|)^{-\frac{\nu}{2}}e^{2\alpha|\bm{n-\bm{m}}|}e^{2\alpha|\bm{m-\bm{k}}|}\sum_{\bm{m}\in\mathbb{Z}^2} \|A_1(\bm{n},\bm{m})\| \|A_2(\bm{m},\bm{k})\|,
\end{aligned}
\end{equation*}
where the first inequality is derived by noticing that
\begin{equation*}
(1+|\bm{n}|)^{-\frac{\nu}{2}}e^{-\frac{\alpha}{2}|\bm{n}-\bm{k}|}\leq \frac{C}{\alpha^{\frac{\nu}{2}}}(1+|\bm{k}|)^{-\frac{\nu}{2}}
\end{equation*}
with $C>0$ depending only on $\gamma$, and the second follows from the triangle inequality. Now, we apply the Schwarz inequality and see that
\begin{equation*}
\begin{aligned}
&\sum_{\bm{n},\bm{k}\in\mathbb{Z}^2}(1+|\bm{n}|)^{-\nu}e^{\alpha|\bm{n-\bm{k}}|}\big\|\sum_{\bm{m}\in\mathbb{Z}^2}A_1(\bm{n},\bm{m})A_2(\bm{m},\bm{k})\big\| \\
&\leq \frac{C}{\alpha^{\frac{\gamma}{2}}}\sum_{\bm{n},\bm{k}\in\mathbb{Z}^2}(1+|\bm{n}|)^{-\frac{\nu}{2}}(1+|\bm{k}|)^{-\frac{\nu}{2}}
\Big(\sum_{\bm{m}\in\mathbb{Z}^2} \|A_1(\bm{n},\bm{m})\|^2 e^{4\alpha|\bm{n-\bm{m}}|}\Big)^{\frac{1}{2}}
\Big(\sum_{\bm{m}\in\mathbb{Z}^2}\|A_2(\bm{m},\bm{k})\|^2 e^{4\alpha|\bm{m-\bm{k}}|}\Big)^{\frac{1}{2}} \\
&\leq \frac{C}{\alpha^{\frac{\gamma}{2}}}
\Big(\sum_{\bm{n},\bm{m}\in\mathbb{Z}^2} \|A_1(\bm{n},\bm{m})\| (1+|\bm{n}|)^{-\frac{\nu}{2}}e^{2\alpha|\bm{n-\bm{m}}|}\Big)
\Big(\sum_{\bm{m},\bm{k}\in\mathbb{Z}^2}\|A_2(\bm{m},\bm{k})\|(1+|\bm{k}|)^{-\frac{\nu}{2}} e^{2\alpha|\bm{m-\bm{k}}|}\Big),
\end{aligned}
\end{equation*}
where the second inequality follows from $(\sum|a_i|)^{\frac{1}{2}}\leq \sum|a_i|^{\frac{1}{2}}$. This concludes the proof by setting $\nu=\max\{2\mu_1,2\mu_2\}$, $\alpha=\min\{\alpha_1/2,\alpha_2/2,1\}$.
\end{proof}

\subsection{Covariant Operators and Trace Per Unit of Volume}
\label{sec_covariant_tpuv}

Since the random Hamiltonian $\mathcal{H}_{\omega}$ is ergodic, its covariant property corresponding to the spatial translation is of critical importance in the study of the BEC principle. Here, we briefly sketch some basic properties of covariant operators and refer the interested reader to \cite{Bellissard94non_commutative_QHE,Avron1994charge} and the references therein for more details.

\begin{definition}[Covariant operator]
\label{def_covariance}
$A_{\omega}\in\mathcal{B}(\mathcal{X})$ is said to be covariant if it is measurable in $\omega\in \Omega$ and satisfies
\begin{equation} \label{eq_ergodicity_1}
\mathcal{A}_{T_{\bm{k}}\omega}=\mathcal{U}_{\bm{k}}\mathcal{A}_{\omega}\mathcal{U}_{\bm{k}}^{*},
\end{equation}
where $\mathcal{U}_{\bm{k}}:\psi(\cdot)\mapsto \psi(\cdot -\bm{k})$ is the translation map on $\mathcal{X}$ and $T$ is the (ergodic) action of the group $\mathbb{Z}^2$ on the probability space.
\end{definition}
In our set-up, i) the random potential $V_{\omega}$, ii) Hamiltonian $\mathcal{H}_{\omega}$, iii) the operator $f(\mathcal{H}_{\omega})$ for all bounded measurable functions $f$, and iv) commutators $[\mathcal{H}_{\omega},x_i]$ are all covariant (cf. \cite[Lemma 4.5]{kirsch2007invitation} and \cite{Schulz-Baldes01homotopy}).

The following result follows directly from the Birkhoff ergodic theorem (cf. \cite[Theorem 4.2]{kirsch2007invitation}).
\begin{lemma} \label{lem_covariant_TPUV_expectation}
Suppose that $A_{\omega}\in\mathcal{B}(\mathcal{X})$ is covariant and satisfies
\begin{equation*}
    \mathbb{E}\big(|A_{\omega}(0,0)| \big)<\infty .
\end{equation*}
Then
\begin{equation*}
\mathcal{T}(A_{\omega})=\mathbb{E}(A_{\omega}(0,0))
\end{equation*}
holds $\mathbb{P}-$almost surely. Here, $\mathcal{T}$ is the trace per unit of volume defined in \eqref{def:t}.
\end{lemma}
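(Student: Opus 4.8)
The plan is to reduce the statement to the multidimensional pointwise (Birkhoff) ergodic theorem for the $\mathbb{Z}^2$-action $T=\{T_{\bm{k}}\}_{\bm{k}\in\mathbb{Z}^2}$ on $(\Omega,\mathbb{P})$. Since $\mathbbm{1}_{\Lambda_L}A_\omega\mathbbm{1}_{\Lambda_L}$ acts on the finite-dimensional space $\mathcal{X}_L$, its trace can be computed in the position basis, giving
$$\frac{1}{|\Lambda_L|}\,\text{Tr}_{\Lambda_L}(A_\omega)=\frac{1}{|\Lambda_L|}\sum_{\bm{n}\in\Lambda_L}\text{tr}\big(A_\omega(\bm{n},\bm{n})\big),$$
where $\text{tr}$ is the trace on $\mathbb{C}^{d\times d}$. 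The first step is to rewrite the covariance relation \eqref{eq_ergodicity_1} at the level of kernels: using $\mathcal{U}_{\bm{k}}\mathbbm{1}_{\{\bm{m}\}}=\mathbbm{1}_{\{\bm{m}+\bm{k}\}}$, \eqref{eq_ergodicity_1} yields, for all $\bm{n},\bm{m},\bm{k}\in\mathbb{Z}^2$,
$$A_{T_{\bm{k}}\omega}(\bm{n}+\bm{k},\bm{m}+\bm{k})=\langle\mathcal{U}_{\bm{k}}\mathbbm{1}_{\{\bm{n}\}},A_{T_{\bm{k}}\omega}\,\mathcal{U}_{\bm{k}}\mathbbm{1}_{\{\bm{m}\}}\rangle=\langle\mathbbm{1}_{\{\bm{n}\}},\mathcal{U}_{\bm{k}}^{*}A_{T_{\bm{k}}\omega}\mathcal{U}_{\bm{k}}\mathbbm{1}_{\{\bm{m}\}}\rangle=A_\omega(\bm{n},\bm{m}),$$
and, specializing to $\bm{m}=\bm{n}$ and $\bm{k}=-\bm{n}$, $A_\omega(\bm{n},\bm{n})=A_{T_{-\bm{n}}\omega}(0,0)$.

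Setting $g(\omega):=\text{tr}\big(A_\omega(0,0)\big)$, the average above becomes the ergodic average $\frac{1}{|\Lambda_L|}\sum_{\bm{n}\in\Lambda_L}g(T_{-\bm{n}}\omega)$. Next I would verify $g\in L^1(\Omega,\mathbb{P})$: measurability of $\omega\mapsto A_\omega$ (part of Definition \ref{def_covariance}) makes $g$ measurable, and $|g(\omega)|\le\sqrt{d}\,\|A_\omega(0,0)\|$ together with the integrability hypothesis $\mathbb{E}(|A_\omega(0,0)|)<\infty$ (and equivalence of norms on $\mathbb{C}^{d\times d}$) gives $\mathbb{E}(|g|)<\infty$. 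Since the boxes $\{\Lambda_L\}$ form a F{\o}lner sequence for the $\mathbb{Z}^2$-action $T$, the multidimensional pointwise ergodic theorem (cf. \cite[Theorem 4.2]{kirsch2007invitation}) gives that, for $\mathbb{P}$-a.e. $\omega$, these averages converge, and ergodicity of $T$ identifies the limit with the deterministic constant $\mathbb{E}(g)=\mathbb{E}\big(\text{tr}(A_\omega(0,0))\big)$. This at once proves that the limit defining $\mathcal{T}(A_\omega)$ exists almost surely and equals $\mathbb{E}(\text{tr}(A_\omega(0,0)))$, which is the claimed identity.

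There is no genuine obstacle here; the only points that need a little care are the index/shift bookkeeping in deriving $A_\omega(\bm{n},\bm{n})=A_{T_{-\bm{n}}\omega}(0,0)$ from \eqref{eq_ergodicity_1}, and invoking the correct multidimensional (F{\o}lner-averaged over the boxes $\Lambda_L$) form of Birkhoff's theorem rather than its one-dimensional statement, observing that the single exceptional $\mathbb{P}$-null set depends only on $g$, hence on $A$, and not on $L$.
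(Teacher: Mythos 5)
Your proof is correct and coincides with the argument the paper intends: the paper gives no proof but directly cites the multidimensional Birkhoff ergodic theorem (\cite[Theorem 4.2]{kirsch2007invitation}), and what you have written is precisely the standard reduction that reference supplies — passing to the diagonal kernel, using covariance to write $A_\omega(\bm{n},\bm{n})=A_{T_{-\bm{n}}\omega}(0,0)$, checking $L^1$-integrability of $g(\omega)=\operatorname{tr}A_\omega(0,0)$, and applying the F{\o}lner/Birkhoff theorem for the ergodic $\mathbb{Z}^2$-shift.
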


We also need the following result that justifies the cyclicity of $\mathcal{T}$ on covariant operators (cf. \cite[Lemma 3.21]{Klein2005linear_disorder}).
\begin{lemma} \label{lem_covariant_TPUV_cyclicity}
Suppose that $A_{\omega,i}\in\mathcal{B}(\mathcal{X})$ are covariant and
\begin{equation*}
\mathbb{E}\big(\|A_{\omega,i}\cdot\mathbbm{1}_{\{\bm{n}=0\}}\|^2_{HS}\big)<\infty
\end{equation*}
for $i=1,2$, where $\|\cdot\|_{HS}$ denotes the Hilbert-Schmidt norm. Then
\begin{equation*}
    \mathcal{T}(A_{\omega,1}A_{\omega,2})=\mathcal{T}(A_{\omega,2}A_{\omega,1})
\end{equation*}
holds $\mathbb{P}-$almost surely.
\end{lemma}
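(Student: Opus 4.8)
The plan is to reduce the almost-sure identity to a single deterministic identity between two expectations, using Lemma~\ref{lem_covariant_TPUV_expectation}, and then to prove that identity by exploiting the covariance relation \eqref{eq_ergodicity_1} termwise together with the cyclicity of the $d\times d$ matrix trace. First I would note that both products $A_{\omega,1}A_{\omega,2}$ and $A_{\omega,2}A_{\omega,1}$ are again covariant: they are measurable in $\omega$, and since $\mathcal{U}_{\bm{k}}$ is unitary, $A_{T_{\bm{k}}\omega,1}A_{T_{\bm{k}}\omega,2}=\mathcal{U}_{\bm{k}}A_{\omega,1}\mathcal{U}_{\bm{k}}^{*}\mathcal{U}_{\bm{k}}A_{\omega,2}\mathcal{U}_{\bm{k}}^{*}=\mathcal{U}_{\bm{k}}(A_{\omega,1}A_{\omega,2})\mathcal{U}_{\bm{k}}^{*}$. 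Hence, once the integrability hypothesis of Lemma~\ref{lem_covariant_TPUV_expectation} is verified for these two products, that lemma gives, $\mathbb{P}$-almost surely,
\begin{equation*}
\mathcal{T}(A_{\omega,1}A_{\omega,2})=\mathbb{E}\big(\text{tr}\,(A_{\omega,1}A_{\omega,2})(0,0)\big),\qquad \mathcal{T}(A_{\omega,2}A_{\omega,1})=\mathbb{E}\big(\text{tr}\,(A_{\omega,2}A_{\omega,1})(0,0)\big),
\end{equation*}
and it remains only to identify the two right-hand sides.

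For the integrability I would write $(A_{\omega,1}A_{\omega,2})(0,0)=\sum_{\bm{m}\in\mathbb{Z}^2}A_{\omega,1}(0,\bm{m})A_{\omega,2}(\bm{m},0)$, apply Cauchy--Schwarz in $\ell^2(\mathbb{Z}^2)$ for fixed $\omega$ and then Cauchy--Schwarz in $\mathbb{E}$, to obtain
\begin{equation*}
\mathbb{E}\big(\|(A_{\omega,1}A_{\omega,2})(0,0)\|\big)\leq\Big(\mathbb{E}\sum_{\bm{m}}\|A_{\omega,1}(0,\bm{m})\|^{2}\Big)^{1/2}\Big(\mathbb{E}\sum_{\bm{m}}\|A_{\omega,2}(\bm{m},0)\|^{2}\Big)^{1/2}.
\end{equation*}
The second factor equals $\big(\mathbb{E}\|A_{\omega,2}\mathbbm{1}_{\{\bm{n}=0\}}\|_{HS}^{2}\big)^{1/2}$, which is finite by hypothesis. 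For the first factor one uses covariance in kernel form, $A_{\omega,1}(0,\bm{m})=A_{T_{-\bm{m}}\omega,1}(-\bm{m},0)$, and the fact that $T_{-\bm{m}}$ preserves $\mathbb{P}$: this shows $\mathbb{E}\|A_{\omega,1}(\bm{n},\bm{m})\|^{2}=\mathbb{E}\|A_{\omega,1}(\bm{n}-\bm{m},0)\|^{2}$, hence $\mathbb{E}\sum_{\bm{m}}\|A_{\omega,1}(0,\bm{m})\|^{2}=\mathbb{E}\sum_{\bm{m}}\|A_{\omega,1}(\bm{m},0)\|^{2}=\mathbb{E}\|A_{\omega,1}\mathbbm{1}_{\{\bm{n}=0\}}\|_{HS}^{2}<\infty$. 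The same bound holds with the indices $1$ and $2$ swapped, so the hypothesis of Lemma~\ref{lem_covariant_TPUV_expectation} is met for both products, and, importantly, the resulting absolute convergence will justify the Fubini interchanges below.

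It then remains to show $\mathbb{E}(\text{tr}\,(A_{\omega,1}A_{\omega,2})(0,0))=\mathbb{E}(\text{tr}\,(A_{\omega,2}A_{\omega,1})(0,0))$. Expanding the matrix product and interchanging $\mathbb{E}$ with $\sum_{\bm{m}}$, for each fixed $\bm{m}$ I would apply \eqref{eq_ergodicity_1} in kernel form, $A_{\omega,1}(0,\bm{m})=A_{T_{-\bm{m}}\omega,1}(-\bm{m},0)$ and $A_{\omega,2}(\bm{m},0)=A_{T_{-\bm{m}}\omega,2}(0,-\bm{m})$, then use cyclicity of the $d\times d$ trace and the $T_{-\bm{m}}$-invariance of $\mathbb{P}$ to get $\mathbb{E}\,\text{tr}(A_{\omega,1}(0,\bm{m})A_{\omega,2}(\bm{m},0))=\mathbb{E}\,\text{tr}(A_{\omega,2}(0,-\bm{m})A_{\omega,1}(-\bm{m},0))$; summing over $\bm{m}$ and relabelling $\bm{m}\mapsto-\bm{m}$ recovers $\mathbb{E}(\text{tr}\,(A_{\omega,2}A_{\omega,1})(0,0))$, which together with the first paragraph proves the lemma. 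The only genuinely delicate point — and the step I expect to be the main obstacle — is the justification of the expectation–sum interchange, equivalently the $L^{1}$-integrability of $(A_{\omega,1}A_{\omega,2})(0,0)$, which is precisely why the Hilbert–Schmidt hypothesis is imposed: the double Cauchy–Schwarz estimate above, combined with the covariant symmetry $\mathbb{E}\|A_{\omega,i}(\bm{n},\bm{m})\|^{2}=\mathbb{E}\|A_{\omega,i}(\bm{n}-\bm{m},0)\|^{2}$ (needed to convert the ``row'' sum $\sum_{\bm{m}}\|A_{\omega,i}(0,\bm{m})\|^{2}$ that appears naturally into the ``column'' sum controlled by the hypothesis), is the crux; everything else is bookkeeping with the translation action.
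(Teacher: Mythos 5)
The paper does not give its own proof of this lemma: it simply refers the reader to \cite[Lemma~3.21]{Klein2005linear_disorder}. Your argument is therefore a self-contained reconstruction, and it is correct and complete. You correctly observe that covariance of a product follows from unitarity of $\mathcal{U}_{\bm{k}}$, you correctly translate the covariance relation \eqref{eq_ergodicity_1} into kernel form $A_{T_{\bm{k}}\omega}(\bm{n},\bm{m})=A_{\omega}(\bm{n}-\bm{k},\bm{m}-\bm{k})$, and you correctly identify that the Hilbert--Schmidt hypothesis is what makes the two necessary Fubini interchanges (expectation with the $\bm{m}$-sum) legal. The one subtle but essential step that you handle properly is the conversion between the ``column'' sum $\sum_{\bm{m}}\|A_{\omega,i}(0,\bm{m})\|^{2}$, which appears naturally from the Cauchy--Schwarz estimate of $(A_{\omega,1}A_{\omega,2})(0,0)$, and the ``row'' sum $\sum_{\bm{m}}\|A_{\omega,i}(\bm{m},0)\|^{2}=\|A_{\omega,i}\mathbbm{1}_{\{\bm{n}=0\}}\|_{HS}^{2}$ controlled by the hypothesis; this is exactly the covariance-plus-measure-preservation argument you give, and it is needed because the hypothesis as stated only bounds columns. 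The final exchange, $\mathbb{E}\,\mathrm{tr}\big(A_{\omega,1}(0,\bm{m})A_{\omega,2}(\bm{m},0)\big)=\mathbb{E}\,\mathrm{tr}\big(A_{\omega,2}(0,-\bm{m})A_{\omega,1}(-\bm{m},0)\big)$ via cyclicity of $\mathrm{tr}$ on $\mathbb{C}^{d\times d}$ and $T_{-\bm{m}}$-invariance of $\mathbb{P}$, followed by the relabelling $\bm{m}\mapsto -\bm{m}$, is exactly the standard mechanism behind cyclicity of the trace per unit volume for covariant operators. This is the same circle of ideas as in the cited reference (Birkhoff ergodic theorem reducing the TPUV to a single-site expectation, then translation invariance of that expectation), so you have not taken a genuinely different route, but you have filled in a proof the paper leaves to the literature.
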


\section{Bulk-edge Correspondence for Finite Disordered Systems: Proof of Theorem \ref{thm_bec_disorder}}
\label{sec_bec}

Throughout this section, we will write $\mathcal{H}_{\omega,L}:=\mathcal{H}_{\omega,L}^{\text{sim}}$ because only box Hamiltonians with simple boundary conditions are concerned. This convention is 
removed in Section \ref{sec_anderson_loc}, where other types of boundary condition are also taken into account.

We first present the main skeleton of the proof of Theorem \ref{thm_bec_disorder}, which proceeds in several steps, then left the technical details to the following subsections. We begin with the original edge index \eqref{eq_edge_index} and show that its limit is equal to the bulk index \eqref{eq_refined_bulk_index} modulo the correction term in \eqref{eq_refined_bulk_index}. The first step is to show the following equivalent formulation of edge index in the language of Helffer-Sjöstrand formula. That is, for any realization $\omega\in\Omega$, it holds that
\begin{equation} \label{eq_edge_index_HS_formula}
\mathcal{E}_{edge,\omega,\mathcal{I},\rho,L}=\frac{1}{2\pi|\Lambda_L|}\int_{\mathbb{C}}dm(z)\partial_{\overline{z}}\tilde{\rho}(z)\text{Tr}_{\Lambda_L}\big( S_{\omega,L,12}(z)-S_{\omega,L,21}(z)  \big)
\end{equation}
with $\tilde{\rho}\in C_{c}^{\infty}(\mathbb{Z})$ being an almost-analytic extension of $\rho$ (cf. \cite{zworski2012semiclassical} and \cite[Section 2.2]{qiu2025generalized}) satisfying
\begin{equation} \label{eq_almost_analyticity}
\tilde{\rho}|_{\mathbb{R}}=\rho,\quad \partial_{\overline{z}}\tilde{\rho}=\mathcal{O}(|\text{Im }z|^{\infty}),\quad
\text{supp}(\tilde{\rho})\subset \{z\in \mathbb{C}:\, |\Im z|\leq 1\} ,
\end{equation}
and
\begin{equation*}
S_{\omega,L,ij}(z):=(\mathcal{H}_{\omega,L}-z)^{-1} [\mathcal{H}_{\omega,L},x_i] (\mathcal{H}_{\omega,L}-z)^{-1} [\mathcal{H}_{\omega,L},x_j] (\mathcal{H}_{\omega,L}-z)^{-1} .
\end{equation*}
The proof is given in Section \ref{sec_edge_index_HS_formula}. We note that this equivalent formulation is more general. In fact, a similar Helffer-Sjöstrand formulation can be used to define a topological index related to the spin transport in quantum systems, and show that it recovers the well-known $\mathbb{Z}_2$ index when the total spin charge is conserved; see \cite{qiu2025generalized} and also the references therein.

The next step is to justify the replacement of the box Hamiltonian $\mathcal{H}_{\omega,L}$ with the bulk Hamiltonian $\mathcal{H}_{\omega}$ in \eqref{eq_edge_index_HS_formula} by taking the average over $|\Lambda_L|$ and letting $L\to\infty$. In other words, we will show that, for $i=1,j=2$ and $i=2,j=1$,
\begin{equation} \label{eq_edge_to_bulk_HS}
\lim_{L\to\infty}\frac{1}{|\Lambda_L|}\int_{\mathbb{C}}dm(z)\partial_{\overline{z}}\tilde{\rho}(z)\text{Tr}_{\Lambda_L}\big( S_{\omega,L,ij}(z) \big)
=\lim_{L\to\infty}\frac{1}{|\Lambda_L|}\int_{\mathbb{C}}dm(z)\partial_{\overline{z}}\tilde{\rho}(z)\text{Tr}_{\Lambda_L}\big( S_{\omega,ij}(z) \big)
\end{equation}
with
\begin{equation*}
S_{\omega,ij}(z):=(\mathcal{H}_{\omega}-z)^{-1} [\mathcal{H}_{\omega},x_i] (\mathcal{H}_{\omega}-z)^{-1} [\mathcal{H}_{\omega},x_j] (\mathcal{H}_{\omega}-z)^{-1} .
\end{equation*}
Let us illustrate \eqref{eq_edge_to_bulk_HS} intuitively. Since the on-site random potential commutes with the position operator, we see that\footnote{We note that the identity does not hold for a box Hamiltonian with periodic boundary conditions, which contains hopping across boundaries. This implies that one cannot have a bulk-edge correspondence by defining the edge index \eqref{eq_edge_index} for periodic box Hamiltonians. Physically, it is reminiscent of the fact that the periodic boundary conditions are `penetrable'; see the discussion in \cite{mario19proof,qiu2025bec_finite}.}
\begin{equation*}
\mathbbm{1}_{\Lambda_L}[\mathcal{H}_{\omega,L},x_i]\mathbbm{1}_{\Lambda_L}=\mathbbm{1}_{\Lambda_L}[\mathcal{H}_{0},x_i]\mathbbm{1}_{\Lambda_L}=\mathbbm{1}_{\Lambda_L}[\mathcal{H}_{\omega},x_i]\mathbbm{1}_{\Lambda_L}.
\end{equation*}
Hence, when calculating the trace on the box $\Lambda_L$, $S_{\omega,ij}(z)$ differs from $S_{\omega,ij,L}(z)$ only by the resolvents. Remarkably, this difference is estimated by the resolvent identity
\begin{equation} \label{eq_edge_to_bulk_HS_proof_1}
\begin{aligned}
&(\mathcal{H}_{\omega}-z)^{-1}(\bm{n},\bm{m})-(\mathcal{H}_{\omega,L}-z)^{-1}(\bm{n},\bm{m}) \\
&=\sum_{(\bm{u},\bm{u}^{\prime})\in \text{supp}(\mathcal{F}^{(L)})}(\mathcal{H}_{\omega,L}-z)^{-1}(\bm{n},\bm{u})\mathcal{F}^{(L)}(\bm{u},\bm{u}^{\prime})(\mathcal{H}_{\omega}-z)^{-1}(\bm{u}^{\prime},\bm{m}).
\end{aligned}
\end{equation}
Here, the operator $\mathcal{F}^{(L)}$ records the hoppings between $\Lambda_L$ and its exterior
\begin{equation*}
\mathcal{F}^{(L)}:=\mathcal{H}_{\omega}-\mathcal{H}_{\omega}^{(L)},
\end{equation*}
whose kernel is supported on the boundary $\partial\Lambda_L$. Importantly, for $\Im z\neq 0$, the exponential decay of resolvents implies that the difference \eqref{eq_edge_to_bulk_HS_proof_1} converges to zero as $L\to\infty$. This is the fundamental idea underlying equality \eqref{eq_edge_to_bulk_HS}. The details are left to Section \ref{sec_HS_formula_edge_to_bulk}.

Note that the right side of \eqref{eq_edge_to_bulk_HS} is already a bulk quantity, which depends only on $\mathcal{H}_{\omega}$. In the last step, we will show that it equals the bulk index modulo an extra contribution from the mobility gap, which is exactly the extra term in the refined edge index \eqref{eq_refined_bulk_index}. To be precise, we will prove the following identity, which holds $\mathbb{P}-$almost surely
\begin{equation*}
\begin{aligned}
&\lim_{L\to\infty}\frac{1}{2\pi|\Lambda_L|}\int_{\mathbb{C}}dm(z)\partial_{\overline{z}}\tilde{\rho}(z)\text{Tr}_{\Lambda_L}\big( S_{\omega,12}(z) \big)-\lim_{L\to\infty}\frac{1}{2\pi|\Lambda_L|}\int_{\mathbb{C}}dm(z)\partial_{\overline{z}}\tilde{\rho}(z)\text{Tr}_{\Lambda_L}\big( S_{\omega,21}(z) \big) \\
&-\lim_{L\to\infty}\big( \mathcal{E}_{edge,\omega,\mathcal{I},\mathcal{I}^{\prime},\rho,L}^{cor,(1)}+\mathcal{E}_{edge,\omega,\mathcal{I},\mathcal{I}^{\prime},\rho,L}^{cor,(2)} \big) \\
&=\mathcal{E}_{bulk,\omega}.
\end{aligned}
\end{equation*}
This involves technical manipulations of traces and complex integrals; see Section \ref{sec_link_refined_bulk_index} for details.

\subsection{Step 1: Equivalent Formulation of the Original Edge Index} \label{sec_edge_index_HS_formula}

The proof of \eqref{eq_edge_index_HS_formula} begins by noting the Helffer-Sjöstrand formula (cf. \cite[Theorem 14.8]{zworski2012semiclassical}) and integrating by parts,
\begin{equation*}
\rho^{\prime}(\mathcal{H}_{\omega,L})
=\frac{1}{\pi i}\int_{\mathbb{C}}dm(z)\partial_{\overline{z}}\tilde{\rho^{\prime}}(z)(\mathcal{H}_{\omega,L}-z)^{-1}
=-\frac{1}{\pi i}\int_{\mathbb{C}}dm(z)\partial_{\overline{z}}\tilde{\rho}(z)(\mathcal{H}_{\omega,L}-z)^{-2}.
\end{equation*}
Hence, it follows that 
\begin{equation} \label{eq_edge_index_HS_formula_proof_1}
\begin{aligned}
&\mathcal{E}_{edge,\omega,\mathcal{I},\rho,L} \\
&=\frac{i}{2|\Lambda_L|}\cdot\text{Tr}_{\Lambda_L}\big(([\mathcal{H}_{\omega,L},x_2]x_1-[\mathcal{H}_{\omega,L},x_1]x_2)\rho^{\prime}(\mathcal{H}_{\omega,L})\big) \\
\nm 
&=-\frac{1}{2\pi |\Lambda_L|}\int_{\mathbb{C}}dm(z)\partial_{\overline{z}}\tilde{\rho}(z)
\Big(\text{Tr}_{\Lambda_L}\big([\mathcal{H}_{\omega,L},x_2]x_1(\mathcal{H}_{\omega,L}-z)^{-2} \big)-\text{Tr}_{\Lambda_L}\big([\mathcal{H}_{\omega,L},x_1]x_2(\mathcal{H}_{\omega,L}-z)^{-2} \big) \Big).
\end{aligned}
\end{equation}
Both the separation of traces and the interchange of the trace and the complex integral are justified by noting that $\text{Tr}_{\Lambda_L}$ is a finite sum over the diagonals and, for each $\bm{n}\in\Lambda_L$, 
\begin{equation} \label{eq_edge_to_bulk_HS_proof_5}
\Big| \big( [\mathcal{H}_{\omega,L},x_i]x_j (\mathcal{H}_{\omega,L}-z)^{-2} \big)(\bm{n},\bm{n}) \Big| \leq \frac{CL}{|\Im z|^{2}},
\end{equation}
($C>0$ is independent of $\omega,L,z$) whose composition with $\partial_{\overline{z}}\tilde{\rho}(z)$ is absolutely integrable by the almost-analyticity \eqref{eq_almost_analyticity}. To further rewrite \eqref{eq_edge_index_HS_formula_proof_1}, we note that from the identity
\begin{equation*}
(\mathcal{H}_{\omega,L}-z)x_i (\mathcal{H}_{\omega,L}-z)^{-2}
=[\mathcal{H}_{\omega,L},x_i](\mathcal{H}_{\omega,L}-z)^{-2}+x_i(\mathcal{H}_{\omega,L}-z)^{-1},
\end{equation*}
it follows that 
\begin{equation*}
x_i (\mathcal{H}_{\omega,L}-z)^{-2}
=(\mathcal{H}_{\omega,L}-z)^{-1}[\mathcal{H}_{\omega,L},x_i](\mathcal{H}_{\omega,L}-z)^{-2}+(\mathcal{H}_{\omega,L}-z)^{-1}x_i(\mathcal{H}_{\omega,L}-z)^{-1}.
\end{equation*}
Thus, \eqref{eq_edge_index_HS_formula_proof_1} is equal to
\begin{equation} \label{eq_edge_index_HS_formula_proof_2}
\begin{aligned}
\mathcal{E}_{edge,\omega,\mathcal{I},\rho,L}
&=-\frac{1}{2\pi |\Lambda_L|}\int_{\mathbb{C}}dm(z)\partial_{\overline{z}}\tilde{\rho}(z)
\text{Tr}_{\Lambda_L}\Big([\mathcal{H}_{\omega,L},x_2](\mathcal{H}_{\omega,L}-z)^{-1}[\mathcal{H}_{\omega,L},x_1](\mathcal{H}_{\omega,L}-z)^{-2} \\
&\quad\quad\quad\quad\quad\quad\quad\quad\quad\quad\quad\quad\quad +[\mathcal{H}_{\omega,L},x_2](\mathcal{H}_{\omega,L}-z)^{-1}x_1(\mathcal{H}_{\omega,L}-z)^{-1} \Big) \\
&\quad +\frac{1}{2\pi |\Lambda_L|}\int_{\mathbb{C}}dm(z)\partial_{\overline{z}}\tilde{\rho}(z)
\text{Tr}_{\Lambda_L}\Big([\mathcal{H}_{\omega,L},x_1](\mathcal{H}_{\omega,L}-z)^{-1}[\mathcal{H}_{\omega,L},x_2](\mathcal{H}_{\omega,L}-z)^{-2} \\
&\quad\quad\quad\quad\quad\quad\quad\quad\quad\quad\quad\quad\quad +[\mathcal{H}_{\omega,L},x_1](\mathcal{H}_{\omega,L}-z)^{-1}x_2(\mathcal{H}_{\omega,L}-z)^{-1} \Big).
\end{aligned}
\end{equation}
By cyclicity of traces (of operators on the finite-dimensional space $\mathcal{X}_L$), we obtain that
\begin{equation*}
\begin{aligned}
&\text{Tr}_{\Lambda_L}\big( [\mathcal{H}_{\omega,L},x_i](\mathcal{H}_{\omega,L}-z)^{-1}x_j(\mathcal{H}_{\omega,L}-z)^{-1} \big) \\
&=\text{Tr}_{\Lambda_L}\big( (\mathcal{H}_{\omega,L}-z)x_i(\mathcal{H}_{\omega,L}-z)^{-1}x_j(\mathcal{H}_{\omega,L}-z)^{-1} \big)-\text{Tr}_{\Lambda_L}\big( x_ix_j(\mathcal{H}_{\omega,L}-z)^{-1} \big) \\
&=\text{Tr}_{\Lambda_L}\big( x_i(\mathcal{H}_{\omega,L}-z)^{-1}x_j \big)-\text{Tr}_{\Lambda_L}\big( x_ix_j(\mathcal{H}_{\omega,L}-z)^{-1} \big) \\
&=0,
\end{aligned}
\end{equation*}
for all $\Im z\neq 0$. Hence, only the terms involving $(\mathcal{H}_{\omega,L}-z)^{-2}$ in \eqref{eq_edge_index_HS_formula_proof_2} are not trivial:
\begin{equation*}
\begin{aligned}
&\mathcal{E}_{edge,\omega,\mathcal{I},\rho,L} \\
&=-\frac{1}{2\pi |\Lambda_L|}\int_{\mathbb{C}}dm(z)\partial_{\overline{z}}\tilde{\rho}(z)
\text{Tr}_{\Lambda_L}\Big([\mathcal{H}_{\omega,L},x_2](\mathcal{H}_{\omega,L}-z)^{-1}[\mathcal{H}_{\omega,L},x_1](\mathcal{H}_{\omega,L}-z)^{-2}  \Big) \\
&\quad +\frac{1}{2\pi |\Lambda_L|}\int_{\mathbb{C}}dm(z)\partial_{\overline{z}}\tilde{\rho}(z)
\text{Tr}_{\Lambda_L}\Big([\mathcal{H}_{\omega,L},x_1](\mathcal{H}_{\omega,L}-z)^{-1}[\mathcal{H}_{\omega,L},x_2](\mathcal{H}_{\omega,L}-z)^{-2} \Big) \\
&=-\frac{1}{2\pi |\Lambda_L|}\int_{\mathbb{C}}dm(z)\partial_{\overline{z}}\tilde{\rho}(z)
\text{Tr}_{\Lambda_L}\Big((\mathcal{H}_{\omega,L}-z)^{-1} [\mathcal{H}_{\omega,L},x_2](\mathcal{H}_{\omega,L}-z)^{-1}[\mathcal{H}_{\omega,L},x_1](\mathcal{H}_{\omega,L}-z)^{-1}  \Big) \\
&\quad +\frac{1}{2\pi |\Lambda_L|}\int_{\mathbb{C}}dm(z)\partial_{\overline{z}}\tilde{\rho}(z)
\text{Tr}_{\Lambda_L}\Big((\mathcal{H}_{\omega,L}-z)^{-1} [\mathcal{H}_{\omega,L},x_1](\mathcal{H}_{\omega,L}-z)^{-1}[\mathcal{H}_{\omega,L},x_2](\mathcal{H}_{\omega,L}-z)^{-1} \Big) \\
&=\frac{1}{2\pi |\Lambda_L|}\int_{\mathbb{C}}dm(z)\partial_{\overline{z}}\tilde{\rho}(z)
\text{Tr}_{\Lambda_L}\big(S_{\omega,L,12}(z)\big)
-\frac{1}{2\pi |\Lambda_L|}\int_{\mathbb{C}}dm(z)\partial_{\overline{z}}\tilde{\rho}(z)
\text{Tr}_{\Lambda_L}\big(S_{\omega,L,21}(z)\big)
\end{aligned}
\end{equation*}
where the cyclicity of trace is applied for the second equality. This completes the proof of \eqref{eq_edge_index_HS_formula}.

\subsection{Step 2: Changing Hamiltonians from Box to Bulk} \label{sec_HS_formula_edge_to_bulk}

{\color{blue}Step 2.1.} We first prove that
\begin{equation} \label{eq_edge_to_bulk_HS_proof_2}
\begin{aligned}
&\lim_{L\to\infty}\frac{1}{|\Lambda_L|}\int_{\mathbb{C}}dm(z)\partial_{\overline{z}}\tilde{\rho}(z)\text{Tr}_{\Lambda_L}\big( S_{\omega,ij}(z) \big) \\
&=\lim_{L\to\infty}\frac{1}{|\Lambda_L|}\int_{\mathbb{C}}dm(z)\partial_{\overline{z}}\tilde{\rho}(z)\text{Tr}_{\Lambda_L}\Big((\mathcal{H}_{\omega}-z)^{-1} [\mathcal{H}_{\omega},x_i] (\mathcal{H}_{\omega}-z)^{-1} [\mathcal{H}_{\omega},x_j] (\mathcal{H}_{\omega}-z)^{-1} \Big) \\
&=\lim_{L\to\infty}\frac{1}{|\Lambda_L|}\int_{\mathbb{C}}dm(z)\partial_{\overline{z}}\tilde{\rho}(z)\text{Tr}_{\Lambda_L}\Big((\mathcal{H}_{\omega}-z)^{-1} \mathbbm{1}_{\Lambda_L}[\mathcal{H}_{\omega},x_i] \mathbbm{1}_{\Lambda_L}(\mathcal{H}_{\omega}-z)^{-1} \mathbbm{1}_{\Lambda_L}[\mathcal{H}_{\omega},x_j] \mathbbm{1}_{\Lambda_L}(\mathcal{H}_{\omega}-z)^{-1} \Big),
\end{aligned}
\end{equation}
i.e., inserting $\mathbbm{1}_{\Lambda_L}$ into $S_{\omega,ij}(z)$ does not change the limit. The idea is to insert the indicators one by one and to show that the limit does not change at each step. Here, we only prove the first step
\begin{equation} \label{eq_edge_to_bulk_HS_proof_3}
\begin{aligned}
&\lim_{L\to\infty}\frac{1}{|\Lambda_L|}\int_{\mathbb{C}}dm(z)\partial_{\overline{z}}\tilde{\rho}(z)\text{Tr}_{\Lambda_L}\Big((\mathcal{H}_{\omega}-z)^{-1} [\mathcal{H}_{\omega},x_i] (\mathcal{H}_{\omega}-z)^{-1} [\mathcal{H}_{\omega},x_j] (\mathcal{H}_{\omega}-z)^{-1} \Big) \\
&=\lim_{L\to\infty}\frac{1}{|\Lambda_L|}\int_{\mathbb{C}}dm(z)\partial_{\overline{z}}\tilde{\rho}(z)\text{Tr}_{\Lambda_L}\Big((\mathcal{H}_{\omega}-z)^{-1} \mathbbm{1}_{\Lambda_L}[\mathcal{H}_{\omega},x_i] (\mathcal{H}_{\omega}-z)^{-1} [\mathcal{H}_{\omega},x_j] (\mathcal{H}_{\omega}-z)^{-1} \Big),
\end{aligned}
\end{equation}
while the other three are checked similarly. Using the standard estimates in Section \ref{sec_prelim}, we see that the operators in \eqref{eq_edge_to_bulk_HS_proof_3} all have exponentially localized kernels for $\Im z\neq 0$; that is,
\begin{equation*}
\big\|(\mathcal{H}_{\omega}-z)^{-1}(\bm{n},\bm{m})  \big\|\leq \frac{C}{|\Im z|}e^{-|\Im z| |\bm{n}-\bm{m}|} ,
\end{equation*}
and 
\begin{equation*}
\big\|\big( [\mathcal{H}_{\omega},x_i] (\mathcal{H}_{\omega}-z)^{-1} [\mathcal{H}_{\omega},x_j] (\mathcal{H}_{\omega}-z)^{-1} \big) (\bm{n},\bm{m})  \big\|\leq \frac{C}{|\Im z|^{8}}e^{-\frac{|\Im z|}{4} |\bm{n}-\bm{m}|} .
\end{equation*}
Hence, the differences between both sides of \eqref{eq_edge_to_bulk_HS_proof_3} can be estimated by calculating the traces in position basis
\begin{equation*}
\begin{aligned}
&\Big|\text{Tr}_{\Lambda_L}\Big(
(\mathcal{H}_{\omega}-z)^{-1} [\mathcal{H}_{\omega},x_i] (\mathcal{H}_{\omega}-z)^{-1} [\mathcal{H}_{\omega},x_j] (\mathcal{H}_{\omega}-z)^{-1} \\
&\quad\quad\quad\quad -(\mathcal{H}_{\omega}-z)^{-1} \mathbbm{1}_{\Lambda_L}[\mathcal{H}_{\omega},x_i] (\mathcal{H}_{\omega}-z)^{-1} [\mathcal{H}_{\omega},x_j] (\mathcal{H}_{\omega}-z)^{-1} \Big)\Big| \\
&=\Big|\text{tr}\sum_{\bm{n}\in\Lambda_L}\sum_{\bm{m}\in\Lambda_L^{c}} 
(\mathcal{H}_{\omega}-z)^{-1}(\bm{n},\bm{m}) \big( [\mathcal{H}_{\omega},x_i] (\mathcal{H}_{\omega}-z)^{-1} [\mathcal{H}_{\omega},x_j] (\mathcal{H}_{\omega}-z)^{-1} \big) (\bm{m},\bm{n}) \Big| \\
&\leq C\sum_{\bm{n}\in\Lambda_L,\bm{m}\in\Lambda_L^{c}} \big\|(\mathcal{H}_{\omega}-z)^{-1}(\bm{n},\bm{m}) \big\| \big\|\big( [\mathcal{H}_{\omega},x_i] (\mathcal{H}_{\omega}-z)^{-1} [\mathcal{H}_{\omega},x_j] (\mathcal{H}_{\omega}-z)^{-1} \big) (\bm{m},\bm{n}) \big\| \\
&\leq \frac{C}{|\Im z|^{9}}\sum_{\bm{n}\in\Lambda_L,\bm{m}\in\Lambda_L^{c}}e^{-\frac{|\Im z|}{2} |\bm{n}-\bm{m}|} .
\end{aligned}
\end{equation*}
The right side is estimated elementarily as in the proof of Lemma \ref{lem_finite_volume_cyclicity}, whose result is
\begin{equation*}
\begin{aligned}
&\Big|\text{Tr}_{\Lambda_L}\Big(
(\mathcal{H}_{\omega}-z)^{-1} [\mathcal{H}_{\omega},x_i] (\mathcal{H}_{\omega}-z)^{-1} [\mathcal{H}_{\omega},x_j] (\mathcal{H}_{\omega}-z)^{-1} \\
&\quad\quad\quad\quad -(\mathcal{H}_{\omega}-z)^{-1} \mathbbm{1}_{\Lambda_L}[\mathcal{H}_{\omega},x_i] (\mathcal{H}_{\omega}-z)^{-1} [\mathcal{H}_{\omega},x_j] (\mathcal{H}_{\omega}-z)^{-1} \Big)\Big| \leq \frac{C}{|\Im z|^{11}}\big(L^{\frac{3}{2}}+L^2e^{-\frac{|\Im z|}{4}L^{\frac{1}{2}}}\big).
\end{aligned}
\end{equation*}
This estimate, together with the almost analyticity \eqref{eq_almost_analyticity} and the dominated convergence theorem, concludes the proof of \eqref{eq_edge_to_bulk_HS_proof_3}.

{\color{blue}Step 2.2.} Next, we note that
\begin{equation*}
\begin{aligned}
\text{Tr}_{\Lambda_L}\big( S_{\omega,L,ij}(z) \big)
&=\text{Tr}_{\Lambda_L}\Big((\mathcal{H}_{\omega,L}-z)^{-1} [\mathcal{H}_{\omega,L},x_i](\mathcal{H}_{\omega,L}-z)^{-1}[\mathcal{H}_{\omega,L},x_j](\mathcal{H}_{\omega,L}-z)^{-1}  \Big) \\
&=\text{Tr}_{\Lambda_L}\Big((\mathcal{H}_{\omega,L}-z)^{-1} \mathbbm{1}_{\Lambda_L}[\mathcal{H}_{\omega,L},x_i] \mathbbm{1}_{\Lambda_L}(\mathcal{H}_{\omega,L}-z)^{-1} \mathbbm{1}_{\Lambda_L}[\mathcal{H}_{\omega,L},x_j] \mathbbm{1}_{\Lambda_L}(\mathcal{H}_{\omega,L}-z)^{-1} \Big) \\
&=\text{Tr}_{\Lambda_L}\Big((\mathcal{H}_{\omega,L}-z)^{-1} \mathbbm{1}_{\Lambda_L}[\mathcal{H}_{\omega},x_i] \mathbbm{1}_{\Lambda_L}(\mathcal{H}_{\omega,L}-z)^{-1} \mathbbm{1}_{\Lambda_L}[\mathcal{H}_{\omega},x_j] \mathbbm{1}_{\Lambda_L}(\mathcal{H}_{\omega,L}-z)^{-1} \Big),
\end{aligned}
\end{equation*}
where the second identity follows from the fact that all the operators involved are defined on $\mathcal{X}_L$, and therefore, inserting $\mathbbm{1}_{\Lambda_L}$ does not affect the composition. Hence, with \eqref{eq_edge_to_bulk_HS_proof_2}, it suffices to prove the following equality to conclude the proof of \eqref{eq_edge_to_bulk_HS}:
\begin{equation*}
\begin{aligned}
&\lim_{L\to\infty}\frac{1}{|\Lambda_L|}\int_{\mathbb{C}}dm(z)\partial_{\overline{z}}\tilde{\rho}(z)\text{Tr}_{\Lambda_L}\Big((\mathcal{H}_{\omega,L}-z)^{-1} \mathbbm{1}_{\Lambda_L}[\mathcal{H}_{\omega},x_i] \mathbbm{1}_{\Lambda_L}(\mathcal{H}_{\omega,L}-z)^{-1} \mathbbm{1}_{\Lambda_L}[\mathcal{H}_{\omega,L},x_j] \mathbbm{1}_{\Lambda_L}(\mathcal{H}_{\omega,L}-z)^{-1} \Big) \\
&=\lim_{L\to\infty}\frac{1}{|\Lambda_L|}\int_{\mathbb{C}}dm(z)\partial_{\overline{z}}\tilde{\rho}(z)\text{Tr}_{\Lambda_L}\Big((\mathcal{H}_{\omega}-z)^{-1} \mathbbm{1}_{\Lambda_L}[\mathcal{H}_{\omega},x_i] \mathbbm{1}_{\Lambda_L}(\mathcal{H}_{\omega}-z)^{-1} \mathbbm{1}_{\Lambda_L}[\mathcal{H}_{\omega},x_j] \mathbbm{1}_{\Lambda_L}(\mathcal{H}_{\omega}-z)^{-1} \Big).
\end{aligned}
\end{equation*}
Again, the proof is achieved by  gradually changing the resolvent $(\mathcal{H}_{\omega,L}-z)^{-1}$ to $(\mathcal{H}_{\omega}-z)^{-1}$ and showing that the limit does not change at each step. We only show the proof of the first step, while the others are treated similarly:
\begin{equation} \label{eq_edge_to_bulk_HS_proof_4}
\begin{aligned}
&\lim_{L\to\infty}\frac{1}{|\Lambda_L|}\int_{\mathbb{C}}dm(z)\partial_{\overline{z}}\tilde{\rho}(z) \\
&\quad \text{Tr}_{\Lambda_L}\Big(
\big((\mathcal{H}_{\omega,L}-z)^{-1}-(\mathcal{H}_{\omega}-z)^{-1} \big)\mathbbm{1}_{\Lambda_L}[\mathcal{H}_{\omega},x_i] \mathbbm{1}_{\Lambda_L}(\mathcal{H}_{\omega}-z)^{-1} \mathbbm{1}_{\Lambda_L}[\mathcal{H}_{\omega},x_j] \mathbbm{1}_{\Lambda_L}(\mathcal{H}_{\omega}-z)^{-1} \Big)=0.
\end{aligned}
\end{equation}
With the resolvent identity \eqref{eq_edge_to_bulk_HS_proof_1} and the estimates presented in Section \ref{sec_prelim}, the operator kernels of the right side are bounded as follows. First,
\begin{equation*}
\begin{aligned}
\big\|\big((\mathcal{H}_{\omega,L}-z)^{-1}-(\mathcal{H}_{\omega}-z)^{-1} \big) (\bm{n},\bm{m})\big\|
&\leq \frac{C}{|\Im z|^{2}} \sum_{(\bm{u},\bm{u}^{\prime})\in \text{supp}(\mathcal{F}^{(L)})} e^{-|\Im z||\bm{n}-\bm{u}|}e^{-|\Im z||\bm{u}^{\prime}-\bm{m}|} \\
&\leq \frac{C}{|\Im z|^{2}} \sum_{\bm{u}\in\partial^{int}\Lambda_L} e^{-|\Im z||\bm{n}-\bm{u}|}e^{-|\Im z||\bm{u}-\bm{m}|} ,
\end{aligned}
\end{equation*}
where in the last step, we recall that $|\bm{u}-\bm{u}^{\prime}|=1$ for $(\bm{u},\bm{u}^{\prime})\in \text{supp}(\mathcal{F}^{(L)})$ and the fact that $\mathcal{F}^{(L)}$ is supported on the boundary $\partial \Lambda_L$. On the other hand, we have
\begin{equation*}
\big\|\big( [\mathcal{H}_{\omega},x_i] \mathbbm{1}_{\Lambda_L}(\mathcal{H}_{\omega}-z)^{-1} \mathbbm{1}_{\Lambda_L}[\mathcal{H}_{\omega},x_j] \mathbbm{1}_{\Lambda_L}(\mathcal{H}_{\omega}-z)^{-1} \big) (\bm{n},\bm{m})\big\| \leq  \frac{C}{|\Im z|^{8}}e^{-\frac{|\Im z|}{4}|\bm{n}-\bm{m}|}.
\end{equation*}
Hence, the trace in \eqref{eq_edge_to_bulk_HS_proof_4} is estimated as
\begin{equation*}
\begin{aligned}
& \Big|\text{Tr}_{\Lambda_L}\Big(
\big((\mathcal{H}_{\omega,L}-z)^{-1}-(\mathcal{H}_{\omega}-z)^{-1} \big)\mathbbm{1}_{\Lambda_L}[\mathcal{H}_{\omega},x_i] \mathbbm{1}_{\Lambda_L}(\mathcal{H}_{\omega}-z)^{-1} \mathbbm{1}_{\Lambda_L}[\mathcal{H}_{\omega},x_j] \mathbbm{1}_{\Lambda_L}(\mathcal{H}_{\omega}-z)^{-1} \Big) \Big| \\
& \leq \frac{C}{|\Im z|^{10}}\sum_{\substack{\bm{n},\bm{m}\in \Lambda_L \\ \bm{u}\in\partial^{int}\Lambda_L}}e^{-|\Im z||\bm{n}-\bm{u}|}e^{-|\Im z||\bm{u}-\bm{m}|}e^{-\frac{|\Im z|}{4}|\bm{m}-\bm{n}|}.
\end{aligned}
\end{equation*}
The right side is calculated as follows: we first sum over $\bm{n}$ then $\bm{m}$, where each step produces a factor of $1/|\Im z|^2$ and all exponential terms vanish at last\footnote{This is analogous to the proof of Lemma \ref{lem_finite_volume_cyclicity}.}; then we finally sum the unity over $\bm{u}\in\partial^{int}\Lambda_L$, whose result is simply $|\partial^{int}\Lambda_L|=\mathcal{O}(L)$. Therefore,
\begin{equation*}
\begin{aligned}
 \Big|\text{Tr}_{\Lambda_L}\Big(
\big((\mathcal{H}_{\omega,L}-z)^{-1}-(\mathcal{H}_{\omega}-z)^{-1} \big)\mathbbm{1}_{\Lambda_L}[\mathcal{H}_{\omega},x_i] \mathbbm{1}_{\Lambda_L}(\mathcal{H}_{\omega}-z)^{-1} \mathbbm{1}_{\Lambda_L}[\mathcal{H}_{\omega},x_j] \mathbbm{1}_{\Lambda_L}(\mathcal{H}_{\omega}-z)^{-1} \Big) \Big|\\
\nm
\leq \frac{CL}{|\Im z|^{14}}
\end{aligned}
\end{equation*}
with $C>0$ being independent of $L$ and $z$. Thus, by the almost analyticity \eqref{eq_almost_analyticity} and the dominated convergence theorem, the limit and the complex integral in \eqref{eq_edge_to_bulk_HS_proof_4} can be interchanged to obtain a vanishing integrand. This concludes the proof.

\subsection{Step 3: Link to the Bulk Index} \label{sec_link_refined_bulk_index}

We first fix some notation. Recall that in Definition \ref{prop_refined_bulk_index}, we have taken an open sub-interval $\mathcal{I}^{\prime}$ of the mobility gap $\mathcal{I}$ such that it includes the transition region of density function $\rho$, i.e.,
\begin{equation*}
\text{supp}(\rho^{\prime}) \Subset \mathcal{I}^{\prime} \Subset \mathcal{I} ,
\end{equation*}
and have defined $P_{\omega,loc}:=\mathbbm{1}_{\mathcal{I}^{\prime} }(\mathcal{H}_{\omega})$. Now, we introduce two complement projections
\begin{equation*}
\begin{aligned}
P_{\omega,-}:=\mathbbm{1}_{(-\infty,\inf \mathcal{I}^{\prime}]}(\mathcal{H}_{\omega}),\quad 
P_{\omega,+}:=\mathbbm{1}_{[\sup \mathcal{I}^{\prime},\infty)}(\mathcal{H}_{\omega}) .
\end{aligned}
\end{equation*}
The supports of these projections are illustrated in Figure \ref{fig_support_projection}. By Proposition \ref{prop_localization_ground_state}, these projections are exponentially localized $\mathbb{P}-$almost surely
\begin{equation} \label{eq_aux_projection_ae_localized}
\mathbb{E}\big(\|P(\bm{n},\bm{m})\|  \big)\leq Ce^{-\alpha|\bm{n}-\bm{m}|} \quad \text{ for all }\bm{n},\bm{m}\in\mathbb{Z}^2,\, P\in \{P_{\omega,+},P_{\omega,-},P_{\omega,loc}\}.
\end{equation}

\begin{figure}
    \centering
    \begin{tikzpicture}[scale=0.8]
\draw[->] (-8,0)--(8,0);
\draw[very thick, red] (-8,4)--(-2,4);
\draw[very thick, red] (2,2)--(8,2);
\draw [very thick, red] plot [smooth,samples=400, tension=1] coordinates {(-2,4) (-1,3.8) (0,3) (1,2.2) (2,2)};
\draw[dashed] (-8,2)--(2,2);
\node[left,scale=0.7] at (-8,4) {$\rho=1$};
\node[left,scale=0.7] at (-8,2) {$\rho=0$};
\node[scale=1.2] at (-3,0) {$|$};
\node[scale=1.2] at (3,0) {$|$};
\draw[decorate,decoration={brace,mirror}] (-8,-0.5) -- (-3,-0.5);
\node[below,scale=1] at (-5.5,-0.5) {$P_{\omega,-}$};
\draw[decorate,decoration={brace,mirror}] (3,-0.5) -- (8,-0.5);
\node[below,scale=1] at (5.5,-0.5) {$P_{\omega,+}$};
\draw[decorate,decoration={brace,mirror}] (-3,-0.5) -- (3,-0.5);
\node[below,scale=1] at (0,-0.5) {$P_{\omega,loc}$};

\draw[line width=0.7mm, blue] (-8,0)--(-3,0);
\draw[line width=0.7mm, black] (8,0)--(3,0);
\draw[line width=0.7mm, green] (-3,0)--(3,0);
\end{tikzpicture}
    \caption{Support of projections}
    \label{fig_support_projection}
\end{figure}

{\color{blue}Step 3.1.} We first note that, by arguing in a similar way to  \eqref{eq_edge_to_bulk_HS_proof_5}, the averaged trace $\frac{1}{|\Lambda_L|}\text{Tr}_{\Lambda_L}\big( S_{\omega,ij}(z) \big)$ is uniformly bounded in $L$ up to a polynomial order of $|\Im z|^{-1}$. Hence, we can interchange the limit and the complex integral as follows:
\begin{equation} \label{eq_link_HS_bulk_index_proof_1}
\begin{aligned}
\lim_{L\to\infty}\frac{1}{|\Lambda_L|}\int_{\mathbb{C}}dm(z)\partial_{\overline{z}}\tilde{\rho}(z)\text{Tr}_{\Lambda_L}\big( S_{\omega,ij}(z) \big)
&=\int_{\mathbb{C}}dm(z)\partial_{\overline{z}}\tilde{\rho}(z)\lim_{L\to\infty}\frac{1}{|\Lambda_L|}\text{Tr}_{\Lambda_L}\big( S_{\omega,ij}(z) \big) \\
&=\int_{\mathbb{C}}dm(z)\partial_{\overline{z}}\tilde{\rho}(z)\mathcal{T}\big( S_{\omega,ij}(z) \big).
\end{aligned}
\end{equation}
Inserting $1=P_{\omega,+}+P_{\omega,-}+P_{\omega,loc}$ to the TPUV, recalling \eqref{eq_aux_projection_ae_localized}, Lemma \ref{lem_covariant_TPUV_cyclicity} and the fact that $P_{\omega,\pm},P_{\omega,loc},S_{\omega,ij}(z)$ are all covariant (for $\Im z\neq 0$), we see that the following holds $\mathbb{P}-$almost surely:
\begin{equation*}
\begin{aligned}
&\lim_{L\to\infty}\frac{1}{|\Lambda_L|}\int_{\mathbb{C}}dm(z)\partial_{\overline{z}}\tilde{\rho}(z)\text{Tr}_{\Lambda_L}\big( S_{\omega,ij}(z) \big) =\int_{\mathbb{C}}dm(z)\partial_{\overline{z}}\tilde{\rho}(z)\sum_{P\in \{P_{\omega,\pm},P_{\omega,loc}\}}\mathcal{T}\big(P S_{\omega,ij}(z) P\big).
\end{aligned}
\end{equation*}
For the same reason as \eqref{eq_link_HS_bulk_index_proof_1}, we can interchange TPUV and complex integral and obtain
\begin{equation} \label{eq_link_HS_bulk_index_proof_2}
\begin{aligned}
\lim_{L\to\infty}\frac{1}{|\Lambda_L|}\int_{\mathbb{C}}dm(z)\partial_{\overline{z}}\tilde{\rho}(z)\text{Tr}_{\Lambda_L}\big( S_{\omega,ij}(z) \big) =\sum_{P\in \{P_{\omega,\pm},P_{\omega,loc}\}}\mathcal{T}\big(P \int_{\mathbb{C}}dm(z)\partial_{\overline{z}}\tilde{\rho}(z)S_{\omega,ij}(z) P\big).
\end{aligned}
\end{equation}

{\color{blue}Step 3.2.} To further manipulate the right side of \eqref{eq_link_HS_bulk_index_proof_2}, the following decomposition is important:
\begin{equation} \label{eq_decompose_Sij}
\begin{aligned}
S_{\omega,ij}(z)
&=(\mathcal{H}_{\omega}-z)^{-1} [\mathcal{H}_{\omega},x_i] (\mathcal{H}_{\omega}-z)^{-1} [\mathcal{H}_{\omega},x_j] (\mathcal{H}_{\omega}-z)^{-1} \\
&=-(\mathcal{H}_{\omega}-z)^{-1} [\mathcal{H}_{\omega},x_i](\mathcal{H}_{\omega}-z)^{-1}x_j
+(\mathcal{H}_{\omega}-z)^{-1} [\mathcal{H}_{\omega},x_i]x_j(\mathcal{H}_{\omega}-z)^{-1} \\
&= [(\mathcal{H}_{\omega}-z)^{-1},x_i]x_j
+(\mathcal{H}_{\omega}-z)^{-1} [\mathcal{H}_{\omega},x_i]x_j(\mathcal{H}_{\omega}-z)^{-1} \\
&=:S_{\omega,ij}^{(1)}(z)+S_{\omega,ij}^{(2)}(z) ,
\end{aligned}
\end{equation}
which follows from the following identity: 
$$
(\mathcal{H}_{\omega}-z)^{-1} [\mathcal{H}_{\omega},x_j] (\mathcal{H}_{\omega}-z)^{-1}=-[(\mathcal{H}_{\omega}-z)^{-1},x_j].
$$
Moreover, we note that, for $P=P_{\omega,\pm}$ and any $\bm{n}\in \Lambda_L$,
\begin{equation} \label{eq_link_HS_bulk_index_proof_3}
\begin{aligned}
&\mathbbm{1}_{\{\bm{n}\}}P\Big( \int_{\mathbb{C}}dm(z)\partial_{\overline{z}}\tilde{\rho}(z)S_{\omega,ij}^{(2)}(z)\Big) P\mathbbm{1}_{\{\bm{n}\}} \\
& \ds = \int_{\mathbb{C}}dm(z)\partial_{\overline{z}}\tilde{\rho}(z) \Big(\mathbbm{1}_{\{\bm{n}\}}P (\mathcal{H}_{\omega}-z)^{-1} [\mathcal{H}_{\omega},x_i]x_j(\mathcal{H}_{\omega}-z)^{-1} P \mathbbm{1}_{\{\bm{n}\}}\Big) \\
&=0,
\end{aligned}
\end{equation}
because $(\mathcal{H}_{\omega}-z)^{-1} P=P(\mathcal{H}_{\omega}-z)^{-1}$ is analytic either in the support of $\tilde{\rho}$ or in its complement and hence, using the integration by parts, the complex integral \eqref{eq_link_HS_bulk_index_proof_3} can be calculated on the boundary $\partial (\text{supp}(\tilde{\rho}))$ on which the integrand vanishes.\footnote{A detailed proof follows the lines of \cite[Proposition 2.9]{qiu2025generalized}. Note that in \cite{qiu2025generalized} we deal with bounded operators but the position operator in \eqref{eq_link_HS_bulk_index_proof_3} is unbounded. Nonetheless, this is not essential for the proof as it does not hurt the analyticity. In fact, the application of the indicator function $\mathbbm{1}_{\{\bm{n}\}}$ and the localization property of $(\mathcal{H}_{\omega}-z)^{-1} P$ (in the sense of Lemma \ref{lem_composition_weak_localized})
implies $x_j(\mathcal{H}_{\omega}-z)^{-1} P \mathbbm{1}_{\{\bm{n}\}}$ is both analytic and $\ell^2$-bounded (up to a polynomial order of $|\Im z|^{-1}$). Then the proof of \cite[Proposition 2.9]{qiu2025generalized} applies readily to \eqref{eq_link_HS_bulk_index_proof_3}.} Thus, it holds $\mathbb{P}-$almost surely that
\begin{equation*}
\begin{aligned}
\lim_{L\to\infty}\frac{1}{|\Lambda_L|}\int_{\mathbb{C}}dm(z)\partial_{\overline{z}}\tilde{\rho}(z)\text{Tr}_{\Lambda_L}\big( S_{\omega,L,ij}(z) \big) 
&=\sum_{P\in \{P_{\omega,-},P_{\omega,+}\}}\mathcal{T}\big(P \int_{\mathbb{C}}dm(z)\partial_{\overline{z}}\tilde{\rho}(z)S_{\omega,ij}^{(1)}(z) P\big) \\
&\quad + \mathcal{T}\big(P_{\omega,loc} \int_{\mathbb{C}}dm(z)\partial_{\overline{z}}\tilde{\rho}(z)S_{\omega,ij}(z) P_{\omega,loc}\big) .
\end{aligned}
\end{equation*}
On the other hand, by the Helffer-Sjöstrand formula, we have
\begin{equation*}
\int_{\mathbb{C}}dm(z)\partial_{\overline{z}}\tilde{\rho}(z)S_{\omega,ij}^{(1)}(z)
=\int_{\mathbb{C}}dm(z)\partial_{\overline{z}}\tilde{\rho}(z)[(\mathcal{H}_{\omega}-z)^{-1},x_i]x_j =\pi i [\rho(\mathcal{H}_{\omega}),x_i]x_j .
\end{equation*}
Hence, we conclude that
\begin{equation} \label{eq_link_HS_bulk_index_proof_4}
\begin{aligned}
\lim_{L\to\infty}\frac{1}{|\Lambda_L|}\int_{\mathbb{C}}dm(z)\partial_{\overline{z}}\tilde{\rho}(z)\text{Tr}_{\Lambda_L}\big( S_{\omega,L,ij}(z) \big) 
&=\pi i\sum_{P\in \{P_{\omega,-},P_{\omega,+}\}}\mathcal{T}\big(P [\rho(\mathcal{H}_{\omega}),x_i]x_j P\big) \\
\nm
&\quad + \mathcal{T}\big(P_{\omega,loc} \int_{\mathbb{C}}dm(z)\partial_{\overline{z}}\tilde{\rho}(z)S_{\omega,ij}(z) P_{\omega,loc}\big) .
\end{aligned}
\end{equation}
However, we cannot obtain a precise expression for the term involving $P_{\omega,loc}$. In fact,
\begin{equation} \label{eq_link_HS_bulk_index_proof_7}
\begin{aligned}
&\mathcal{T}\big(P_{\omega,loc} \int_{\mathbb{C}}dm(z)\partial_{\overline{z}}\tilde{\rho}(z)S_{\omega,ij}(z) P_{\omega,loc}\big) \\
&=\mathcal{T}\Big\{P_{\omega,loc} \int_{\mathbb{C}}dm(z)\partial_{\overline{z}}\tilde{\rho}(z) [(\mathcal{H}_{\omega}-z)^{-1},x_i]x_j P_{\omega,loc} \\
&\quad\quad\quad\quad\quad + P_{\omega,loc} \int_{\mathbb{C}}dm(z)\partial_{\overline{z}}\tilde{\rho}(z) (\mathcal{H}_{\omega}-z)^{-1} [\mathcal{H}_{\omega},x_i]x_j(\mathcal{H}_{\omega}-z)^{-1} P_{\omega,loc} \Big\} \\
&=\mathcal{T}\Big\{\pi i\cdot P_{\omega,loc} [\rho(\mathcal{H}_{\omega}),x_i]x_jP_{\omega,loc} \\
&\quad\quad\quad\quad\quad + P_{\omega,loc} \int_{\mathbb{C}}dm(z)\partial_{\overline{z}}\tilde{\rho}(z) (\mathcal{H}_{\omega}-z)^{-1} [\mathcal{H}_{\omega},x_i]x_j(\mathcal{H}_{\omega}-z)^{-1}P_{\omega,loc} \Big\},
\end{aligned}
\end{equation}
where the Helffer-Sjöstrand formula is applied in the last equality. Note that the equalities in \eqref{eq_link_HS_bulk_index_proof_7} are exactly contained in the correction term of the edge index \eqref{eq_refined_bulk_index}. Compared to \eqref{eq_link_HS_bulk_index_proof_3}, one sees that the second term in \eqref{eq_link_HS_bulk_index_proof_7} does not vanish because $P_{\omega,loc}(\mathcal{H}_{\omega}-z)^{-1}$ is not analytic in the support of $\tilde{\rho}$ or of $1-\tilde{\rho}$. More importantly, the operator $[\rho(\mathcal{H}_{\omega}),x_i]x_j$ is not covariant, which prevents \eqref{eq_link_HS_bulk_index_proof_7} from being decomposed into two separate TPUV. This explains the complicated expression of the correction terms in \eqref{eq_refined_bulk_index}.

{\color{blue}Step 3.3.} By \eqref{eq_link_HS_bulk_index_proof_4}, \eqref{eq_link_HS_bulk_index_proof_7}, and the results of Sections \ref{sec_edge_index_HS_formula}-\ref{sec_HS_formula_edge_to_bulk}, we conclude that
\begin{equation*}
\begin{aligned}
\lim_{L\to\infty}\mathcal{E}_{edge,\omega,L} 
& - \frac{i}{2}\cdot \mathcal{T}\Big\{P_{\omega,loc} [\rho(\mathcal{H}_{\omega}),x_1]x_2P_{\omega,loc} \\
&\quad\quad\quad\quad\quad + \frac{1}{\pi}P_{\omega,loc} \int_{\mathbb{C}}dm(z)\partial_{\overline{z}}\tilde{\rho}(z) (\mathcal{H}_{\omega}-z)^{-1} [\mathcal{H}_{\omega},x_1]x_2(\mathcal{H}_{\omega}-z)^{-1}P_{\omega,loc} \Big\} \\
&  + \frac{i}{2}\cdot \mathcal{T}\Big\{P_{\omega,loc} [\rho(\mathcal{H}_{\omega}),x_2]x_1P_{\omega,loc} \\
&\quad\quad\quad\quad\quad + \frac{1}{\pi}P_{\omega,loc} \int_{\mathbb{C}}dm(z)\partial_{\overline{z}}\tilde{\rho}(z) (\mathcal{H}_{\omega}-z)^{-1} [\mathcal{H}_{\omega},x_2]x_1(\mathcal{H}_{\omega}-z)^{-1}P_{\omega,loc} \Big\} \\
&=\frac{i}{2}\sum_{P\in \{P_{\omega,\pm}\}}\mathcal{T}\big(P [\rho(\mathcal{H}_{\omega}),x_1]x_2 P\big) - \frac{i}{2}\sum_{P\in \{P_{\omega,\pm}\}}\mathcal{T}\big(P [\rho(\mathcal{H}_{\omega}),x_2]x_1 P\big).
\end{aligned}
\end{equation*}
Thus, it remains to prove the following identities to conclude the proof of Theorem \ref{thm_bec_disorder}:
\begin{equation}  \label{eq_link_HS_bulk_index_proof_12}
\begin{aligned}
& \mathcal{E}_{bulk,\omega,\lambda} =i \sum_{P\in \{P_{\omega,\pm}\}}\mathcal{T}\big(P [\rho(\mathcal{H}_{\omega}),x_1]x_2 P\big)  -i\mathcal{T}\big( P_{\omega,loc}\int d\lambda 
\rho^{\prime}(\lambda) T_{\lambda,12} P_{\omega,loc}\big) .
\end{aligned}
\end{equation}
and
\begin{equation}  \label{eq_link_HS_bulk_index_proof_12_alter}
\begin{aligned}
& \mathcal{E}_{bulk,\omega,\lambda} =-i \sum_{P\in \{P_{\omega,\pm}\}}\mathcal{T}\big(P [\rho(\mathcal{H}_{\omega}),x_2]x_1 P\big)  +i\mathcal{T}\big( P_{\omega,loc}\int d\lambda 
\rho^{\prime}(\lambda) T_{\lambda,21} P_{\omega,loc}\big) .
\end{aligned}
\end{equation}
We only prove \eqref{eq_link_HS_bulk_index_proof_12}, while \eqref{eq_link_HS_bulk_index_proof_12_alter} is derived following the same lines. To do this, we note that
\begin{equation*}
\begin{aligned}
\mathcal{T}\big(P_{\omega,\lambda}[P_{\omega,\lambda},x_1][P_{\omega,\lambda},x_2]\big)
&\overset{(i)}{=}\mathcal{T}\big(P_{\omega,\lambda}[P_{\omega,\lambda},x_1][P_{\omega,\lambda},x_2]P_{\omega,\lambda}\big)
=\mathcal{T}\big(P_{\omega,\lambda}[P_{\omega,\lambda}^{\perp},x_1][P_{\omega,\lambda}^{\perp},x_2]P_{\omega,\lambda}\big) \\
&\overset{(ii)}{=}\mathcal{T}\big(P_{\omega,\lambda}x_1P_{\omega,\lambda}^{\perp} x_2P_{\omega,\lambda}\big),
\end{aligned}
\end{equation*}
where the application of cyclicity in (i) is justified by the covariance and localization of $P_{\omega,\lambda}$ and $[P_{\omega,\lambda},x_i]$, and the identity $P_{\omega,\lambda}P_{\omega,\lambda}^{\perp}=0$ is applied in (ii). Similarly, we have
\begin{equation*}
\begin{aligned}
\mathcal{T}\big(P_{\omega,\lambda}[P_{\omega,\lambda},x_2][P_{\omega,\lambda},x_1]\big)
&=\mathcal{T}\big(P_{\omega,\lambda}[P_{\omega,\lambda}^{\perp},x_2][P_{\omega,\lambda}^{\perp},x_1]P_{\omega,\lambda}\big)
=-\mathcal{T}\big(P_{\omega,\lambda}x_2P_{\omega,\lambda}^{\perp} x_1P_{\omega,\lambda}\big) \\
&=\mathcal{T}\big([P_{\omega,\lambda},x_2]P_{\omega,\lambda}^{\perp}P_{\omega,\lambda}^{\perp} [P_{\omega,\lambda},x_1]\big)=\mathcal{T}\big(P_{\omega,\lambda}^{\perp} [P_{\omega,\lambda},x_1][P_{\omega,\lambda},x_2]P_{\omega,\lambda}^{\perp}\big) \\
&=-\mathcal{T}\big(P_{\omega,\lambda}^{\perp}x_1P_{\omega,\lambda} x_2P_{\omega,\lambda}^{\perp}\big).
\end{aligned}
\end{equation*}
Hence, it follows that
\begin{equation} \label{eq_link_HS_bulk_index_proof_18}
\begin{aligned}
\mathcal{E}_{bulk,\omega,\lambda}&=-i\cdot\mathcal{T}\big(P_{\omega,\lambda}\big[[P_{\omega,\lambda},x_1],[P_{\omega,\lambda},x_2]\big]\big) \\
&=i\cdot \mathcal{T}\big(P_{\omega,\lambda}x_1P_{\omega,\lambda}^{\perp} x_2P_{\omega,\lambda}-P_{\omega,\lambda}^{\perp}x_1P_{\omega,\lambda} x_2P_{\omega,\lambda}^{\perp}\big)=i\cdot \mathcal{T}\big(T_{\lambda,12}\big).
\end{aligned}
\end{equation}
On the other hand, since $\mathcal{E}_{bulk,\omega,\lambda}$ is independent of $\lambda\in \mathcal{I}$, we have the following equality by $1=-\int d\lambda\rho^{\prime}(\lambda)$:
\begin{equation*}
\begin{array}{lll}
\mathcal{E}_{bulk,\omega,\lambda} &=& -i\int d\lambda\rho^{\prime}(\lambda)\mathcal{T}\big(T_{\lambda,12}\big)\\
\nm 
&=& -i\cdot \mathcal{T}\big(\int d\lambda\rho^{\prime}(\lambda)T_{\lambda,12}\big).
\end{array}
\end{equation*}
Here, the interchange between the TPUV and the integral is justified by the Fubini theorem, the fact that $T_{\lambda,12}$ is covariant, and the estimate $\int_{\text{supp}(\rho^{\prime})} d\lambda \cdot\mathbb{E}\Big(\big|T_{\lambda,12}(0,0)  \big| \Big)<\infty$.\footnote{To prove this, we write the TPUV as expectation, and see that $$\mathbb{E}\Big( \big|\big(P_{\omega,\lambda}x_1P_{\omega,\lambda}^{\perp} x_2P_{\omega,\lambda}\big)(0,0) \big| \Big)\leq \mathbb{E}\big(\|xP_{\omega,\lambda}\mathbbm{1}_{\{0\}}\|_{\mathcal{X}}^2 \big)\leq \mathbb{E}\big(\sum_{\bm{n}\in \mathbb{Z}^2}|\bm{n}|^2\|P_{\omega,\lambda}(0,\bm{n})\| \big)\leq C<\infty,$$
with the constant $C$ being independent of $\lambda$ thanks to  \eqref{eq_expectation_all_B1_function}. Hence, the TPUV is interchangable with the $\lambda$-integral.                 } Next, we insert $1=P_{\omega,+}+P_{\omega,-}+P_{\omega,loc}$ into the TPUV and apply the cyclicity
\begin{equation*}
\begin{aligned}
\mathcal{E}_{bulk,\omega,\lambda}
&= -i\cdot \mathcal{T}\big(P_{\omega,-}\int d\lambda\rho^{\prime}(\lambda)T_{\lambda,12}P_{\omega,-}\big)-i\cdot \mathcal{T}\big(P_{\omega,+}\int d\lambda\rho^{\prime}(\lambda)T_{\lambda,12}P_{\omega,+}\big)\\
&\quad -i\cdot \mathcal{T}\big(P_{\omega,loc}\int d\lambda\rho^{\prime}(\lambda)T_{\lambda,12}P_{\omega,loc}\big) .
\end{aligned}
\end{equation*}
For the first two terms, we note that from$$P_{\omega,-}P_{\omega,\lambda}=P_{\omega,\lambda}P_{\omega,-}=P_{\omega,-}$$ and $$P_{\omega,-}P_{\omega,\lambda}^{\perp}=P_{\omega,\lambda}^{\perp}P_{\omega,-}=0,$$ 
it follows that
\begin{equation} \label{eq_T_lambda_decompose_1}
\begin{array}{lll}
P_{\omega,-}T_{\lambda,12}P_{\omega,-} 
=P_{\omega,-}x_1P_{\omega,\lambda}^{\perp}x_2P_{\omega,-} 
= P_{\omega,-}[P_{\omega,\lambda},x_1]x_2P_{\omega,-},
\end{array}
\end{equation}
and similarly,
\begin{equation} \label{eq_T_lambda_decompose_2}
P_{\omega,+}T_{\lambda,12}P_{\omega,+}=-P_{\omega,+}x_1P_{\omega,\lambda}x_2P_{\omega,+}=P_{\omega,+}[P_{\omega,\lambda},x_1]x_2P_{\omega,+}.
\end{equation}
Thus, we have
\begin{equation} \label{eq_link_HS_bulk_index_proof_13}
\begin{aligned}
\mathcal{E}_{bulk,\omega,\lambda}
&=-i\cdot \mathcal{T}\big(P_{\omega,-}\int d\lambda\rho^{\prime}(\lambda)[P_{\omega,\lambda},x_1]x_2P_{\omega,-}\big)-i\cdot \mathcal{T}\big(P_{\omega,+}\int d\lambda\rho^{\prime}(\lambda)[P_{\omega,\lambda},x_1]x_2P_{\omega,+}\big) \\
&\quad -i\cdot \mathcal{T}\big(P_{\omega,loc}\int d\lambda\rho^{\prime}(\lambda)T_{\lambda,12}P_{\omega,loc}\big) \\
&=i\cdot \mathcal{T}\big(P_{\omega,-}[\rho(\mathcal{H_{\omega}}),x_1]x_2P_{\omega,-}\big)+i\cdot \mathcal{T}\big(P_{\omega,+}[\rho(\mathcal{H_{\omega}}),x_1]x_2P_{\omega,+}\big) \\
&\quad -i\cdot \mathcal{T}\big(P_{\omega,loc}\int d\lambda\rho^{\prime}(\lambda)T_{\lambda,12}P_{\omega,loc}\big) ,
\end{aligned}
\end{equation}
where the identity $\rho(\mathcal{H})=-\int d\lambda\rho^{\prime}(\lambda)P_{\omega,\lambda}$ is applied in the last equality. This concludes the proof of \eqref{eq_link_HS_bulk_index_proof_12}.

\section{Aizenman-Molchanov Localization Under Random Gap Opening Potential}
\label{sec_anderson_loc}

We prove Theorem \ref{thm_anderson_localization} in this section. We first establish the Lifshitz-type estimate and the initial length estimate in Section \ref{sec_lif_initial}, which are the two key ingredients of the proof. Then, following the standard geometric decoupling argument, we complete the proof of Theorem \ref{thm_anderson_localization} in Section \ref{sec_geo_decouple}.

\subsection{Lifshitz Tail and Initial Length Estimate Near Gap Opening Point}
\label{sec_lif_initial}

We first introduce the box Hamiltonian with periodic boundary conditions, which is a counterpart of the one with simple boundary conditions defined in \eqref{eq_box_Hamiltonian_simple}, but is more suitable for studying the Lifshitz tail near the gap opening point. Define the gluing operator $\mathcal{F}_L\in \mathcal{B}(\mathcal{X}_{L})$ specified by the following kernel:
\begin{equation*}
\mathbb{C}^{d\times d}\owns \mathcal{F}_L(\bm{n},\bm{m}):=\left\{
\begin{aligned}
&\mathcal{H}_0(\bm{n},\bm{n}\pm\bm{e}_1),\quad n_1=\pm L,\, \bm{m}=\bm{n}\mp 2L\bm{e}_1, \\
&\mathcal{H}_0(\bm{n},\bm{n}\pm\bm{e}_2),\quad n_2=\pm L,\, \bm{m}=\bm{n}\mp 2L\bm{e}_2, \\
&0,\quad \text{else} .
\end{aligned}
\right.
\end{equation*}
With this notation, the box Hamiltonian with periodic boundary conditions is defined by adding the gluing operator to $\mathcal{H}_{\omega,L}^{sim}$ as follows:
\begin{equation} \label{eq_box_Hamiltonian_periodic}
\mathcal{H}_{\omega,L}^{\sharp}:=\iota^{*}_{\Lambda_L}\mathcal{H}_{\omega}\iota_{\Lambda_L}+\mathcal{F}_L\in \mathcal{B}(\mathcal{X}_{L}).
\end{equation}
Intuitively, the gluing operator $\mathcal{F}_L$ adds the hopping across the boundary to $\mathcal{H}_{\omega,L}^{sim}$, which makes the underlying domain a torus.

The first result estimates the probability of $\mathcal{H}_{\omega,L}^{\sharp}$ having a non-empty spectrum near the gap opening point $\lambda=\lambda_0$.

\begin{proposition}[Lifshitz-type estimate] \label{prop_lifshitz_tail}
For any $\beta >0$, there exist $C,L_0>0$ such that
\begin{equation} \label{eq_lifshitz_tail}
\mathbb{P}\big( \sigma(\mathcal{H}_{\omega,L}^{\sharp})\cap (\lambda_0-L^{-\beta},\lambda_0+L^{-\beta})\neq \emptyset \big)\leq CL^{2-\eta\beta}
\end{equation}
for any $L>L_0$. The exponent $\eta>0$ is the same as in Assumption \ref{asmp_distribution_density}.
\end{proposition}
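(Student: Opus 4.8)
The plan is to control the probability by the expected number of eigenvalues of $\mathcal{H}_{\omega,L}^{\sharp}$ in the shrinking interval $J_L:=(\lambda_0-L^{-\beta},\lambda_0+L^{-\beta})$, and to show that this expectation is polynomially small in $L$. Since the rank of a spectral projector is a nonnegative integer, Chebyshev's inequality gives
$$\mathbb{P}\big(\sigma(\mathcal{H}_{\omega,L}^{\sharp})\cap J_L\neq\emptyset\big)\le \mathbb{E}\big(\text{Tr}_{\Lambda_L}\mathbbm{1}_{J_L}(\mathcal{H}_{\omega,L}^{\sharp})\big).$$
The role of the periodic (rather than simple) boundary conditions is precisely that $\mathcal{H}_{\omega,L}^{\sharp}$ is covariant under the translations of the discrete torus $\mathbb{Z}^{2}/(2L+1)\mathbb{Z}^{2}$, and that the unperturbed toroidal operator $\mathcal{H}_{per,L}^{\sharp}:=\iota_{\Lambda_L}^{*}\mathcal{H}_{per}\iota_{\Lambda_L}+\mathcal{F}_L$ is a genuine periodic operator, so that $\sigma(\mathcal{H}_{per,L}^{\sharp})\subseteq\sigma(\mathcal{H}_{per})$ and the band gap $(\lambda_0-\Delta,\lambda_0+\Delta)$ around $\lambda_0$ persists for every $L$. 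By torus covariance the diagonal block $\mathbb{E}\big((\mathbbm{1}_{J_L}(\mathcal{H}_{\omega,L}^{\sharp}))(\bm n,\bm n)\big)$ is independent of $\bm n\in\Lambda_L$, so it suffices to prove the single-cell bound $\mathbb{E}\big(\text{tr}\,\big((\mathbbm{1}_{J_L}(\mathcal{H}_{\omega,L}^{\sharp}))(0,0)\big)\big)\le C L^{-\eta\beta}$; multiplying by $|\Lambda_L|=\mathcal{O}(L^{2})$ then yields \eqref{eq_lifshitz_tail}.

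\textbf{Using the gap.} Write $\mathcal{H}_{\omega,L}^{\sharp}=\mathcal{H}_{per,L}^{\sharp}+W_{\omega,L}$, where $W_{\omega,L}=\iota_{\Lambda_L}^{*}(V_{\omega}-1\otimes V)\iota_{\Lambda_L}$ is diagonal with on-site blocks $\text{diag}((\omega_{\bm n}^{(1)}-1)V_1,\dots,(\omega_{\bm n}^{(d)}-1)V_d)$; each such entry lies in a fixed bounded interval and vanishes iff $\omega_{\bm n}^{(i)}=1$. Since $\sigma(\mathcal{H}_{per,L}^{\sharp})$ avoids $(\lambda_0-\Delta,\lambda_0+\Delta)$, once $L^{-\beta}<\Delta$ we have $\mathbbm{1}_{J_L}(\mathcal{H}_{per,L}^{\sharp})=0$ and $(\mathcal{H}_{per,L}^{\sharp}-z)^{-1}$ is analytic with norm $\lesssim\Delta^{-1}$ for $z$ in a neighborhood of $\overline{J_L}$. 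Moreover any unit eigenvector $\psi$ of $\mathcal{H}_{\omega,L}^{\sharp}$ at energy $E\in J_L$ satisfies $(\mathcal{H}_{per,L}^{\sharp}-E)\psi=-W_{\omega,L}\psi$, whence $\|W_{\omega,L}\psi\|\ge\Delta-L^{-\beta}$: only genuinely deep defects can bridge the gap. This is the qualitative picture; the quantitative input is the probability estimate below.

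\textbf{The single-cell estimate via spectral averaging.} Fix $k$ with $V_k\neq0$ and condition on all random variables other than $\omega_0^{(k)}$, so that $\omega_0^{(k)}\mapsto\mathcal{H}_{\omega,L}^{\sharp}=B+\omega_0^{(k)}V_k\,\Pi$ is a rank-one, sign-definite monotone family, where $\Pi$ is the orthogonal projection onto $\mathbbm{1}_{\{0\}}\otimes e_k$. The classical spectral-averaging inequality (Kotani--Simon, Combes--Hislop--Klopp) bounds $\mathbb{E}_{\omega_0^{(k)}}\big(\langle \mathbbm{1}_{\{0\}}\otimes e_k,\,\mathbbm{1}_{J_L}(\mathcal{H}_{\omega,L}^{\sharp})\,\mathbbm{1}_{\{0\}}\otimes e_k\rangle\big)$ by $\|\mu\|_{\infty}|J_L|/|V_k|=\mathcal{O}(L^{-\beta})$, which is too weak. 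The improvement to $\mathcal{O}(|J_L|^{\eta})=\mathcal{O}(L^{-\eta\beta})$ exploits that $\lambda_0$ lies at distance $\ge\Delta$ from $\sigma(\mathcal{H}_{per})$: a rank-one (Birman--Schwinger) analysis shows that an eigenvalue $E$ near $\lambda_0$ with nonvanishing overlap at $\mathbbm{1}_{\{0\}}\otimes e_k$ forces $\langle \mathbbm{1}_{\{0\}}\otimes e_k,(B-E)^{-1}\mathbbm{1}_{\{0\}}\otimes e_k\rangle=-(\omega_0^{(k)}V_k)^{-1}$, and, combining this with the deterministic bound $|\langle \mathbbm{1}_{\{0\}}\otimes e_k,(B-E)^{-1}\mathbbm{1}_{\{0\}}\otimes e_k\rangle|\le\text{dist}(E,\sigma(B))^{-1}$ and the monotone dependence of the in-gap level on the coupling, one sees that a level as deep as $\lambda_0$ can be reached only when $\omega_0^{(k)}$ is driven to within $\mathcal{O}(L^{-\beta})$ of the lower endpoint $0$ of its support. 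Hence the admissible set of $\omega_0^{(k)}$ is contained in $[0,c\,L^{-\beta}]$, whose $\mu$-measure is $\le C(L^{-\beta})^{\eta}$ by Assumption~\ref{asmp_distribution_density}; summing over $k=1,\dots,d$ and over the (at most $|\Lambda_L|$) possibilities produced by the reduction above completes the proof.

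\textbf{Main obstacle.} The delicate step is the last one. For a general deterministic part $\mathcal{H}_0$ with no gap of its own, the usual Dirichlet--Neumann bracketing is unavailable, so one cannot localize the analysis to a small periodic sub-box; instead one must rule out directly that moderate on-site couplings (bounded away from $0$), possibly assisted by far-away defects, place the spectrum within $L^{-\beta}$ of the deep-gap point $\lambda_0$. Making this quantitative requires combining the full rank-$d$ perturbation at the origin, the $\Delta$-gap of $\mathcal{H}_{per}$, and Combes--Thomas/geometric-resolvent decay, and it is exactly this step that constrains how large $\beta$ may be taken (cf.\ Remark~\ref{rmk_beta_size}); the hypothesis $\eta>5$ then ensures that the resulting exponent $2-\eta\beta$ is negative enough to serve as the input of the geometric-decoupling scheme of Section~\ref{sec_geo_decouple}.
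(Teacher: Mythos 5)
Your overall strategy is genuinely different from the paper's, but it contains a gap precisely at the step you yourself flag as the ``main obstacle,'' and it is not a cosmetic one. You reduce, via Chebyshev and torus covariance, to the \emph{density-of-states} bound $\mathbb{E}\big(\mathrm{tr}\,(\mathbbm{1}_{J_L}(\mathcal{H}_{\omega,L}^{\sharp}))(0,0)\big)\le CL^{-\eta\beta}$. This is a strictly stronger statement than the proposition: the expected number of eigenvalues in $J_L$ must be polynomially small, not merely the probability of having any. The paper never needs this, and you should be suspicious of a plan that promotes a probability estimate to an expectation estimate at the very first line.

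The step that does not close is the single-site rank-one analysis. When you condition on all variables except $\omega_0^{(k)}$ and write $\mathcal{H}_{\omega,L}^{\sharp}=B+\omega_0^{(k)}V_k\Pi$, the Birman--Schwinger identity $\langle e,(B-E)^{-1}e\rangle=-(\omega_0^{(k)}V_k)^{-1}$ only characterizes eigenvalues $E$ that are \emph{not} already eigenvalues of $B$, and the quantitative control of the admissible set of $\omega_0^{(k)}$ hinges on $|\langle e,(B-E)^{-1}e\rangle|\lesssim\Delta^{-1}$. But $B$ still contains all the \emph{other} random couplings, and if some $\omega_{\bm n}^{(j)}$ with $\bm n\neq 0$ is itself close to $0$, then $\mathrm{dist}(E,\sigma(B))$ can be arbitrarily small; your deterministic bound on the scalar Green function fails and the admissible interval for $\omega_0^{(k)}$ is no longer $O(L^{-\beta})$. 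In short, the local spectral-averaging step cannot rule out eigenvalues injected into the gap by defects elsewhere. Combes--Thomas decay does not rescue this by itself: the sites are all inside $\Lambda_L$, not asymptotically far away, so the naive resolvent expansion gives no uniform gain. Your last sentence (``summing over the at most $|\Lambda_L|$ possibilities'') gestures at a union bound over sites, but within the first-moment framework there is no statement to union-bound: what you actually need is that for \emph{every} $\bm n$ and $k$, conditioned on the rest, the admissible set of $\omega_{\bm n}^{(k)}$ is $O(L^{-\beta})$, and that conditioning does not hold uniformly.

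The paper's proof sidesteps all of this with a global argument you did not consider: it analytically deforms $\mathcal{H}_{\omega,L}^{\sharp}(t)=\mathcal{H}_{per,L}^{\sharp}+V_{\omega,t}$ from the periodic operator $(t=0)$ to the random one $(t=1)$, tracks each eigenvalue branch $\lambda(t)$ by analytic perturbation theory, and observes that any branch that reaches $J_L$ must have traveled distance $\ge\Delta-L^{-\beta}$ from outside the periodic band gap. The Feynman--Hellmann formula bounds $|\lambda'(t)|$ uniformly in $t$ by $\Delta\cdot\sup_{\bm n,i}|1-\omega_{\bm n}^{(i)}|$, so \emph{some} $\omega_{\bm n}^{(i)}$ must be within $\Delta^{-1}L^{-\beta}$ of the endpoint $0$. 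A union bound over the $O(dL^2)$ variables then invokes Assumption~\ref{asmp_distribution_density} to finish. This is a direct probability estimate, with no density-of-states input and no local versus global tension; it buys cleanliness and a short proof at the cost of needing the extremal (boundary-of-support) structure of the randomness. Your spectral-averaging route, if it could be made to work, would yield the stronger Wegner-type DOS estimate in the gap, but that would likely require Dirichlet--Neumann bracketing or a comparable localization of the variational analysis, which the paper explicitly notes is unavailable for general tight-binding $\mathcal{H}_0$ (see Remark~\ref{rmk_beta_size}).
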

\begin{remark} \label{rmk_beta_size}
Proposition \ref{prop_lifshitz_tail} demonstrates that eigenvalues near the gap opening point are rare when $\eta>0$ is sufficiently large, as a key element to prove the Anderson localization in the sense of the Aizenman-Molchanov regime.\footnote{A sufficient size of $\eta$ for proving Theorem \eqref{thm_anderson_localization}, i.e. $\eta>5$, is supposed in Assumption \ref{asmp_distribution_density} to guarantee the Anderson localization.} The availability of a scale $\beta <1$ is important. In fact, if one discards Assumption \ref{asmp_distribution_density} and assumes more information on the band structure of deterministic Hamiltonians $\mathcal{H}_0$ and $\mathcal{H}_{per}$, such as the effect of gap opening potential on the Bloch modes, one can prove an analogous estimate to \eqref{eq_lifshitz_tail} but only for $\beta\geq 2$.\footnote{This can indeed be done for example for the QWZ model by following the lines of \cite[Section 2.1]{stollmann2001caught}; see the discussion in Appendix \ref{appendixA}. Details are left for the interested reader.} However, the size of rare region for eigenvalues specified by $\beta\geq 2$ is not sufficient to prove Theorem \ref{thm_anderson_localization}. If the hopping part of the Hamiltonian is special, e.g., $\mathcal{H}_{0}$ being the discrete Laplacian, the Dirichlet-Neumann bracketing method is available to improve the estimates on the size of the rare region; see \cite[Section 2.2]{kirsch2007invitation}. Unfortunately, it is necessary to include more general hopping terms to make the Hamiltonian topologically non-trivial. That is the main reason we assume that \eqref{eq_poly_distribution} holds.
\end{remark}

\begin{proof}[Proof of Proposition \ref{prop_lifshitz_tail}]
{\color{blue}Step 1.} We introduce the following linear family of Hamiltonians:
\begin{equation*}
    \mathcal{H}_{\omega,L}^{\sharp}(t):=\iota_{\Lambda_L}^{*}(\mathcal{H}_{0}+1\otimes V+V_{\omega,t})\iota_{\Lambda_L}+\mathcal{F}_L
\end{equation*}
with
\begin{equation*}
(V_{\omega,t}\psi)(\bm{n}):=-t\cdot \text{diag}\big((1-\omega_{\bm{n}}^{(1)})V_1,\cdots ,(1-\omega_{\bm{n}}^{(d)}V_d)\big)\cdot \psi(\bm{n}) .
\end{equation*}
Apparently,
\begin{equation*}
  \mathcal{H}_{\omega,L}^{\sharp}(0)=\iota_{\Lambda_L}^{*}\mathcal{H}_{per}\iota_{\Lambda_L}+\mathcal{F}_L:=\mathcal{H}_{per,L}^{\sharp}
\end{equation*}
is the restriction of the periodic Hamiltonian $\mathcal{H}_{per}$ to $\Lambda_L$ with periodic boundary conditions and 
\begin{equation*}
  \mathcal{H}_{\omega,L}^{\sharp}(1)=\iota_{\Lambda_L}^{*}(\mathcal{H}_{0}+V_{\omega})\iota_{\Lambda_L}+\mathcal{F}_L=\mathcal{H}_{\omega,L}^{\sharp}.
\end{equation*}
In other words, $\mathcal{H}_{\omega,L}^{\sharp}(t)$ connects $\mathcal{H}_{per,L}^{\sharp}$ linearly to $\mathcal{H}_{\omega,L}^{\sharp}$. Since $\mathcal{H}_{\omega,L}^{\sharp}(t)$ is self-adjoint and has a discrete spectrum, each eigenvalue $\lambda\in \sigma(\mathcal{H}_{\omega,L}^{\sharp})$ is associated with an analytic path $\lambda(t)$ such that
\begin{equation*}
   \lambda(1)=\lambda\in \sigma(\mathcal{H}_{\omega,L}^{\sharp}),\quad \lambda(0)\in \sigma(\mathcal{H}_{per,L}^{\sharp}),
\end{equation*}
by analytic perturbation theory (cf. \cite[Chapter \rom{7}]{kato2013perturbation}).

{\color{blue}Step 2.} Suppose that there exists $\lambda\in \sigma(\mathcal{H}_{\omega,L}^{\sharp})\cap (\lambda_0-L^{-\beta},\lambda_0+L^{-\beta})$. By Assumption \ref{asmp_Dirac_eigenpair} and the discussion in Step 1, we can find an analytic family of eigenvalues $\lambda(t)$ that connects $(\lambda_0-\Delta,\lambda_0+\Delta)^c$ to $(\lambda_0-L^{-\beta},\lambda_0+L^{-\beta})$ (recall that $\sigma(\mathcal{H}_{per,L}^{\sharp})\subset \sigma(\mathcal{H}_{per})$ thanks to periodic boundary conditions), which means that
\begin{equation*}
\Delta-L^{-\beta}\leq \big|\int_{0}^{1}\lambda^{\prime}(t)dt \big|\leq \sup_{t\in [0,1]} |\lambda^{\prime}(t)|.
\end{equation*}
Moreover, if we denote the analytic family of normalized eigenmodes associated with $\lambda(t)$ by $\psi(t)$, the Feynman-Hellmann formula indicates that, for any $t\in [0,1]$,
\begin{equation*}
|\lambda^{\prime}(t)|=\Big|\big\langle (\partial_t\mathcal{H}_{\omega,L}^{\sharp}(t))\psi(t),\psi(t) \big\rangle_{\mathcal{X}_{L}}\Big|=\Big|\sum_{\bm{n}\in \Lambda_L,1\leq i\leq d}(1-\omega_{\bm{n}}^{(i)})V_i|\psi(t)_{\bm{n}}^{(i)}|^2\Big|
\leq \Delta \sup_{\bm{n}\in \Lambda_L,1\leq i\leq d}|1-\omega_{\bm{n}}^{(i)}|,
\end{equation*}
where $\sup_{1\leq i\leq d}|V_i|\leq \Delta$ (Assumption \ref{asmp_Dirac_eigenpair}) is applied to derive the last inequality. In conclusion, we see that
\begin{equation*}
\begin{aligned}
&\mathbb{P}\big( \sigma(\mathcal{H}_{\omega,L}^{\sharp})\cap (\lambda_0-L^{-\beta},\lambda_0+L^{-\beta})\neq \emptyset \big) \\
&\leq \mathbb{P}\big( \sup_{\bm{n}\in \Lambda_L,1\leq i\leq d}|1-\omega_{\bm{n}}^{(i)}|\geq 1-\Delta^{-1}L^{-\beta} \big) \\
&= \mathbb{P}\big( \omega_{\bm{n}}^{(i)}\leq \Delta^{-1}L^{-\beta}\quad \text{for some $\bm{n}\in \Lambda_L$ and $i\in\{1,2,\cdots,d\}$} \big) \\
&\leq 4dL^2 \mathbb{P}\big( \omega_{\bm{0}}^{(1)}\leq \Delta^{-1}L^{-\beta} \big).
\end{aligned}
\end{equation*}
Using \eqref{eq_poly_distribution} in Assumption \ref{asmp_distribution_density}, we obtain
\begin{equation*}
\mathbb{P}\big( \sigma(\mathcal{H}_{\omega,L}^{\sharp})\cap (\lambda_0-L^{-\beta},\lambda_0+L^{-\beta})\neq \emptyset \big) \leq CL^{2-\eta\beta},
\end{equation*}
whenever $L>L_0:=1/(\Delta\tau)^{1/\beta}$. This completes the proof of Proposition \ref{prop_lifshitz_tail}.
\end{proof}

The next key result is the so-called initial length estimate of the fractional moment of the box Green function, which characterizes the initial scale starting from which the off-diagonal part of the Green function becomes small. Before showing the result, we first recall the following a priori estimate of the fractional moment of the Green function.
\begin{lemma} \label{lem_apriori_bound_L}
For any $s\in (0,1)$, there exists $C>0$ such that
\begin{equation*}
\sup_{z\in\mathbb{C}\backslash \mathbb{R}}\mathbb{E}_{\bm{n},\bm{m}}\big(\|G_{\omega,L}^{\text{sim} /\sharp}(\bm{n},\bm{m};z)\|^s \big)\leq C<\infty,
\end{equation*}
which holds for all $L>0$ and $\bm{n},\bm{m}\in\Lambda_L$. Here, $\mathbb{E}_{\bm{n},\bm{m}}$ denotes the conditional expectation with $\{\omega_{\bm{\ell}}^{(i)}\}_{\bm{\ell}\in \mathbb{Z}^2\backslash \{\bm{n},\bm{m}\}}$ fixed, and $G_{\omega,L}^{\text{sim} /\sharp}(\bm{n},\bm{m};z):=(\mathcal{H}_{\omega,L}^{\text{sim} /\sharp}-z)^{-1}(\bm{n},\bm{m})$. Consequently,
\begin{equation*}
\sup_{z\in\mathbb{C}\backslash \mathbb{R}}\mathbb{E}\big(\|G_{\omega,L}^{\text{sim} /\sharp}(\bm{n},\bm{m};z)\|^s \big)\leq C<\infty.
\end{equation*}
\end{lemma}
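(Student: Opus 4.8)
The plan is to reduce the bound on the fractional moment to a purely local estimate on $\mathbb{C}^{d\times d}$-valued resolvent blocks, exploiting that the random potential enters diagonally and that the $\omega_{\bm n}^{(i)}$ have a bounded density $\mu\in L^\infty(\mathbb{R})$. First I would treat the case $\bm n=\bm m$ (the diagonal block), then $\bm n\neq\bm m$, each time conditioning on all randomness away from $\{\bm n,\bm m\}$ and integrating only in the finitely many coordinates $\omega_{\bm n}^{(i)},\omega_{\bm m}^{(i)}$, $1\le i\le d$, that survive. The key algebraic input is the rank-$\le 2d$ perturbation structure: writing $\mathcal{H}_{\omega,L}^{\mathrm{sim}/\sharp}=A+\Pi^{*}D(\bm\omega)\Pi$ where $\Pi$ is the coordinate projection onto $\mathrm{span}\{\mathbbm 1_{\{\bm n\}},\mathbbm 1_{\{\bm m\}}\}\otimes\mathbb{C}^d$, $D(\bm\omega)=\mathrm{diag}\big(\omega_{\bm n}^{(1)}V_1,\dots,\omega_{\bm n}^{(d)}V_d,\omega_{\bm m}^{(1)}V_1,\dots\big)$, and $A$ is independent of these coordinates, the Krein/Schur formula gives
\begin{equation*}
\Pi(\mathcal{H}_{\omega,L}^{\mathrm{sim}/\sharp}-z)^{-1}\Pi^{*}=\big(D(\bm\omega)+\Gamma(z)\big)^{-1},\qquad \Gamma(z):=\big(\Pi(A-z)^{-1}\Pi^{*}\big)^{-1},
\end{equation*}
with $\Gamma(z)$ independent of $\omega_{\bm n}^{(\cdot)},\omega_{\bm m}^{(\cdot)}$ and $\Im\Gamma(z)$ having a sign determined by $\Im z$. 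The desired block $G_{\omega,L}(\bm n,\bm m;z)$ is then a fixed sub-block of $(D(\bm\omega)+\Gamma(z))^{-1}$, and $D(\bm\omega)$ is a diagonal matrix whose entries are the (scaled) independent random variables.

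The heart of the matter is then the deterministic/probabilistic lemma: for a diagonal random matrix $D=\mathrm{diag}(v_1\omega_1,\dots,v_N\omega_N)$ with $\omega_j$ having density bounded by $\|\mu\|_\infty$ and a fixed matrix $\Gamma$ with $\pm\Im\Gamma>0$, one has $\mathbb{E}\big(\|(D+\Gamma)^{-1}\|^s\big)\le C_{s,N,\|\mu\|_\infty,\min_j|v_j|}$ uniformly in $\Gamma$. This is the standard Aizenman–Molchanov / Aizenman–Graf rank-one (here rank-$\le 2d$) a priori bound; I would cite \cite{aizenman1994localization,aizenman2001finite_volime} (or \cite{graf2005equality}) where the multidimensional version of this "$L^s$-boundedness of the inverse against a bounded-density diagonal perturbation" is proven, the mechanism being that $\|(D+\Gamma)^{-1}\|$ can only be large on a set of small measure in the $\omega_j$'s because the eigenvalue-near-zero event is controlled by the density bound and the invertibility of the $v_j$ (here $V_i\neq 0$ is needed, which is implicit in Assumption \ref{asmp_random_potential} since otherwise the potential acts trivially on that channel — if some $V_i=0$ that channel is deterministic and one simply restricts the argument to the nontrivial channels). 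Since $\Gamma(z)$ ranges over matrices with definite imaginary part as $z$ ranges over $\mathbb{C}\setminus\mathbb{R}$, the bound is uniform in $z$ and in $L$ (the constant depends only on $s$, $d$, $\|\mu\|_\infty$ and $\min_i|V_i|$ among nonzero $V_i$, not on the geometry).

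Finally, the unconditioned bound $\mathbb{E}(\|G_{\omega,L}(\bm n,\bm m;z)\|^s)\le C$ follows from $\mathbb{E}=\mathbb{E}(\mathbb{E}_{\bm n,\bm m}(\cdot))$ by the tower property, since the conditional bound is a uniform constant. I expect the main obstacle to be purely technical bookkeeping in the Schur-complement reduction when $\bm n\neq\bm m$ — one must check that $\Pi(A-z)^{-1}\Pi^{*}$ is invertible (it is, for $\Im z\neq 0$, by positivity of its imaginary part) and that the extracted off-diagonal block of $(D+\Gamma)^{-1}$ is still controlled by $\|(D+\Gamma)^{-1}\|$, which is immediate since a sub-block has operator norm at most that of the whole matrix. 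The periodic-boundary case $\sharp$ is identical: $\mathcal{F}_L$ is a fixed ($\omega$-independent) finite-range perturbation absorbed into $A$, so nothing changes.
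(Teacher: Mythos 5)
The paper does not prove this lemma itself; it defers to \cite[Proposition 4.1]{prado2017dynamical} and to \cite{stolz2011introduction,aizenman1993localization_elementary_derivation,aizenman2001finite_volime,aizenman1994localization}. Your proposal is a correct reconstruction of exactly that standard argument: condition on the disorder away from $\{\bm n,\bm m\}$, reduce $\Pi(\mathcal{H}_{\omega,L}^{\mathrm{sim}/\sharp}-z)^{-1}\Pi^{*}$ to $(D(\bm\omega)+\Gamma(z))^{-1}$ by the Schur/Krein formula (your identity $\Gamma(z)=[\Pi(A-z)^{-1}\Pi^{*}]^{-1}$ is correct and $\Im\Gamma(z)$ has a definite sign for $\Im z\neq 0$), and then invoke the Aizenman--Graf rank-$\le 2d$ a priori bound $\mathbb{E}\big(\|(D+\Gamma)^{-1}\|^{s}\big)\le C$ uniformly in $\Gamma$, using the bounded compactly supported density $\mu$. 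The tower property then gives the unconditioned bound. This matches the referenced proofs in substance and in structure; there is no genuinely different route here.

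One caveat worth correcting: your parenthetical remark that "if some $V_i=0$ that channel is deterministic and one simply restricts the argument to the nontrivial channels" is not actually a valid reduction. If $v_j=0$ for some $j$, the corresponding diagonal entry of $D+\Gamma$ is purely deterministic, and one can produce $\Gamma$ with $\Im\Gamma<0$ (e.g.\ block-diagonal with a small $(j,j)$ entry) for which $\|(D+\Gamma)^{-1}\|$ blows up on a deterministic event, so the uniform $L^{s}$ bound fails; averaging over the surviving random coordinates does not rescue it. In other words the lemma implicitly requires all $V_i\neq 0$, which is what the cited references assume and what holds in the QWZ application ($V=\delta\sigma_z$ with $\delta\neq 0$); you should state this as a hypothesis rather than suggest it can be dropped channel by channel.
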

We also need its whole-space version.
\begin{lemma} \label{lem_apriori_bound_whole}
For any $s\in (0,1)$, there exists $C>0$ such that
\begin{equation*}
\sup_{z\in\mathbb{C}\backslash \mathbb{R}}\mathbb{E}_{\bm{n},\bm{m}}\big(\|G_{\omega}(\bm{n},\bm{m};z)\|^s \big)\leq C<\infty,
\end{equation*}
which holds for all $\bm{n},\bm{m}\in \mathbb{Z}^2$. Consequently,
\begin{equation*}
\sup_{z\in\mathbb{C}\backslash \mathbb{R}}\mathbb{E}\big(\|G_{\omega}(\bm{n},\bm{m};z)\|^s \big)\leq C<\infty.
\end{equation*}
\end{lemma}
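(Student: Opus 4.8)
\textbf{Proof proposal for Lemma~\ref{lem_apriori_bound_whole}.} The plan is to reduce the whole-space estimate to the finite-volume one already available in Lemma~\ref{lem_apriori_bound_L}, via a soft approximation; all of the substantive fractional-moment input is thereby inherited from that lemma. Fix $\bm{n},\bm{m}\in\mathbb{Z}^2$ and $z\in\mathbb{C}\setminus\mathbb{R}$, and take $L$ large enough that $\bm{n},\bm{m}\in\Lambda_L$. The first step is to observe that, identifying $\mathcal{H}_{\omega,L}^{\text{sim}}=\iota_{\Lambda_L}^{*}\mathcal{H}_{\omega}\iota_{\Lambda_L}$ with the operator $\mathbbm{1}_{\Lambda_L}\mathcal{H}_{\omega}\mathbbm{1}_{\Lambda_L}\oplus 0$ on $\mathcal{X}=\mathcal{X}_L\oplus\mathcal{X}_L^{\perp}$, short-rangeness of $\mathcal{H}_{\omega}$ (Assumption~\ref{asmp_Hamiltonian_deter_part}) forces $\mathbbm{1}_{\Lambda_L}\mathcal{H}_{\omega}\mathbbm{1}_{\Lambda_L}\psi=\mathcal{H}_{\omega}\psi$ for all $L$ large, for every finitely supported $\psi\in\mathcal{X}$; together with the uniform bound $\|\mathbbm{1}_{\Lambda_L}\mathcal{H}_{\omega}\mathbbm{1}_{\Lambda_L}\|\le\|\mathcal{H}_{\omega}\|$ this yields strong, hence strong resolvent, convergence to $\mathcal{H}_{\omega}$. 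Consequently, for each fixed $\omega$ and each $z$ with $\Im z\neq 0$, the Green-function blocks converge entrywise, $G_{\omega,L}^{\text{sim}}(\bm{n},\bm{m};z)\to G_{\omega}(\bm{n},\bm{m};z)$, and therefore $\|G_{\omega,L}^{\text{sim}}(\bm{n},\bm{m};z)\|\to\|G_{\omega}(\bm{n},\bm{m};z)\|$ by continuity of the Frobenius norm on $\mathbb{C}^{d\times d}$.

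The second step is to pass this pointwise convergence through the conditional expectation $\mathbb{E}_{\bm{n},\bm{m}}$ in the $2d$ averaged variables at $\bm{n}$ and $\bm{m}$ using the conditional Fatou inequality: since $\|G_{\omega,L}^{\text{sim}}(\bm{n},\bm{m};z)\|^{s}\ge 0$ converges pointwise to $\|G_{\omega}(\bm{n},\bm{m};z)\|^{s}$,
\begin{equation*}
\mathbb{E}_{\bm{n},\bm{m}}\big(\|G_{\omega}(\bm{n},\bm{m};z)\|^{s}\big)\le\liminf_{L\to\infty}\mathbb{E}_{\bm{n},\bm{m}}\big(\|G_{\omega,L}^{\text{sim}}(\bm{n},\bm{m};z)\|^{s}\big)\le C,
\end{equation*}
the last inequality being Lemma~\ref{lem_apriori_bound_L}, whose constant $C=C(s)$ is uniform in $L$, in $z\in\mathbb{C}\setminus\mathbb{R}$, in $\bm{n},\bm{m}$, and in the frozen configuration. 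Taking the supremum over $z$ and then the full expectation (tower property) yields both displayed inequalities of the lemma. One caveat worth recording: the argument must use the simple-boundary compressions $\mathcal{H}_{\omega,L}^{\text{sim}}$ rather than the periodic box operators $\mathcal{H}_{\omega,L}^{\sharp}$, which carry hopping across $\partial\Lambda_L$ and do not converge strongly to $\mathcal{H}_{\omega}$; this costs nothing since Lemma~\ref{lem_apriori_bound_L} is stated for both boundary conditions.

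Since these two steps are entirely routine, the substantive content is really that of Lemma~\ref{lem_apriori_bound_L} (equivalently, one could prove Lemma~\ref{lem_apriori_bound_whole} directly on $\mathbb{Z}^2$ by the same method): the classical Aizenman--Molchanov spectral-averaging estimate \cite{aizenman1993localization_elementary_derivation}. There one conditions on everything except $\{\omega_{\bm{n}}^{(i)},\omega_{\bm{m}}^{(i)}\}_i$, writes $\mathcal{H}_{\omega}$ as a perturbation of rank at most $2d$ in those variables, and uses the Krein/Schur-complement resolvent identity to express $G_{\omega}(\bm{n},\bm{m};z)$ as a block of a $2d\times 2d$ matrix M\"obius function $(\Gamma+W)^{-1}$, with $\Gamma=\mathrm{diag}\big(\omega_{\bm{n}}^{(i)}V_i,\,\omega_{\bm{m}}^{(i)}V_i\big)_i$ and $W=W(z,\cdot)$ having imaginary part of a definite sign (hence invertible) but otherwise arbitrary; the fractional moment of such a matrix fraction against the bounded compactly supported density $\mu$ is then bounded, for $s\in(0,1)$, uniformly in $W$ --- in particular uniformly in $\Im z$, the uniformity being possible because in a single real diagonal variable $v$ ranging over an interval $I$ one has $\int_I|v+w|^{-s}\,dv\le\int_I|v+\Re w|^{-s}\,dv\le 2|I|^{1-s}/(1-s)$ regardless of $\Im w$ when $s<1$. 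The main obstacle in that direct route --- which the approximation above avoids --- is the matrix case $d>1$, requiring a rank-$r$ version of the averaging lemma or, equivalently, viewing the model as scalar rank-one disorder on the enlarged lattice $\mathbb{Z}^2\times\{1,\dots,d\}$ (which uses $V_i\neq 0$ for each $i$).
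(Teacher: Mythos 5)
Your proposal is correct, and it reaches the same destination as the paper by a different route. The paper outsources both Lemma~\ref{lem_apriori_bound_L} and Lemma~\ref{lem_apriori_bound_whole} to the same references (\cite[Proposition~4.1]{prado2017dynamical}, \cite[Section~3]{stolz2011introduction}), where the fractional-moment a~priori bound is proved \emph{directly} on whatever graph one is working on --- finite box or $\mathbb{Z}^{2}$ --- via the Aizenman--Molchanov rank-$2d$ spectral-averaging computation you sketch at the end. You instead derive the whole-space bound as a corollary of the finite-volume one: extend $\mathcal{H}_{\omega,L}^{\mathrm{sim}}$ by $\oplus\,0$, observe that short-rangeness and uniform boundedness give strong (hence strong resolvent) convergence to $\mathcal{H}_{\omega}$, deduce entrywise convergence of the Green-function kernel at fixed $\bm{n},\bm{m}$, and push the nonnegative pointwise limit through $\mathbb{E}_{\bm{n},\bm{m}}$ by conditional Fatou against the $L$- and $z$-uniform constant of Lemma~\ref{lem_apriori_bound_L}. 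This is a legitimate soft reduction: it buys you modularity (the hard estimate is done once, on a finite box), at the cost of an approximation step that the direct argument does not need; the direct argument, conversely, makes it evident that the bound has nothing to do with any finite-volume truncation. Two small remarks. First, your caveat that the periodic box operators ``do not converge strongly to $\mathcal{H}_{\omega}$'' is not actually right under the same $\oplus\,0$ extension: for a finitely supported $\psi$ the gluing term $\mathcal{F}_{L}\psi$ vanishes once $L$ is large, so $\mathcal{H}_{\omega,L}^{\sharp}\oplus 0$ also converges strongly; the reason to use simple boundary conditions is simply that they are the compression $\mathbbm{1}_{\Lambda_{L}}\mathcal{H}_{\omega}\mathbbm{1}_{\Lambda_{L}}$ and make the identification clean, not that strong convergence fails otherwise. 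Second, you are right to flag that the spectral-averaging input genuinely needs $V_{i}\neq 0$ for each $i$ (so that the disorder has full rank at each site); this is implicit in the paper's assumptions and holds for the QWZ example in Appendix~\ref{appendixA}, but it is worth noting that Lemma~\ref{lem_apriori_bound_L}/\ref{lem_apriori_bound_whole} degenerate if some $V_{i}=0$.
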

The proof of these lemmas can be found in \cite[Proposition 4.1]{prado2017dynamical}; see also \cite[Section 3]{stolz2011introduction}. Heuristically, Lemmas \ref{lem_apriori_bound_L} and \ref{lem_apriori_bound_whole} hold because the exponent $s<1$ regularizes the non-integrability where the Green function blows up (i.e., a resonance occurs). We refer the reader to \cite{aizenman1993localization_elementary_derivation,aizenman2001finite_volime,aizenman1994localization} for a more detailed discussion.

Now, we are in a position to state the second main result of this section.
\begin{proposition}[Initial length estimate]
\label{prop_initial_length}
For every $s,\beta\in (0,1)$ and $q>1$, there exist $C,L_1>0$ such that
\begin{equation} \label{eq_initial_length}
\mathbb{E}\big(\|G_{\omega,L}^{\text{sim}}(0,\bm{n};\lambda+i\epsilon)\|^s \big)\leq C L^{(2-\eta\beta)/q}
\end{equation}
for all $L>L_1$, $\bm{n}\in \partial^{int}\Lambda_L$, $\lambda\in [\lambda_0-L^{-\beta}/2,\lambda_0+L^{-\beta}/2]$, and $\epsilon>0$.
\end{proposition}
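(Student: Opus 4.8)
The plan is to compare the simple-boundary-condition box Green function with the periodic one and to extract the rare event, detected by Proposition~\ref{prop_lifshitz_tail}, on which $\mathcal{H}_{\omega,L}^{\sharp}$ carries spectrum near the gap-opening point $\lambda_0$. First note that if $2-\eta\beta\ge 0$ then $L^{(2-\eta\beta)/q}\ge 1$ for $L\ge1$ and the claim is immediate from the a priori estimate $\mathbb{E}(\|G_{\omega,L}^{\text{sim}}(0,\bm{n};\lambda+i\epsilon)\|^{s})\le C$ (Lemma~\ref{lem_apriori_bound_L}); so assume $2-\eta\beta<0$. Set
\[
A=A_L:=\bigl\{\omega:\ \sigma(\mathcal{H}_{\omega,L}^{\sharp})\cap(\lambda_0-L^{-\beta},\lambda_0+L^{-\beta})\neq\emptyset\bigr\},
\]
so that $\mathbb{P}(A)\le CL^{2-\eta\beta}$ by Proposition~\ref{prop_lifshitz_tail}, and split $\mathbb{E}(\|G_{\omega,L}^{\text{sim}}\|^{s})=\mathbb{E}(\|G_{\omega,L}^{\text{sim}}\|^{s}\mathbbm{1}_{A})+\mathbb{E}(\|G_{\omega,L}^{\text{sim}}\|^{s}\mathbbm{1}_{A^{c}})$.

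On the bad event $A$ I would apply Hölder's inequality with conjugate exponents $p'=q$ and $p=q/(q-1)$:
\[
\mathbb{E}\bigl(\|G_{\omega,L}^{\text{sim}}(0,\bm{n};\lambda+i\epsilon)\|^{s}\mathbbm{1}_{A}\bigr)
\le\bigl(\mathbb{E}\|G_{\omega,L}^{\text{sim}}(0,\bm{n};\lambda+i\epsilon)\|^{sp}\bigr)^{1/p}\,\mathbb{P}(A)^{1/q}
\le C\,(L^{2-\eta\beta})^{1/q}=CL^{(2-\eta\beta)/q},
\]
where the first factor is bounded by Lemma~\ref{lem_apriori_bound_L} (legitimate since $sp<1$, which holds once $s<1-1/q$), and the second uses $\mathbb{P}(A)\le CL^{2-\eta\beta}$ together with $2-\eta\beta<0$, $L\ge1$. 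For $s\ge1-1/q$ one takes $p$ close to $1/s$ instead and gets the slightly weaker exponent $(2-\eta\beta)(1-s-\delta)$ for any $\delta>0$, which is all that the decoupling argument of Section~\ref{sec_geo_decouple} requires, since there $\beta$ is chosen close to $1$ (so $2-\eta\beta$ is very negative) and $q$ only moderately large.

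On the good event $A^{c}$ one has $\text{dist}(\lambda,\sigma(\mathcal{H}_{\omega,L}^{\sharp}))\ge L^{-\beta}-|\lambda-\lambda_0|\ge L^{-\beta}/2$ deterministically, hence $\text{dist}(\lambda+i\epsilon,\sigma(\mathcal{H}_{\omega,L}^{\sharp}))\ge L^{-\beta}/2$. The Combes--Thomas argument (cf.\ Lemma~\ref{lem_resolv_CT_estimate}), applied on the torus $\Lambda_L$ with its graph metric $\text{dist}_{\Lambda_L}$, then gives $\|G_{\omega,L}^{\sharp}(\bm{u},\bm{v};\lambda+i\epsilon)\|\le CL^{\beta}e^{-\frac12 L^{1-\beta}}$ whenever $\text{dist}_{\Lambda_L}(\bm{u},\bm{v})\ge L$. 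Since $\bm{0}$ is at graph distance $\ge L$ from $\bm{n}$ and from every site occurring in $\text{supp}(\mathcal{F}_L)\subset\partial^{int}\Lambda_L\times\partial^{int}\Lambda_L$, I would insert this into the resolvent identity coming from $\mathcal{H}_{\omega,L}^{\sharp}=\mathcal{H}_{\omega,L}^{\text{sim}}+\mathcal{F}_L$, namely
\[
G_{\omega,L}^{\text{sim}}(0,\bm{n};z)=G_{\omega,L}^{\sharp}(0,\bm{n};z)+\sum_{(\bm{u},\bm{u}')\in\text{supp}(\mathcal{F}_L)}G_{\omega,L}^{\sharp}(0,\bm{u};z)\,\mathcal{F}_L(\bm{u},\bm{u}')\,G_{\omega,L}^{\text{sim}}(\bm{u}',\bm{n};z),\quad z=\lambda+i\epsilon.
\]
Using $\|\mathcal{F}_L\|\le C$, $|\text{supp}(\mathcal{F}_L)|=O(L)$, the subadditivity of $t\mapsto t^{s}$, and then $\mathbb{E}\|G_{\omega,L}^{\text{sim}}(\bm{u}',\bm{n};z)\|^{s}\le C$ (Lemma~\ref{lem_apriori_bound_L}) for the boundary-to-boundary entries, this yields $\mathbb{E}(\|G_{\omega,L}^{\text{sim}}(0,\bm{n};z)\|^{s}\mathbbm{1}_{A^{c}})\le CL^{1+\beta s}e^{-\frac{s}{2}L^{1-\beta}}$, which is $\le L^{(2-\eta\beta)/q}$ for $L$ large because the exponential decay beats every power of $L$. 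Adding the two contributions finishes the proof.

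The main obstacle is precisely this change of boundary conditions. The Lifshitz estimate is available only for $\mathcal{H}_{\omega,L}^{\sharp}$: under the linear deformation of Proposition~\ref{prop_lifshitz_tail} its eigenvalues stay in $\sigma(\mathcal{H}_{per})$, which has a true gap at $\lambda_0$, whereas $\mathcal{H}_{\omega,L}^{\text{sim}}$ carries topological edge states throughout that gap, so it has no deterministic gap and Combes--Thomas cannot be applied to it directly. The resolvent identity fixes this only because $\mathcal{F}_L$ lives on $\partial^{int}\Lambda_L$, far from the origin: the $G_{\omega,L}^{\sharp}(0,\cdot;z)$ factors it couples in are deep in the exponential tail of the gap, while the residual $G_{\omega,L}^{\text{sim}}$ factors it produces connect two boundary sites, where no gap is needed and the a priori fractional-moment bound suffices. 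Finally, all estimates are uniform in $\epsilon>0$, which is automatic since we only ever use $\text{dist}(\lambda+i\epsilon,\sigma)\ge\text{dist}(\lambda,\sigma)$ and exponents $<1$.
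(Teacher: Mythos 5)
Your proof matches the paper's argument essentially step for step: same decomposition into the bad event $\Omega_B$ (your $A_L$) detected by the Lifshitz estimate, same Hölder$+$a priori bound on $\Omega_B$, same resolvent identity $G^{\mathrm{sim}}_{\omega,L}=G^{\sharp}_{\omega,L}+G^{\sharp}_{\omega,L}\mathcal{F}_L G^{\mathrm{sim}}_{\omega,L}$ on $\Omega_B^c$ combined with Combes--Thomas and the a priori bound on the boundary-to-boundary factors. One point where you are more careful than the paper: you note that Lemma~\ref{lem_apriori_bound_L} requires the Hölder exponent $sp$ to stay below $1$, i.e.\ $q>1/(1-s)$, a constraint the paper's proof tacitly uses but does not state; your observation that outside this range one still gets an exponent of the form $(2-\eta\beta)(1-s-\delta)$ — sufficient for the geometric decoupling in Section~\ref{sec_geo_decouple}, where $q$ can be taken close to $1$ once $s$ is fixed small — is correct and fills this gap cleanly. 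Your remark that Combes--Thomas must be read in the torus graph metric for $\mathcal{H}^\sharp_{\omega,L}$ is also the right way to justify the paper's estimate $e^{-\frac12 L^{-\beta}|\bm{n}|}$.
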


\begin{proof}
We define the following event:
\begin{equation*}
\Omega_{B}:=\{\omega:\, \sigma(\mathcal{H}_{\omega,L}^{\sharp})\cap (\lambda_0-L^{-\beta},\lambda_0+L^{-\beta})\neq \emptyset \}
\end{equation*}
and let $\Omega_{B}^{c}$ be its complement. Then the estimate of \eqref{eq_initial_length} is decomposed into two parts:
\begin{equation} \label{eq_initial_length_proof_1}
\mathbb{E}\big(\|G_{\omega,L}^{\text{sim}}(0,\bm{n};\lambda+i\epsilon)\|^s \big)=\mathbb{E}\big(\|G_{\omega,L}^{\text{sim}}(0,\bm{n};\lambda+i\epsilon)\|^s\mathbbm{1}_{\Omega_{B}} \big)+\mathbb{E}\big(\|G_{\omega,L}^{\text{sim}}(0,\bm{n};\lambda+i\epsilon)\|^s\mathbbm{1}_{\Omega_{B}^{c}} \big).
\end{equation}
Using Proposition \ref{prop_lifshitz_tail}, Lemma \ref{lem_apriori_bound_L} and Hölder's inequality, the first part of \eqref{eq_initial_length_proof_1} is estimated as
\begin{equation}
\label{eq_initial_length_proof_2}
\begin{aligned}
\mathbb{E}\big(\|G_{\omega,L}^{\text{sim}}(0,\bm{n};\lambda+i\epsilon)\|^s\mathbbm{1}_{\Omega_{B}} \big)
&\leq \mathbb{E}\big(\|G_{\omega,L}^{\text{sim}}(0,\bm{n};\lambda+i\epsilon)\|^{sp} \big)^{1/p} \big(\mathbb{P}(\Omega_{B})\big)^{1/q} \\
\nm
&\leq CL^{(2-\eta\beta)/q},
\end{aligned}
\end{equation}
where $q^{-1}+p^{-1}=1$ and $C=C(s,\beta,q)>0$. The estimate of the second part is more sophisticated. We first note that,  by the resolvent identity
\begin{equation*}
(\mathcal{H}_{\omega,L}^{\text{sim}}-z)^{-1}=(\mathcal{H}_{\omega,L}^{\sharp}-z)^{-1}+(\mathcal{H}_{\omega,L}^{\sharp}-z)^{-1}\mathcal{F}_{L}(\mathcal{H}_{\omega,L}^{\text{sim}}-z)^{-1},
\end{equation*}
it holds that
\begin{equation*}
G_{\omega,L}^{\text{sim}}(0,\bm{n};\lambda+i\epsilon)
=G_{\omega,L}^{\sharp}(0,\bm{n};\lambda+i\epsilon)+\sum_{(\bm{m},\bm{m}^{\prime})\in \text{supp} (\mathcal{F}_L)} G_{\omega,L}^{\sharp}(0,\bm{m};\lambda+i\epsilon)\mathcal{F}_L(\bm{m},\bm{m}^{\prime})G_{\omega,L}^{\text{sim}}(\bm{m}^{\prime},\bm{n};\lambda+i\epsilon).
\end{equation*}
Using the fact that $\|\mathcal{F}_L(\bm{m},\bm{m}^{\prime})\|\leq \sup_{\bm{n},\bm{m}}\|\mathcal{H}_{0}(\bm{n},\bm{m})\|<\infty$ and the elementary inequality $$(\sum|x_i|)^s\leq \sum |x_i|^{s} \quad  \text{for } s\in (0,1),$$ we obtain
\begin{equation}
\label{eq_initial_length_proof_3}
\begin{array}{lll}
\|G_{\omega,L}^{\text{sim}}(0,\bm{n};\lambda+i\epsilon)\|^s &\leq &\|G_{\omega,L}^{\sharp}(0,\bm{n};\lambda+i\epsilon)\|^s\\
\nm 
&& \ds +C\sum_{(\bm{m},\bm{m}^{\prime})\in \text{supp} (\mathcal{F}_L)}\|G_{\omega,L}^{\sharp}(0,\bm{m};\lambda+i\epsilon)\|^s\|G_{\omega,L}^{\text{sim}}(\bm{m}^{\prime},\bm{n};\lambda+i\epsilon)\|^s.
\end{array}
\end{equation}
Note that when $\sigma(\mathcal{H}_{\omega,L}^{\sharp})\cap (\lambda_0-L^{-\beta},\lambda_0+L^{-\beta})= \emptyset$, it holds that 
$$
\inf_{\lambda\in [\lambda_0-L^{-\beta}/2,\lambda_0+L^{-\beta}/2],\epsilon>0}\text{dist}(\lambda+i\epsilon,\sigma(\mathcal{H}_{\omega,L}^{\sharp}))\geq L^{-\beta}/2.
$$
Hence, we can bound $G_{\omega,L}^{\sharp}(0,\bm{n};\lambda+i\epsilon)$ using the standard Combes-Thomas estimate as in Lemma \ref{lem_resolv_CT_estimate}:
\begin{equation*}
\|G_{\omega,L}^{\sharp}(0,\bm{n};\lambda+i\epsilon)\|^s\mathbbm{1}_{\Omega_B^c}\leq \frac{C}{L^{-\beta}}e^{-\frac{1}{2}L^{-\beta}|\bm{n}|}\leq \frac{C}{L^{-\beta}}e^{-\frac{\sqrt{2}}{2}L^{1-\beta}}.
\end{equation*}
Therefore, taking the expectation over \eqref{eq_initial_length_proof_3} and recalling that $\text{supp}(\mathcal{F}_L)\subset \partial^{int}\Lambda_L\times \partial^{int}\Lambda_L$, we see that
\begin{equation}
\label{eq_initial_length_proof_4}
\begin{aligned}
\mathbb{E}\big(\|G_{\omega,L}^{\text{sim}}(0,\bm{n};\lambda+i\epsilon)\|^s\mathbbm{1}_{\Omega_B^c}\big)
&\leq \frac{C}{L^{-\beta}}e^{-\frac{\sqrt{2}}{2}L^{1-\beta}}\mathbb{E}\Big(1
+\sum_{(\bm{m},\bm{m}^{\prime})\in \text{supp} (\mathcal{F}_L)}\|G_{\omega,L}^{\text{sim}}(\bm{m}^{\prime},\bm{n};\lambda+i\epsilon)\|^s \Big) \\
&\leq CL^{1+\beta}e^{-\frac{\sqrt{2}}{2}L^{1-\beta}},
\end{aligned}
\end{equation}
where $C=C(s)>0$, and Lemma \ref{lem_apriori_bound_L} is applied to derive the last inequality. Thus, by \eqref{eq_initial_length_proof_1}-\eqref{eq_initial_length_proof_2}, \eqref{eq_initial_length_proof_4}, and choosing $L_1>L_0$ sufficiently large such that $$L^{1+\beta}e^{-\frac{1}{2}L^{1-\beta}}\leq L^{(2-\eta\beta)/q},$$ we conclude the proof of Proposition \ref{prop_initial_length}.
\end{proof}

\subsection{Geometric Decoupling: Proof of Theorem \ref{thm_anderson_localization}}
\label{sec_geo_decouple}

Based on the two key estimates obtained in the last section, we prove Theorem \ref{thm_anderson_localization} following the standard geometric decoupling argument. This elegant method, introduced by Aizenman and Molcanov \cite{aizenman1993localization_elementary_derivation}, has been applied to study Anderson localization in various settings; see, e.g., \cite[Section 7]{stolz2011introduction} and \cite{aizenman2001finite_volime,prado2017dynamical}.

By translation invariance, it suffices to prove Theorem \ref{thm_anderson_localization} for $\bm{n}=0$. To do this, we first introduce the Hamiltonian
\begin{equation*}
\mathcal{H}_{\omega}^{(L)}:=\mathcal{H}_{\omega,L}^{\text{sim}}\oplus \mathcal{H}_{\omega,L^c}^{\text{sim}} \quad \text{with} \quad  \mathcal{H}_{\omega,L^c}^{\text{sim}}:=\iota_{\Lambda_L^{c}}^{*} \mathcal{H}_{\omega}\iota_{\Lambda_L^{c}},
\end{equation*}
i.e., $\mathcal{H}_{\omega}^{(L)}$ ignores the hopping between the box $\Lambda_L$ and its exterior. It is linked to the original Hamiltonian $\mathcal{H}_{\omega}$ by the relation
\begin{equation*}
\mathcal{H}_{\omega}=\mathcal{H}_{\omega}^{(L)}+\mathcal{F}^{(L)},
\end{equation*}
where $\mathcal{F}^{(L)}$ is already introduced in Section \ref{sec_bec}. (Note that $\mathcal{F}^{(L)}$ is different from the gluing operator $\mathcal{F}_{L}$ defined in Section \ref{sec_lif_initial}). Using twice the resolvent identity, we see that
\begin{equation*}
\begin{aligned}
(\mathcal{H}_{\omega}-z)^{-1}
&=(\mathcal{H}_{\omega}^{(L)}-z)^{-1}-(\mathcal{H}_{\omega}^{(L)}-z)^{-1}\mathcal{F}^{(L)}(\mathcal{H}_{\omega}-z)^{-1} \\
&=(\mathcal{H}_{\omega}^{(L)}-z)^{-1}-(\mathcal{H}_{\omega}^{(L)}-z)^{-1}\mathcal{F}^{(L)}(\mathcal{H}_{\omega}^{(L+1)}-z)^{-1} \\
&\quad +(\mathcal{H}_{\omega}^{(L)}-z)^{-1}\mathcal{F}^{(L)}(\mathcal{H}_{\omega}-z)^{-1}\mathcal{F}^{(L+1)}(\mathcal{H}_{\omega}^{(L+1)}-z)^{-1}.
\end{aligned}
\end{equation*}
The above identity is rewritten in the form of Green's functions:
\begin{equation} \label{eq_anderson_loc_proof_1}
\begin{aligned}
G_{\omega}(0,\bm{m};z)
&=G^{(L)}_{\omega}(0,\bm{m};z)-\sum_{(\bm{u},\bm{u}^{\prime})\in \text{supp}(\mathcal{F}^{(L)})}G_{\omega}^{(L)}(0,\bm{u};z)\mathcal{F}^{(L)}(\bm{u},\bm{u}^{\prime})G_{\omega}^{(L+1)}(\bm{u}^{\prime},\bm{m};z) \\
&\quad +\sum_{\substack{(\bm{u},\bm{u}^{\prime})\in \text{supp}(\mathcal{F}^{(L)}) \\ (\bm{v},\bm{v}^{\prime})\in \text{supp}(\mathcal{F}^{(L+1)})}}G_{\omega}^{(L)}(0,\bm{u};z)\mathcal{F}^{(L)}(\bm{u},\bm{u}^{\prime})G_{\omega}(\bm{u}^{\prime},\bm{v};z)\mathcal{F}^{(L)}(\bm{v},\bm{v}^{\prime})G_{\omega}^{(L+1)}(\bm{v}^{\prime},\bm{m};z),
\end{aligned}
\end{equation}
where $G^{(L)}_{\omega}(\bm{n},\bm{m};z):=\langle\mathbbm{1}_{\{\bm{n}\}},(\mathcal{H}_{\omega}^{(L)}-z)^{-1} \mathbbm{1}_{\{\bm{m}\}}  \rangle_{\ell^2(\mathbb{Z}^2)}$.
Importantly, we note that if $\bm{m} \in  \Lambda_{L+1}^{c}$, then 
\begin{equation*}
G^{(L)}_{\omega}(0,\bm{m};z)=0,
\end{equation*}
and
\begin{equation*}
G_{\omega}^{(L+1)}(\bm{u}^{\prime},\bm{m};z)=0 \quad (\bm{u}^{\prime} \in  \Lambda_{L+1})
\end{equation*}
because $\Lambda_L$ decouples from its exterior $\Lambda_L^{c}$ under the action of $\mathcal{H}_{\omega}^{(L)}$. This implies that the first two terms in \eqref{eq_anderson_loc_proof_1} vanish. For the same reason, we can replace $G_{\omega}^{(L)}$ and $G_{\omega}^{(L+1)}$ in the third term of \eqref{eq_anderson_loc_proof_1} by $G_{\omega,L}^{\text{sim}}$ and $G_{\omega,L+1^c}^{\text{sim}}$, respectively (the latter is the Green function associated with $\mathcal{H}_{\omega,L+1^c}^{\text{sim}}$). Hence, we conclude that
\begin{equation*}
G_{\omega}(0,\bm{m};z)
=\sum_{\substack{(\bm{u},\bm{u}^{\prime})\in \text{supp}(\mathcal{F}^{(L)}) \\ (\bm{v},\bm{v}^{\prime})\in \text{supp}(\mathcal{F}^{(L+1)})}}G_{\omega,L}^{\text{sim}}(0,\bm{u};z)\mathcal{F}^{(L)}(\bm{u},\bm{u}^{\prime})G_{\omega}(\bm{u}^{\prime},\bm{v};z)\mathcal{F}^{(L)}(\bm{v},\bm{v}^{\prime})G_{\omega,L+1^c}^{\text{sim}}(\bm{v}^{\prime},\bm{m};z).
\end{equation*}
Using the fact that $\|\mathcal{F}^{(L)}(\bm{n},\bm{m})\|\leq \sup_{\bm{n},\bm{m}}\|\mathcal{H}_{0}(\bm{n},\bm{m})\|<\infty$, the above equation leads to
\begin{equation} \label{eq_anderson_loc_proof_2}
\|G_{\omega}(0,\bm{m};z)\|^{s}
\leq C\sum_{\substack{(\bm{u},\bm{u}^{\prime})\in \partial \Lambda_L \\ (\bm{v},\bm{v}^{\prime})\in \partial \Lambda_{L+1}}}\|G_{\omega,L}^{\text{sim}}(0,\bm{u};z) \|^{s} \|G_{\omega}(\bm{u}^{\prime},\bm{v};z)\|^{s} \|G_{\omega,L+1^c}^{\text{sim}}(\bm{v}^{\prime},\bm{m};z)\|^{s} ,
\end{equation}
for any $s\in (0,1)$. Remark that at the right side of \eqref{eq_anderson_loc_proof_2}, only the middle term $\|G_{\omega}(\bm{u}^{\prime},\bm{v};z)\|^{s}$ depends on the random variables supported on the sites $\bm{u}^{\prime}$ and $\bm{v}$, and the other two terms are stochastically independent of each other. Hence, when taking expectation over \eqref{eq_anderson_loc_proof_2}, we first calculate the conditional expectation $\mathbb{E}_{\bm{u}^{\prime},\bm{v}}$ and utilize Lemma \ref{lem_apriori_bound_whole} to bound the middle term, which leads to the result
\begin{equation*}
\mathbb{E}\big(\|G_{\omega}(0,\bm{m};z)\|^{s}\big)
\leq C\sum_{\bm{u}\in \partial^{int} \Lambda_L,\, \bm{v}^{\prime}\in \partial^{ext} \Lambda_{L+1}}\mathbb{E}\big(\|G_{\omega,L}^{\text{sim}}(0,\bm{u};z) \|^{s} \big) \mathbb{E}\big( \|G_{\omega,L+1^c}^{\text{sim}}(\bm{v}^{\prime},\bm{m};z)\|^{s} \big)
\end{equation*}
with $C=C(s)>0$. By bounding $G_{\omega,L}^{\text{sim}}$ with the initial length estimate presented in Proposition \ref{prop_initial_length}, we obtain the following:
\begin{equation} \label{eq_anderson_loc_proof_3}
\mathbb{E}\big(\|G_{\omega}(0,\bm{m};\lambda+i\epsilon)\|^{s}\big)
\leq CL^{1+(2-\eta\beta)/q}\sum_{\bm{v}^{\prime}\in \partial^{ext} \Lambda_{L+1}}\mathbb{E}\big( \|G_{\omega,L+1^c}^{\text{sim}}(\bm{v}^{\prime},\bm{m};\lambda+i\epsilon)\|^{s} \big)
\end{equation}
with $\beta\in (0,1)$ and $q>1$ to be determined and $C=C(s,\beta,q)>0$. Note that \eqref{eq_anderson_loc_proof_3} holds for all $L>L_1$ with $L_1=L_1(s,\beta,q)>0$ and $\lambda\in [\lambda_0-\frac{1}{2}L^{-\beta},\lambda_0+\frac{1}{2}L^{-\beta}]$. We want to use \eqref{eq_anderson_loc_proof_3} as the first step of the iteration to prove the exponential localization of $\mathbb{E}\big(\|G_{\omega}(0,\bm{m};\lambda+i\epsilon)\|^{s}\big)$. This requires the following result, which helps to bound the depleted Green function $G_{\omega,L+1^c}^{\text{sim}}$ in terms of the full Green function $G_{\omega}$. The proof follows the same lines as in \cite[Lemma 10.2]{prado2017dynamical} (see also \cite[Lemma 2.3]{aizenman2001finite_volime}).
\begin{lemma} \label{lem_depleted_Green_bound_with_full}
For any $s\in (0,1)$, there exists $C>0$ such that
\begin{equation*}
\|G_{\omega,L+1^c}^{\text{sim}}(\bm{v},\bm{m};z)\|^{s}\leq C\sum_{\bm{v}^{\prime}\in \partial^{ext}\Lambda_{L+1}}\|G_{\omega}(\bm{v}^{\prime},\bm{m};z)\|^{s}
\end{equation*}
for all $z\in\mathbb{C}\backslash \mathbb{R}$, $L\geq 1$, $\bm{v}\in  \partial^{ext}\Lambda_{L+1}$, and $\bm{m}\in\Lambda_{L+1}^{c}$.
\end{lemma}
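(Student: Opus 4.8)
The plan is to reproduce, in the present setting, the geometric resolvent (decoupling) argument of \cite[Lemma 2.3]{aizenman2001finite_volime} and \cite[Lemma 10.2]{prado2017dynamical}; note that, unlike the other estimates of Section \ref{sec_geo_decouple}, this is a \emph{deterministic} bound (fixed $\omega$) with $C$ independent of $z$, so neither the Combes--Thomas bound nor the a priori fractional-moment bounds (Lemmas \ref{lem_apriori_bound_L}--\ref{lem_apriori_bound_whole}) may enter. Write $\mathcal{H}_{\omega}=\mathcal{H}_{\omega}^{(L+1)}+\mathcal{F}^{(L+1)}$ with $\mathcal{H}_{\omega}^{(L+1)}=\mathcal{H}_{\omega,L+1}^{\text{sim}}\oplus\mathcal{H}_{\omega,L+1^c}^{\text{sim}}$, so that $(\mathcal{H}_{\omega,L+1^c}^{\text{sim}}-z)^{-1}$ is exactly the $\Lambda_{L+1}^{c}\times\Lambda_{L+1}^{c}$ block of $(\mathcal{H}_{\omega}^{(L+1)}-z)^{-1}$, and $\mathcal{F}^{(L+1)}$ is supported on nearest-neighbour pairs of $\partial\Lambda_{L+1}$ by the short-rangeness in Assumption \ref{asmp_Hamiltonian_deter_part}(ii). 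First I would apply the resolvent identity $(\mathcal{H}_{\omega}^{(L+1)}-z)^{-1}=(\mathcal{H}_{\omega}-z)^{-1}+(\mathcal{H}_{\omega}-z)^{-1}\mathcal{F}^{(L+1)}(\mathcal{H}_{\omega}^{(L+1)}-z)^{-1}$ and read off the $(\bm v,\bm m)$-kernel with $\bm v\in\partial^{ext}\Lambda_{L+1}$, $\bm m\in\Lambda_{L+1}^{c}$. Using that $\mathcal{H}_{\omega}^{(L+1)}$ is block-diagonal, every surviving term must have its depleted-resolvent factor connecting two sites of $\partial^{ext}\Lambda_{L+1}$, which gives the closed boundary relation
\[
G_{\omega,L+1^c}^{\text{sim}}(\bm v,\bm m;z)=G_{\omega}(\bm v,\bm m;z)+\sum_{\substack{\bm u\in\partial^{int}\Lambda_{L+1},\ \bm u'\in\partial^{ext}\Lambda_{L+1}\\ |\bm u-\bm u'|=1}}G_{\omega}(\bm v,\bm u;z)\,\mathcal{H}_{0}(\bm u,\bm u')\,G_{\omega,L+1^c}^{\text{sim}}(\bm u',\bm m;z),
\]
in which, apart from the leading $G_{\omega}(\bm v,\bm m;z)$, only the finitely-supported boundary-layer data appear.

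Next I would pass to $s$-th powers, $s\in(0,1)$, using $\big(\sum_i|a_i|\big)^{s}\le\sum_i|a_i|^{s}$ and $\sup_{\bm n,\bm m}\|\mathcal{H}_{0}(\bm n,\bm m)\|<\infty$, which turns the display into a finite linear system of inequalities for $\{\,\|G_{\omega,L+1^c}^{\text{sim}}(\bm v,\bm w;z)\|^{s}:\bm w\in\partial^{ext}\Lambda_{L+1}\,\}$ (with $\bm m$ fixed), whose inhomogeneous part is a bounded linear combination of $\{\,\|G_{\omega}(\bm v',\bm m;z)\|^{s}:\bm v'\in\partial^{ext}\Lambda_{L+1}\,\}$. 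Since $\mathcal{H}_{\omega,L+1^c}^{\text{sim}}-z$ is boundedly invertible for $z\notin\mathbb{R}$, this system is solvable, and the task is to invert it with constants controlled only by $s$, $d$ and $\sup\|\mathcal{H}_{0}(\bm n,\bm m)\|$ — in particular so that the estimate for the single element $(\bm v,\bm m)$ does not acquire an extra factor of $|\partial^{ext}\Lambda_{L+1}|$. I would finish by substituting back into the display and bounding $\|G_{\omega}(\bm v,\bm m;z)\|^{s}\le\sum_{\bm v'\in\partial^{ext}\Lambda_{L+1}}\|G_{\omega}(\bm v',\bm m;z)\|^{s}$, obtaining the claim.

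The main obstacle is precisely this inversion: because $\mathcal{F}^{(L+1)}$ is \emph{not} small, the coupling coefficients above contain full Green's functions $G_{\omega}(\bm v,\bm u;z)$ between boundary sites, which blow up as $\Im z\to0$, so a Neumann-series/contraction estimate is unavailable and one cannot simply iterate the display. The resolution — borrowed from \cite[Lemma 2.3]{aizenman2001finite_volime} and \cite[Lemma 10.2]{prado2017dynamical} — is structural rather than quantitative: by the block decomposition one recognises that the finite matrix to be inverted is, up to absolute constants, the identity together with exactly the Schur-complement correction that converts $(\mathcal{H}_{\omega,L+1^c}^{\text{sim}}-z)^{-1}$ into the $\Lambda_{L+1}^{c}$-compression of $(\mathcal{H}_{\omega}-z)^{-1}$; feeding this identity back in collapses the $z$-singular pieces and yields the deterministic, $z$-uniform bound. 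Verifying this collapse carefully in the present (matrix-valued, arbitrary-$\mathcal{H}_0$) setting is where the real work lies.
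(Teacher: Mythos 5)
Your set-up via the geometric resolvent identity is the right starting point and agrees with the route the paper points to (it only cites \cite[Lemma 10.2]{prado2017dynamical} and \cite[Lemma 2.3]{aizenman2001finite_volime} and does not reprove the lemma). The displayed identity
\[
G_{\omega,L+1^c}^{\text{sim}}(\bm v,\bm m;z)=G_{\omega}(\bm v,\bm m;z)+\sum_{\substack{\bm u\in\partial^{int}\Lambda_{L+1},\ \bm u'\in\partial^{ext}\Lambda_{L+1}\\ |\bm u-\bm u'|=1}}G_{\omega}(\bm v,\bm u;z)\,\mathcal{H}_{0}(\bm u,\bm u')\,G_{\omega,L+1^c}^{\text{sim}}(\bm u',\bm m;z)
\]
is correct (modulo the minor slip that it is only $\bm u'$, not both endpoints of the depleted factor, that lies in $\partial^{ext}\Lambda_{L+1}$). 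But the proposal stops exactly at the step that constitutes the lemma. You correctly observe that the boundary coupling coefficients $G_{\omega}(\bm v,\bm u;z)$ are \emph{not} small as $\Im z\to 0$, so a Neumann-series/contraction argument is unavailable, and you then assert that a ``structural'' Schur-complement collapse of the $z$-singular pieces yields the $z$-uniform, $L$-uniform, $\omega$-uniform bound. This is asserted, not proved; your own closing sentence --- ``Verifying this collapse carefully in the present (matrix-valued, arbitrary-$\mathcal{H}_0$) setting is where the real work lies'' --- concedes that the key argument is missing.

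To make the gap concrete: the Schur-complement route you gesture at (write $\mathcal{H}_\omega-z$ in blocks $\bigl(\begin{smallmatrix}A&B\\B^*&D\end{smallmatrix}\bigr)$ over $\Lambda_{L+1}\oplus\Lambda_{L+1}^c$, so that $\iota_{\Lambda_{L+1}^c}^*G_\omega\iota_{\Lambda_{L+1}^c}=(D-S)^{-1}$ with $S=B^*A^{-1}B$ supported on $\partial^{ext}\Lambda_{L+1}$, and $G_{\omega,L+1^c}^{\text{sim}}=D^{-1}$) does produce, after the algebraic rearrangement $D^{-1}=(D-S)^{-1}\bigl(\mathbbm{1}+S(D-S)^{-1}\bigr)^{-1}$, the closed formula
\[
G_{\omega,L+1^c}^{\text{sim}}(\bm v,\bm m;z)=\sum_{\bm w\in\partial^{ext}\Lambda_{L+1}}\bigl(\mathbbm{1}_P+PG_\omega PS\bigr)^{-1}(\bm v,\bm w)\,G_{\omega}(\bm w,\bm m;z),
\]
where $P=\mathbbm{1}_{\partial^{ext}\Lambda_{L+1}}$. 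At this point the lemma is \emph{equivalent} to the statement that the entries of $\bigl(\mathbbm{1}_P+PG_\omega PS\bigr)^{-1}$ are bounded by a constant independent of $z$, $L$, and $\omega$; this is a genuine claim about a finite but $L$-growing matrix whose entries involve full Green's functions evaluated between boundary sites, and it is precisely the part you did not establish. Until that bound is supplied (and, given your worry about an $|\partial^{ext}\Lambda_{L+1}|$-factor creeping in, it must be an entrywise rather than operator-norm bound), the argument is a setup followed by a deferral, not a proof.
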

With Lemma \ref{lem_depleted_Green_bound_with_full}, we derive from \eqref{eq_anderson_loc_proof_3} that
\begin{equation} \label{eq_anderson_loc_proof_4}
\mathbb{E}\big(\|G_{\omega}(0,\bm{m};\lambda+i\epsilon)\|^{s}\big)
\leq CL^{3+(2-\eta\beta)/q}\sup_{\bm{v}^{\prime}\in  \Lambda_{L+2}}\mathbb{E}\big( \|G_{\omega}(\bm{v}^{\prime},\bm{m};\lambda+i\epsilon)\|^{s} \big).
\end{equation}
Inequality \eqref{eq_anderson_loc_proof_4} facilitates the iteration to prove the exponential decay of the left side. Specifically, since $\eta>5$ by Assumption \ref{asmp_distribution_density}, we can choose $\beta\in (0,1)$ and $q>1$ such that $3+(2-\eta\beta)/q<0$ and fix $L>1$ such that $r:=CL^{3+(2-\eta\beta)/q}<1$. We also set $\delta:=\frac{1}{2}L^{-\beta}$ (the radius of the mobility gap). For $\lambda\in [\lambda_0-\delta,\lambda_0+\delta]$, we can use \eqref{eq_anderson_loc_proof_4} to start an iteration
\begin{equation*}
\mathbb{E}\big(\|G_{\omega}(\bm{v}^{\prime},\bm{m};\lambda+i\epsilon)\|^{s}\big)
\leq r\sup_{\bm{v}^{\prime\prime}\in  \Lambda_{2(L+2)}}\mathbb{E}\big( \|G_{\omega}(\bm{v}^{\prime\prime},\bm{m};\lambda+i\epsilon)\|^{s} \big),
\end{equation*}
and so forth. This iteration can be carried out in approximately $|\bm{m}|/L$ steps until the chain $\bm{v}^{\prime},\bm{v}^{\prime\prime},\cdots$ may hit $\bm{m}\in \Lambda_{L+1}^{c}$. Then, by bounding the Green function in the final step using the a priori bound in Lemma \ref{lem_apriori_bound_whole}, we conclude the exponential decay of $\mathbb{E}\big(\|G_{\omega}(\bm{v}^{\prime},\bm{m};\lambda+i\epsilon)\|^{s}\big)$ with the decay rate being $\alpha:=|\log(r)|/L$.

\section*{Acknowledgments} This work was supported in part by the Swiss National Science Foundation grant number 200021-236472. It was initiated while H.A. was visiting the Hong Kong Institute for Advanced Study as a Senior Fellow. We would like to thank Guillaume Bal and Gian Michele Graf for discussions on the extension of bulk and edge indices defined in this paper.

\appendix

\section{The Qi-Wu-Zhang Model} \label{appendixA}

\setcounter{equation}{0}
\setcounter{subsection}{0}
\setcounter{theorem}{0}
\renewcommand{\theequation}{A.\arabic{equation}}
\renewcommand{\thesubsection}{A.\arabic{subsection}}
\renewcommand{\thetheorem}{A.\arabic{theorem}}

As a prototypical model for (quantum) anomalous Hall physics \cite{ezawa2024nonlinear_phase_transition,qi2006qwz_model}, the Qi-Wu-Zhang model describes the motion of electrons on a square lattice consisting of two sublattices with nearest-neighbor hopping taken into account.\footnote{Or equivalently, one can write it into the next-nearest-neighbor hopping form by expanding the sublattices into a single lattice.} Specifically, the unperturbed Hamiltonian of the QWZ model is given by
\begin{equation} \label{eq_qwz_h0}
\begin{aligned}
(\mathcal{H}_{0}\psi)(\bm{n})=&\frac{1}{2}(\sigma_z-i\sigma_x)\psi(\bm{n}+\bm{e}_1)+\frac{1}{2}(\sigma_z+i\sigma_x)\psi(\bm{n}-\bm{e}_1) \\
&+\frac{1}{2}(\sigma_z-i\sigma_y)\psi(\bm{n}+\bm{e}_2)+\frac{1}{2}(\sigma_z+i\sigma_y)\psi(\bm{n}-\bm{e}_2),
\end{aligned}
\end{equation}
where $\sigma$ denotes the Pauli matrices. The operator $\mathcal{H}_{0}$ satisfies Assumption \ref{asmp_Hamiltonian_deter_part}. Moreover, it is easy to calculate that $\mathcal{H}_{0}$ attains two Dirac points at $\lambda=0$ and $\bm{K}=(0,\pi),(\pi,0)$, where the Bloch eigenmodes are
\begin{equation*}
\psi_1^{\bm{K}}(\bm{n})=(1,0)^{T}\cdot e^{i\bm{K}\cdot\bm{n}},\quad \psi_2^{\bm{K}}(\bm{n})=(0,1)^{T}\cdot e^{i\bm{K}\cdot\bm{n}}.
\end{equation*}
A gap lifting potential is given by the diagonal matrix $V=\delta\sigma_z$ with $0<|\delta|<2$. It associates opposite masses to $\psi_1^{\bm{K}}$ and $\psi_2^{\bm{K}}$:
\begin{equation} \label{eq_qwz_non_degenerate}
(V\psi_1^{\bm{K}},\psi_1^{\bm{K}})=\delta,\quad (V\psi_2^{\bm{K}},\psi_2^{\bm{K}})=-\delta,\quad
(V\psi_1^{\bm{K}},\psi_2^{\bm{K}})=(V\psi_2^{\bm{K}},\psi_1^{\bm{K}})=0.
\end{equation}
For $0<|\delta|<2$, it is directly verified that the periodic perturbed Hamiltonian $\mathcal{H}_{per}=\mathcal{H}_0+1\otimes V$ possesses a band gap near $\lambda_0=0$; hence, Assumption \ref{asmp_Dirac_eigenpair} is satisfied. Consequently, setting the random variables as in Assumptions \ref{asmp_random_potential} and \ref{asmp_distribution_density}, the analysis in the main text applies readily to the QWZ model. We also note that the Chern number is calculated explicitly \cite{ezawa2024nonlinear_phase_transition} as follows:
\begin{equation*}
\mathcal{C}_1(\mathbbm{1}_{(-\infty,0)}(\mathcal{H}_{per}))=
\left\{
\begin{aligned}
&1,\quad 0<\delta<2,\\
&-1,\quad -2<\delta<0.
\end{aligned}
\right.
\end{equation*}
\begin{remark}
As anticipated in Remark \ref{rmk_beta_size}, by identity \eqref{eq_qwz_non_degenerate} and analytic perturbation theory, one can follow the lines of \cite[Section 2.1]{stollmann2001caught} to prove an analogous Lifschitz-type estimate to Proposition \ref{prop_lifshitz_tail} without supposing Assumption \ref{asmp_distribution_density}, but only for $\beta\geq 2$. Unfortunately, the unperturbed Hamiltonian \eqref{eq_qwz_h0}, unlike the Laplacian operator, does not allow the Dirichlet-Neumann bracketing method to improve the Lifschitz estimate.
\end{remark}

\section{Nontriviality of the Bulk Index} \label{appendixB}

\setcounter{equation}{0}
\setcounter{subsection}{0}
\setcounter{theorem}{0}
\renewcommand{\theequation}{B.\arabic{equation}}
\renewcommand{\thesubsection}{B.\arabic{subsection}}
\renewcommand{\thetheorem}{B.\arabic{theorem}}

In this appendix, we prove that the topological number $\mathcal{E}_{bulk,\omega,\lambda}$ is equal to the first Chern number associated with the spectral projection $\mathbbm{1}_{(-\infty,\lambda_0)}(\mathcal{H}_{per})$. Specifically, we have the following result.
\begin{theorem} \label{thm_invariant_index}
Under Assumptions \ref{asmp_Dirac_eigenpair} and \ref{asmp_distribution_density} and for all $\lambda\in\mathcal{I}$ with $\mathcal{I}$ being the same as in Theorem \ref{thm_anderson_localization}, it holds $\mathbb{P}-$almost surely that
\begin{equation*}
\mathcal{E}_{bulk,\omega,\lambda}=\mathcal{C}_1(\mathbbm{1}_{(-\infty,\lambda_0)}(\mathcal{H}_{per}))\neq 0.
\end{equation*}
\end{theorem}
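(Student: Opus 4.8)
\medskip

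\noindent\emph{Proof strategy (outline).} The plan is to deform $\mathcal{H}_\omega$ to the periodic operator $\mathcal{H}_{per}$ through a path along which a mobility gap around $\lambda_0$ persists, and to invoke the homotopy invariance of the quantized Hall conductance. Consider the linear interpolation
\begin{equation*}
\mathcal{H}_\omega(s):=\mathcal{H}_{per}+s\,(V_\omega-1\otimes V)=\mathcal{H}_0+1\otimes V+s\,(V_\omega-1\otimes V),\qquad s\in[0,1],
\end{equation*}
which is norm-continuous and covariant in $s$, with $\mathcal{H}_\omega(0)=\mathcal{H}_{per}$ and $\mathcal{H}_\omega(1)=\mathcal{H}_\omega$; its on-site potential in channel $i$ at site $\bm n$ equals $V_i\,(1-s+s\,\omega_{\bm n}^{(i)})\in V_i\,[1-s,1]$ since $\mathrm{supp}(\mu)=[0,1]$ (Assumption \ref{asmp_distribution_density}). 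In particular $\|s\,(V_\omega-1\otimes V)\|\le s\,\Delta$ (using $\sup_i|V_i|\le\Delta$ as in the proof of Proposition \ref{prop_lifshitz_tail}), so for every $s<1$ the band gap of $\mathcal{H}_{per}$ survives as a genuine spectral gap $(\lambda_0-(1-s)\Delta,\lambda_0+(1-s)\Delta)$ of $\mathcal{H}_\omega(s)$. Note that this is, on the whole space, exactly the family used inside the proof of Proposition \ref{prop_lifshitz_tail}, where only the endpoint $s=1$ was needed.

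\medskip

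\noindent\textbf{Step 1 (localization along the path).} I would first verify that the interval $\mathcal{I}$ of Theorem \ref{thm_anderson_localization} is an Aizenman--Molchanov mobility gap for every $\mathcal{H}_\omega(s)$, $s\in[0,1]$, with constants uniform in $s$ — the decisive point being uniformity as $s\to1$. This follows by rerunning Section \ref{sec_anderson_loc} for the family: the Feynman--Hellmann bound in the proof of Proposition \ref{prop_lifshitz_tail} reads $|\lambda'(t)|=|\langle(V_\omega-1\otimes V)\psi(t),\psi(t)\rangle_{\mathcal{X}_L}|\le\Delta\,\sup_{\bm n,i}|1-\omega_{\bm n}^{(i)}|$, independently of $s$, so the Lifshitz-type estimate \eqref{eq_lifshitz_tail} and hence the initial length estimate hold uniformly; the a priori fractional-moment bounds (Lemmas \ref{lem_apriori_bound_L}, \ref{lem_apriori_bound_whole}), whose coupling $s\,V_i$ is bounded away from $0$ for $s$ near $1$, and the geometric decoupling of Section \ref{sec_geo_decouple} then go through verbatim, while for $s$ bounded away from $1$ the genuine spectral gap already gives exponential decay via Lemma \ref{lem_resolv_CT_estimate}. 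Consequently Propositions \ref{prop_localization_ground_state} and \ref{prop_quantization_index} apply to each $\mathcal{H}_\omega(s)$: fixing once and for all a reference energy $\lambda^*\in\mathcal{I}$ that is not an eigenvalue of $\mathcal{H}_\omega$ (harmless, since $\mathcal{E}_{bulk}$ is constant in the reference energy on $\mathcal{I}$), the projection $P_\omega(s):=\mathbbm{1}_{(-\infty,\lambda^*)}(\mathcal{H}_\omega(s))$ is exponentially localized with bounds uniform in $s$ after taking $\mathbb{E}$ (cf.\ \eqref{eq_aux_projection_ae_localized}), and
\begin{equation*}
\mathcal{E}_{bulk,\omega}(s):=-i\,\mathcal{T}\big(P_\omega(s)\big[[P_\omega(s),x_1],[P_\omega(s),x_2]\big]\big)\in\mathbb{Z}
\end{equation*}
is $\mathbb{P}$-almost surely independent of $\omega$.

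\medskip

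\noindent\textbf{Step 2 (constancy, identification, nontriviality).} Next I would show that $s\mapsto\mathcal{E}_{bulk,\omega}(s)$ is continuous; being $\mathbb{Z}$-valued it is then constant. On $[0,1)$ this is the classical homotopy invariance of the non-commutative first Chern number: $P_\omega(s)=\tfrac1{2\pi i}\oint(\mathcal{H}_\omega(s)-z)^{-1}\,dz$ with a contour in the persisting spectral gap is norm-continuous, and the trace-per-unit-volume and commutator estimates of Section \ref{sec_prelim} show $\mathcal{E}_{bulk,\omega}(s)$ depends continuously on it (see also \cite{Bellissard94non_commutative_QHE,Avron1994charge,Schulz-Baldes01homotopy}). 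For the endpoint $s\to1^-$ I would use the norm convergence $\mathcal{H}_\omega(s)\to\mathcal{H}_\omega$, which yields $P_\omega(s)\to P_\omega:=\mathbbm{1}_{(-\infty,\lambda^*)}(\mathcal{H}_\omega)$ strongly (as $\lambda^*$ is not an eigenvalue of $\mathcal{H}_\omega$), and combine it with the uniform localization bounds of Step 1 to pass to the limit in the representation $\mathcal{E}_{bulk,\omega}(s)=i\,\mathbb{E}\big((\,\cdot\,)(0,0)\big)$ — obtained as in Section \ref{sec_link_refined_bulk_index} from Lemmas \ref{lem_covariant_TPUV_expectation}--\ref{lem_covariant_TPUV_cyclicity} — by dominated convergence. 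Hence $\mathcal{E}_{bulk,\omega}=\mathcal{E}_{bulk,\omega}(1)=\mathcal{E}_{bulk,\omega}(0)$. At $s=0$ the operator $\mathcal{H}_\omega(0)=\mathcal{H}_{per}$ is periodic and gapped, so the non-commutative Chern character equals the Bloch Chern number of the filled band, $\mathcal{E}_{bulk,\omega}(0)=\mathcal{C}_1(\mathbbm{1}_{(-\infty,\lambda_0)}(\mathcal{H}_{per}))$ (the correspondence cited after \eqref{eq_bulk_index}; see \cite{Avron1994charge,Bellissard94non_commutative_QHE}), which gives the stated equality; and for the Qi--Wu--Zhang model satisfying Assumptions \ref{asmp_Hamiltonian_deter_part}--\ref{asmp_distribution_density}, the explicit computation recalled in Appendix \ref{appendixA} gives $\mathcal{C}_1(\mathbbm{1}_{(-\infty,0)}(\mathcal{H}_{per}))=\mathrm{sgn}(\delta)\in\{\pm1\}$, whence nontriviality.

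\medskip

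\noindent The main obstacle is the endpoint continuity in Step 2: making rigorous that the localization estimates underpinning both the definition and the quantization of $\mathcal{E}_{bulk}$ are uniform for $s$ in a left neighborhood of $1$ and survive the limit $s\to1$, where only the Aizenman--Molchanov regime (not a spectral gap) is available, and controlling the unbounded position operators $x_1,x_2$ in combination with the merely strong convergence $P_\omega(s)\to P_\omega$ — precisely the subtleties already handled in the trace manipulations of Section \ref{sec_link_refined_bulk_index}.
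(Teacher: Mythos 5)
Your overall strategy coincides with the paper's: you deform $\mathcal{H}_\omega$ to $\mathcal{H}_{per}$ through exactly the same linear family (your $\mathcal{H}_\omega(s)$ is identical to the paper's $\mathcal{H}_{\omega,t}$ from the proof of Proposition \ref{prop_lifshitz_tail}, since $s(V_\omega-1\otimes V)=V_{\omega,s}$), you observe the mobility gap persists uniformly along the path, and you invoke integrality of the Hall conductance to conclude constancy. The paper formalizes this in Theorem \ref{thm_anderson_localization_deform}, which you correctly anticipate. Where you diverge is in \emph{how} the continuity of $s\mapsto\mathcal{E}_{bulk,\omega}(s)$ is established, and that is exactly where your proposal has a gap you yourself flag.

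The paper does not separate the path into a ``spectral-gap regime'' and a ``mobility-gap endpoint.'' Instead it imports the Hölder continuity estimate of \cite[Lemma 1]{Schulz-Baldes01homotopy}, recorded as Lemma \ref{lem_deform_projection_localization}: $\mathbb{E}\bigl(\|P_{\omega,\lambda,t_1}(\bm{n},\bm{m})-P_{\omega,\lambda,t_2}(\bm{n},\bm{m})\|\bigr)\le C|t_1-t_2|^{\gamma}e^{-\alpha|\bm{n}-\bm{m}|}$ \emph{uniformly on $[0,1]$}, i.e.\ including $t=1$ where only the mobility gap (not a spectral gap) is available. It then uses the microscopic formula of \cite[Lemma~8]{graf2005equality} writing the index with commutators against $\mathbbm{1}_{\{n_j\ge 0\}}$, decomposes the index difference into three trace-class pieces each linear in $\tilde P_{\omega,\lambda,t_{1,2}}$, and bounds them termwise by $|t_1-t_2|^\gamma$ via Corollary \ref{corol_deform_projection_localization} and Hölder. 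This yields a quantitative modulus of continuity at all $t$ simultaneously, so the endpoint is not treated separately.

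Your Step~2, by contrast, argues norm-continuity of the Riesz projection for $s<1$ and then a limiting argument as $s\to1^-$ via strong convergence and dominated convergence. There are two concrete problems. First, the Riesz-projection argument requires the contour (hence the reference energy $\lambda^*$) to sit inside the spectral gap $(\lambda_0-(1-s)\Delta,\lambda_0+(1-s)\Delta)$, which shrinks to a point as $s\to1$; if you fix $\lambda^*\in\mathcal{I}$ once and for all, the contour argument fails already for $s$ such that $(1-s)\Delta<|\lambda^*-\lambda_0|$, and even with $\lambda^*=\lambda_0$ the contour estimates deteriorate. Second, and more importantly, passing to the limit $s\to1$ using only strong convergence of projections plus uniform localization and dominated convergence would give you $\mathcal{E}_{bulk,\omega}(s)\to\mathcal{E}_{bulk,\omega}(1)$, but you still need $\mathcal{E}_{bulk,\omega}(s)$ to be well defined, quantized, and constant on a left neighborhood of $1$; for that you need the mobility-gap analysis there too, at which point the natural thing to establish is precisely the uniform Hölder estimate \eqref{eq_deform_projection_localization_4}, not a one-sided limit. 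In short, your identified ``main obstacle'' is genuine, and the missing idea is the Schulz--Baldes-type Hölder continuity of the projection kernel in the disorder-strength parameter within the Aizenman--Molchanov regime; once that lemma is in hand, the paper's direct termwise estimate on the microscopic index is cleaner than a two-regime patching argument.
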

As one expects, the proof follows a standard deformation argument: as in \cite{Schulz-Baldes01homotopy}, we construct a continuous path from $\mathcal{H}_{\omega}$ to the deterministic Hamiltonian $\mathcal{H}_{per}$ such that \textit{the interval $\mathcal{I}=(\lambda_0-\delta,\lambda_0+\delta)$ introduced in Theorem \ref{thm_anderson_localization} is always an Aizenman-Molchanov mobility gap of the deformed Hamiltonian along the path}. Then, by the continuity and integrality of the bulk index, we conclude the proof of Theorem \ref{thm_invariant_index}.

\begin{remark}
Theorem \ref{thm_invariant_index} indicates that the bulk medium is topologically nontrivial even in the presence of disorder. We point out that this result only holds under a weak disorder. In fact, let us incorporate the strength parameter $s>0$ and consider the Hamiltonian
\begin{equation*}
\mathcal{H}_{\omega}^{s}:=\mathcal{H}+sV_{\omega}.
\end{equation*}
Then we can prove that there exists $s_0>0$ such that when $s>s_0$, the whole spectrum of $\mathcal{H}_{\omega}^{s}$ falls inside the Aizenman-Molchanov localization regime (see, e.g., \cite{stolz2011introduction,aizenman2001finite_volime,prado2017dynamical}). Following the same deformation argument as in \cite{Schulz-Baldes01homotopy}, one then sees that the bulk index $\mathcal{E}_{bulk,\omega}$ is equal to the one when the system is non-filling (or equivalently, full-filling), where the latter is apparently topologically trivial; in other words, we have $\mathcal{E}_{bulk,\omega}=0$ for $s>s_0$. That is, long-range propagation in the bulk medium, characterized by the topological index, is suppressed under strong disorder.
\end{remark}

Now, we embark on proving Theorem \ref{thm_invariant_index}. To facilitate the deformation argument, we introduce the following smooth family of Hamiltonians on $\mathcal{X}$,
\begin{equation*}
\mathcal{H}_{\omega,t}:=\mathcal{H}_{0}+1\otimes V+V_{\omega,t}=\mathcal{H}_{per}+V_{\omega,t}
\end{equation*}
with $(V_{\omega,t}\psi)(\bm{n}):=-t\cdot \text{diag}\big((1-\omega_{\bm{n}}^{(1)})V_1,\cdots ,(1-\omega_{\bm{n}}^{(d)}V_d)\big)\cdot \psi(\bm{n})$. (Note that it has already been proposed in the proof of Proposition \ref{prop_lifshitz_tail}.) Apparently, $\mathcal{H}_{\omega,0}=\mathcal{H}_{per}$ and $\mathcal{H}_{\omega,1}=\mathcal{H}_{\omega}$. The key observation is that the interval $\mathcal{I}$ introduced in Theorem \ref{thm_anderson_localization} is always an Aizenman-Molchanov mobility gap of $\mathcal{H}_{\omega,t}$.
\begin{theorem} \label{thm_anderson_localization_deform}
Let the constant $s\in (0,1),\alpha,C>0$ and the interval $\mathcal{I}$ be the same as in Theorem \ref{thm_anderson_localization}. Then
\begin{equation*}
\mathbb{E}(\|G_{\omega,t}(\bm{n},\bm{m};\lambda+i\epsilon)\|^s)\leq Ce^{-\alpha|\bm{n}-\bm{m}|}
\end{equation*}
holds for all $t\in [0,1]$, $\lambda\in \mathcal{I}$, $\epsilon>0$, and $\bm{n},\bm{m}\in\mathbb{Z}^2$. Here, $$G_{\omega,t}(\bm{n},\bm{m};z)=\langle\mathbbm{1}_{\{\bm{n}\}},(\mathcal{H}_{\omega,t}-z)^{-1}\mathbbm{1}_{\{\bm{m}\}}\rangle_{\ell^2(\mathbb{Z}^2)}.$$
\end{theorem}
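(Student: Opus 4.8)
The plan is to follow the proof of Theorem~\ref{thm_anderson_localization}, keeping track of how every constant depends on $t$. Write $\mathcal{H}_{\omega,t}=\mathcal{H}_{per}+V_{\omega,t}$. Since $\|V_{\omega,t}\|_{\mathcal{X}}\le t\sup_i|V_i|\le t\Delta$ for every realization (using $|1-\omega_{\bm{n}}^{(i)}|\le1$ and $\sup_i|V_i|\le\Delta$, cf.\ the proof of Proposition~\ref{prop_lifshitz_tail}), we have the deterministic inclusion $\sigma(\mathcal{H}_{\omega,t})\subset\sigma(\mathcal{H}_{per})+[-t\Delta,t\Delta]$. The main difficulty is that as $t\to0$ the on-site distribution of $V_{\omega,t}$ concentrates, with density of order $1/t$, so the a~priori fractional-moment bounds of Lemmas~\ref{lem_apriori_bound_L} and~\ref{lem_apriori_bound_whole} degenerate. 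I therefore split $[0,1]=[0,t_\ast]\cup[t_\ast,1]$ with $t_\ast:=1/2$ and take the interval $\mathcal{I}=(\lambda_0-\delta,\lambda_0+\delta)$ from the proof of Theorem~\ref{thm_anderson_localization} with $\delta$ small enough that, in addition, $\delta<\Delta/4$; this is permissible since that proof only fixes a lower bound on the scale $L$, hence leaves $\delta=\tfrac{1}{2}L^{-\beta}$ free to be made arbitrarily small.

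\textbf{The regime $t\in[0,t_\ast]$.} For $\lambda\in\mathcal{I}$ and $\epsilon>0$ the spectral inclusion gives $\text{dist}(\lambda+i\epsilon,\sigma(\mathcal{H}_{\omega,t}))\ge(\Delta-\delta)-t_\ast\Delta\ge\Delta/4=:c_0>0$, uniformly in $\omega$. The Combes--Thomas estimate of Lemma~\ref{lem_resolv_CT_estimate}, applied to $\mathcal{H}_{\omega,t}$, yields $\|G_{\omega,t}(\bm{n},\bm{m};\lambda+i\epsilon)\|\le(C/c_0)e^{-c_0|\bm{n}-\bm{m}|}$ deterministically, whence $\mathbb{E}(\|G_{\omega,t}(\bm{n},\bm{m};\lambda+i\epsilon)\|^s)\le C_1e^{-\alpha_1|\bm{n}-\bm{m}|}$ with $\alpha_1:=sc_0$.

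\textbf{The regime $t\in[t_\ast,1]$.} Here the on-site densities of $V_{\omega,t}$ are bounded uniformly in $t$ (by roughly $\|\mu\|_\infty/t_\ast$ times a constant depending only on $V$), so Lemmas~\ref{lem_apriori_bound_L} and~\ref{lem_apriori_bound_whole} hold for $\mathcal{H}_{\omega,t}$ with a constant uniform in $t\in[t_\ast,1]$. Next, the Lifshitz-type estimate of Proposition~\ref{prop_lifshitz_tail} holds for the periodic-box Hamiltonian $\mathcal{H}^{\sharp}_{\omega,L}(t)$ (defined in its proof) for \emph{every} $t\in[0,1]$ with the same bound $CL^{2-\eta\beta}$: one repeats the proof along the analytic path $s\mapsto\mathcal{H}^{\sharp}_{\omega,L}(s)$, $s\in[0,t]\subset[0,1]$, whose Feynman--Hellmann derivative bound $|\lambda'(s)|\le\Delta\sup_{\bm{n},i}|1-\omega_{\bm{n}}^{(i)}|$ does not involve $t$, so that an eigenvalue of $\mathcal{H}^{\sharp}_{\omega,L}(t)$ in $(\lambda_0-L^{-\beta},\lambda_0+L^{-\beta})$ again forces $\Delta-L^{-\beta}\le\Delta\sup_{\bm{n},i}|1-\omega_{\bm{n}}^{(i)}|$. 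Feeding these two $t$-uniform inputs into the argument of Proposition~\ref{prop_initial_length} gives the initial-length estimate with $t$-uniform constants, and the geometric-decoupling iteration of Section~\ref{sec_geo_decouple} --- whose only model-dependent ingredients are the a~priori bound, the initial-length estimate, the bound $\|\mathcal{F}^{(L)}(\bm{n},\bm{m})\|\le\sup_{\bm{n},\bm{m}}\|\mathcal{H}_0(\bm{n},\bm{m})\|$ (unaffected, since $\mathcal{F}^{(L)}$ contains only hopping terms), and Lemma~\ref{lem_depleted_Green_bound_with_full} (a distribution-free resolvent identity) --- runs verbatim, producing $\mathbb{E}(\|G_{\omega,t}(\bm{n},\bm{m};\lambda+i\epsilon)\|^s)\le C_2e^{-\alpha_2|\bm{n}-\bm{m}|}$ for all $t\in[t_\ast,1]$, $\lambda\in\mathcal{I}$, $\epsilon>0$; at $t=1$ this recovers Theorem~\ref{thm_anderson_localization}, consistent with choosing there the same $s$, $\mathcal{I}$ and (after the adjustments above) $\alpha,C$.

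\textbf{Conclusion.} With $C:=\max(C_1,C_2)$ and $\alpha:=\min(\alpha_1,\alpha_2)$ the asserted bound holds for all $t\in[0,1]$. The only real obstacle is the concentration of the disorder as $t\to0$, which destroys the fractional-moment a~priori bound and is bypassed by the split at $t_\ast$ together with the smallness $\delta<\Delta/4$; the rest is bookkeeping to confirm that the constants in Theorem~\ref{thm_anderson_localization} can be taken uniform over $t\in[t_\ast,1]$.
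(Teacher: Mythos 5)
Your proof is correct, and it exposes and resolves a subtlety that the paper's one-sentence sketch (``the detailed proof follows the same lines as in Section \ref{sec_anderson_loc}, which is left for the interested reader'') does not acknowledge. The random on-site potential $V_{\omega,t}$ has a density of order $\|\mu\|_\infty/(t\min_i|V_i|)$, so the fractional-moment a priori bounds (Lemmas \ref{lem_apriori_bound_L} and \ref{lem_apriori_bound_whole}) that underlie both the initial-length estimate and the geometric-decoupling iteration blow up like $t^{-s}$ as $t\to 0$; the Section \ref{sec_anderson_loc} argument therefore cannot be rerun verbatim near $t=0$, and the paper's heuristic that ``sparser spectrum implies stronger localization,'' while true, does not by itself repair the constants inside that argument. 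Your split at $t_\ast$ is exactly the right fix: for $t\le t_\ast$ the deterministic inclusion $\sigma(\mathcal{H}_{\omega,t})\subset\sigma(\mathcal{H}_{per})+[-t\Delta,t\Delta]$ (using $\|V_{\omega,t}\|\le t\sup_i|V_i|\le t\Delta$, the bound the paper itself invokes in the proof of Proposition \ref{prop_lifshitz_tail}) gives a residual gap of width $\ge\Delta/4$ around $\lambda_0$ once $\delta<\Delta/4$, so the Combes--Thomas estimate alone furnishes exponential decay without any disorder machinery; for $t\ge t_\ast$ the densities are uniformly bounded, the Feynman--Hellmann derivative bound and hence Proposition \ref{prop_lifshitz_tail} are $t$-uniform as you check, and Propositions \ref{prop_initial_length} and the decoupling iteration then run through with $t$-uniform constants. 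The only mild mismatch with the theorem's phrasing is the requirement $\delta<\Delta/4$, which may entail shrinking $\mathcal{I}$; as you correctly note, the proof of Theorem \ref{thm_anderson_localization} leaves $\delta=\tfrac12L^{-\beta}$ free to be taken arbitrarily small, so both statements can be read with that adjusted interval, but the paper's claim that the constants ``be the same'' is really an existence assertion with possibly adjusted $\alpha,C,\mathcal{I}$.
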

Intuitively, this is because for $t<1$, $\mathcal{H}_{\omega,t}$ has a sparser spectrum near $\lambda=\lambda_0$ compared to $\mathcal{H}_{\omega,1}=\mathcal{H}_{\omega}$\footnote{One may be convinced by noticing that $\sigma(\mathcal{H}_{\omega,t})\subset \sigma(\mathcal{H}_{per})+\sigma(V_{\omega,t})$ while the right side does not include $\lambda=\lambda_0$ for $t<1$.}, and hence exhibits stronger localization. The detailed proof follows the same lines as in Section \ref{sec_anderson_loc}, which is left for the interested reader.

As one may expect, Theorem \ref{thm_anderson_localization_deform} implies that the spectral projection $P_{\omega,\lambda,t}:=\mathbbm{1}_{(-\infty,\lambda)}(\mathcal{H}_{\omega,t})$ ($\lambda\in\mathcal{I}$) stays localized and varies continuously as a function of $t$ because we stay in the mobility gap along the deformation. To be more specific, we recall the following result from \cite[Lemma 1]{Schulz-Baldes01homotopy}.
\begin{lemma} \label{lem_deform_projection_localization}
For any $\gamma\in (0,\frac{1}{2})$, there exists $C>0$ such that
\begin{equation} \label{eq_deform_projection_localization_3}
\mathbb{E}\big(\|P_{\omega,\lambda,t_1}(\bm{n},\bm{m})\|\big)\leq Ce^{-\alpha |\bm{n}-\bm{m}|}
\end{equation}
and 
\begin{equation} \label{eq_deform_projection_localization_4}
\mathbb{E}\big(\|P_{\omega,\lambda,t_1}(\bm{n},\bm{m})-P_{\omega,\lambda,t_2}(\bm{n},\bm{m})\| \big)\leq C|t_1-t_2|^{\gamma}e^{-\alpha |\bm{n}-\bm{m}|},
\end{equation}
for all $t_1,t_2\in [0,1]$ and $\bm{n},\bm{m}\in\mathbb{Z}^2$. The constant $\alpha$ is the same as in Theorem \ref{thm_anderson_localization_deform}.
\end{lemma}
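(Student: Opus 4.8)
\textbf{Proof strategy for Lemma \ref{lem_deform_projection_localization}.} The plan is to derive both estimates \eqref{eq_deform_projection_localization_3} and \eqref{eq_deform_projection_localization_4} from the fractional-moment bound of Theorem \ref{thm_anderson_localization_deform} by expressing the spectral projection as a contour integral of the resolvent. Since $P_{\omega,\lambda,t}=\mathbbm{1}_{(-\infty,\lambda)}(\mathcal{H}_{\omega,t})$ with $\lambda\in\mathcal{I}$ and $\mathcal{I}$ an Aizenman--Molchanov mobility gap for every $t\in[0,1]$, I would write, via the Helffer--Sj\"ostrand formula (or equivalently a Riesz projection with a contour encircling $(-\infty,\lambda)$),
\begin{equation*}
P_{\omega,\lambda,t}=\frac{1}{\pi i}\int_{\mathbb{C}}dm(z)\,\partial_{\bar z}\tilde\chi(z)\,(\mathcal{H}_{\omega,t}-z)^{-1},
\end{equation*}
where $\chi$ is a smooth cutoff that equals $1$ on $(-\infty,\inf\mathcal{I}]$ and $0$ on $[\lambda+\delta',\infty)$ for some small $\delta'$ keeping $\lambda+\delta'\in\mathcal{I}$, and $\tilde\chi$ is an almost-analytic extension supported in a thin strip. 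The off-diagonal decay of $P_{\omega,\lambda,t}(\bm{n},\bm{m})$ is then controlled by the off-diagonal decay of the resolvent kernel $G_{\omega,t}(\bm{n},\bm{m};z)$ on $\mathrm{supp}(\tilde\chi)$.

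The subtlety is that Theorem \ref{thm_anderson_localization_deform} only gives the fractional moment $\mathbb{E}(\|G_{\omega,t}(\bm{n},\bm{m};z)\|^s)$ with $s<1$, not the first moment, and the bound is uniform for $z=\lambda+i\epsilon$ with $\lambda\in\mathcal{I}$ but degenerates near $\Re z=\inf\mathcal{I}$ or $\Re z=\sup\mathcal{I}$. To bridge from the $s$-moment to the first moment I would use the standard decoupling trick: for a site not in $\{\bm{n},\bm{m}\}$ insert one extra hopping step via the resolvent (geometric) expansion, pair the $s$-moment bounds on the two resulting factors with the a priori bound of Lemma \ref{lem_apriori_bound_whole} (using $s<1$ and $2s<1$ after a further splitting if needed, or directly that $\mathbb{E}\|G\|^{s}\cdot\mathbb{E}\|G\|^{s}$ suffices when $s$ is chosen so that the exponential prefactors from the two geometrically decaying factors still combine to an exponential in $|\bm{n}-\bm{m}|$), so that the product of two $s$-moments controls an ordinary expectation of the kernel up to loss of a fraction of the decay rate $\alpha$; this is exactly the mechanism already used in \cite[Remark 1]{graf2005equality} and formalized in Proposition \ref{prop_localization_ground_state}. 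Then the complex integral over the thin strip converges because $\partial_{\bar z}\tilde\chi=\mathcal{O}(|\Im z|^\infty)$ absorbs the polynomial blow-up of resolvents in $|\Im z|^{-1}$, and the remaining $\lambda$-dependence of the fractional-moment constant is bounded on the compact set $\mathrm{supp}(\chi')\Subset\mathcal{I}$. This yields \eqref{eq_deform_projection_localization_3} with a constant uniform in $t$.

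For the H\"older continuity \eqref{eq_deform_projection_localization_4}, the idea is to write the difference of the two projections using the same contour representation, so that
\begin{equation*}
P_{\omega,\lambda,t_1}-P_{\omega,\lambda,t_2}=\frac{1}{\pi i}\int_{\mathbb{C}}dm(z)\,\partial_{\bar z}\tilde\chi(z)\Big((\mathcal{H}_{\omega,t_1}-z)^{-1}-(\mathcal{H}_{\omega,t_2}-z)^{-1}\Big),
\end{equation*}
and apply the resolvent identity $(\mathcal{H}_{\omega,t_1}-z)^{-1}-(\mathcal{H}_{\omega,t_2}-z)^{-1}=-(\mathcal{H}_{\omega,t_1}-z)^{-1}(V_{\omega,t_1}-V_{\omega,t_2})(\mathcal{H}_{\omega,t_2}-z)^{-1}$, noting that $V_{\omega,t_1}-V_{\omega,t_2}$ is diagonal, bounded by $C|t_1-t_2|$ in operator norm, and local (it only couples a site to itself). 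Summing over the intermediate site $\bm{k}$ and again decoupling the random variable at $\bm{k}$ from the endpoints, I would control $\mathbb{E}\big(\|G_{\omega,t_1}(\bm{n},\bm{k};z)\|^{s}\|G_{\omega,t_2}(\bm{k},\bm{m};z)\|^{s}\big)\le\mathbb{E}_{\bm{k}}(\cdots)$-style products by the a priori bound together with Theorem \ref{thm_anderson_localization_deform}, giving an exponential in $|\bm{n}-\bm{m}|$ times $\sum_{\bm{k}}e^{-\mathrm{const}(|\bm{n}-\bm{k}|+|\bm{k}-\bm{m}|)/2}$ which is summable. The factor $|t_1-t_2|$ appears once but must be traded down to $|t_1-t_2|^{\gamma}$ because near the band edges $\inf\mathcal{I},\sup\mathcal{I}$ the resolvent is not uniformly bounded; the standard remedy is an interpolation between the naive $\mathcal{O}(|t_1-t_2|\,|\Im z|^{-\mathrm{large}})$ bound and the trivial $\mathcal{O}(1)$ bound, optimized in $|\Im z|$, producing the loss of a power $\gamma\in(0,\tfrac12)$.

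\textbf{Main obstacle.} The genuinely delicate point is the passage from fractional ($s<1$) moments to the ordinary first moment of the kernel while preserving an exponential rate: one cannot simply raise both sides to the power $1/s$. The resolution is the extra-hopping/decoupling step that writes the first-moment kernel as a sum of products of two independent (after conditioning) fractional moments, at the cost of halving the decay rate $\alpha$ — this is why the statement quotes the same $\alpha$ as Theorem \ref{thm_anderson_localization_deform} only after that theorem has been stated with a rate that already anticipates such a loss. Making the $t$-uniformity and the $\gamma$-H\"older trade-off precise, and checking that the fractional-moment constant $C$ in Theorem \ref{thm_anderson_localization_deform} can be taken independent of $t\in[0,1]$ (which follows because the geometric-decoupling proof in Section \ref{sec_anderson_loc} only used $\sup_{\bm{n},\bm{m}}\|\mathcal{H}_0(\bm{n},\bm{m})\|$ and the Lifshitz estimate, both available uniformly in $t$), is the bulk of the work; everything else is the routine complex-integral and summation bookkeeping already appearing in Section \ref{sec_prelim}.
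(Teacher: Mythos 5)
The paper itself does not prove this lemma: it is quoted directly from \cite[Lemma 1]{Schulz-Baldes01homotopy}, so your attempt must stand on its own, and as written it has a genuine gap at its very first step. You represent the sharp Fermi projection $P_{\omega,\lambda,t}=\mathbbm{1}_{(-\infty,\lambda)}(\mathcal{H}_{\omega,t})$ by a Helffer--Sj\"ostrand integral of a smooth cutoff $\chi$. That identity is false in the present setting, because $\mathcal{I}$ is only a mobility gap, not a spectral gap: by (the $t$-analogue of) Proposition \ref{prop_deter_pp_spectrum}, $\mathcal{H}_{\omega,t}$ almost surely has pure point spectrum throughout $\mathcal{I}$, so $\chi(\mathcal{H}_{\omega,t})\neq\mathbbm{1}_{(-\infty,\lambda)}(\mathcal{H}_{\omega,t})$, and the discrepancy $(\mathbbm{1}_{(-\infty,\lambda)}-\chi)(\mathcal{H}_{\omega,t})$ is exactly a sum over in-gap localized eigenstates whose kernel decay is what the lemma asserts; the smooth-calculus-plus-resolvent-decay route therefore begs the question. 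For \eqref{eq_deform_projection_localization_3} no contour representation is needed at all: since $\mathbbm{1}_{(-\infty,\lambda)}$ is bounded by one and constant outside $\mathcal{I}$, it belongs to $B_1(\mathcal{I})$, and the bound follows from the $t$-uniform analogue of Proposition \ref{prop_localization_ground_state} obtained from Theorem \ref{thm_anderson_localization_deform}. Note also that the passage from fractional moments to first moments in that proposition is the eigenfunction-correlator/spectral-averaging argument, not the ``one extra hopping step'' decoupling you describe, although you do point to the right references.

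For the H\"older bound \eqref{eq_deform_projection_localization_4} the resolvent identity (giving the factor $|t_1-t_2|$ through $V_{\omega,t_1}-V_{\omega,t_2}$) and a final optimization are the right ingredients, but your handling of the region near the real axis does not close. With the available inputs ($\mathbb{E}\|G\|^{s}\leq Ce^{-\alpha|\cdot|}$ and $\|G\|\leq|\Im z|^{-1}$), the expected kernel of the product of two resolvents appearing after the resolvent identity is only $\mathcal{O}(|\Im z|^{-(2-s)})$, which is not integrable across the axis, and the ``trivial $\mathcal{O}(1)$ bound'' you propose to interpolate against is attached to a representation of the sharp projections that, as explained above, is not valid. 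What is actually needed is to isolate a small spectral window $(\lambda-\delta,\lambda+\delta)\subset\mathcal{I}$: off the window one may work with smooth functions/contours and harvest the $|t_1-t_2|$ factor at the price of negative powers of $\delta$, while the window contribution is a $B_1(\mathcal{I})$-function of $\mathcal{H}_{\omega,t_j}$ that must be made small in $\delta$ \emph{and} keep the exponential decay; this is typically done by a Cauchy--Schwarz interpolation between a Wegner-type estimate (small in the window width) and the correlator bound (exponentially decaying), which is precisely what caps the exponent at $\gamma<\tfrac12$, and then $\delta$ is optimized against $|t_1-t_2|$. Your sketch contains no such mechanism; indeed, if your $|\Im z|$-interpolation were legitimate it would produce an exponent close to $s$, i.e.\ arbitrarily close to $1$, which is stronger than the stated $\gamma<\tfrac12$ and should itself signal that the route is off. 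To repair the proposal you should either carry out this window decomposition or, as the paper does, defer to \cite[Lemma 1]{Schulz-Baldes01homotopy}.
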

Using the bound $\|P_{\omega,\lambda,t}(\bm{n},\bm{m})\|\leq 1$, we further improve these estimates as follows.
\begin{corollary} \label{corol_deform_projection_localization}
For any $\gamma\in (0,\frac{1}{2})$, there exists $C>0$ such that
\begin{equation} \label{eq_deform_projection_localization_1}
\mathbb{E}\big(\|P_{\omega,\lambda,t_1}(\bm{n},\bm{m})\|^{k}\big)\leq Ce^{-\alpha |\bm{n}-\bm{m}|}
\end{equation}
and 
\begin{equation} \label{eq_deform_projection_localization_2}
\mathbb{E}\big(\|P_{\omega,\lambda,t_1}(\bm{n},\bm{m})-P_{\omega,\lambda,t_2}(\bm{n},\bm{m})\|^{k} \big)\leq C|t_1-t_2|^{\gamma}e^{-\alpha |\bm{n}-\bm{m}|}, 
\end{equation}
for all $t_1,t_2\in [0,1]$, $k\in\mathbb{N}$ and $\bm{n},\bm{m}\in\mathbb{Z}^2$. The constant $\alpha$ is the same as in Theorem \ref{thm_anderson_localization_deform}.
\end{corollary}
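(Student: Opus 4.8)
The plan is to deduce the corollary directly from Lemma~\ref{lem_deform_projection_localization} by an elementary pointwise argument, exploiting the fact that $P_{\omega,\lambda,t}$ is an orthogonal projection, so that its matrix blocks are uniformly bounded in norm, independently of $\omega$, $t$, and the sites $\bm{n},\bm{m}$.

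First I would record the deterministic a priori bound: since $P_{\omega,\lambda,t}\in\mathcal{B}(\mathcal{X})$ is a self-adjoint projection, its operator norm equals $1$, hence each block satisfies $\|P_{\omega,\lambda,t}(\bm{n},\bm{m})\|\leq \sqrt{d}$ in the Frobenius norm; absorbing $\sqrt{d}$ into the constants we may write $\|P_{\omega,\lambda,t}(\bm{n},\bm{m})\|\leq 1$ as stated, and consequently $\|P_{\omega,\lambda,t_1}(\bm{n},\bm{m})-P_{\omega,\lambda,t_2}(\bm{n},\bm{m})\|\leq 2$ for all $t_1,t_2\in[0,1]$. The second step is the reduction of the $k$-th power to the first power via the elementary inequality: if $X\geq 0$ satisfies $X\leq M$ almost surely, then $X^{k}\leq M^{k-1}X$ for every integer $k\geq 1$, so $\mathbb{E}(X^k)\leq M^{k-1}\mathbb{E}(X)$.

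Applying this with $X=\|P_{\omega,\lambda,t_1}(\bm{n},\bm{m})\|$ and $M=1$ and then invoking \eqref{eq_deform_projection_localization_3} gives $\mathbb{E}(\|P_{\omega,\lambda,t_1}(\bm{n},\bm{m})\|^k)\leq \mathbb{E}(\|P_{\omega,\lambda,t_1}(\bm{n},\bm{m})\|)\leq Ce^{-\alpha|\bm{n}-\bm{m}|}$, which is \eqref{eq_deform_projection_localization_1}. Likewise, applying it with $X=\|P_{\omega,\lambda,t_1}(\bm{n},\bm{m})-P_{\omega,\lambda,t_2}(\bm{n},\bm{m})\|$ and $M=2$ and then using \eqref{eq_deform_projection_localization_4} yields $\mathbb{E}(\|P_{\omega,\lambda,t_1}(\bm{n},\bm{m})-P_{\omega,\lambda,t_2}(\bm{n},\bm{m})\|^k)\leq 2^{k-1}C|t_1-t_2|^{\gamma}e^{-\alpha|\bm{n}-\bm{m}|}$; renaming the ($k$-dependent) constant $2^{k-1}C$ as $C$ gives \eqref{eq_deform_projection_localization_2}. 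The exponential rate $\alpha$ is untouched because the monomial bound acts on the scalar $X$ and leaves the spatial decay, already furnished by Lemma~\ref{lem_deform_projection_localization}, unchanged.

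There is no genuine obstacle here; the only point worth flagging is that one must use the \emph{deterministic} uniform bound $\|P_{\omega,\lambda,t}(\bm{n},\bm{m})\|\leq 1$ (rather than a bound holding only in expectation) so that $X^k\leq M^{k-1}X$ is available almost surely before integrating. This is precisely why the hypothesis that $P_{\omega,\lambda,t}$ is a spectral projection, and not merely an exponentially localized operator, is invoked.
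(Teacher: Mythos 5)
Your proof is correct and is exactly the paper's intended argument: the paper dispatches the corollary with the single sentence ``Using the bound $\|P_{\omega,\lambda,t}(\bm{n},\bm{m})\|\leq 1$, we further improve these estimates as follows,'' and you have simply spelled out the elementary inequality $\mathbb{E}(X^k)\leq M^{k-1}\mathbb{E}(X)$ for $0\le X\le M$ a.s.\ that underlies it. The only cosmetic caveat (present in the paper's formulation as well) is that this gives a constant $C$ depending on $k$ via the factor $M^{k-1}$ for \eqref{eq_deform_projection_localization_2}, whereas the corollary as stated places the quantifier on $k$ after $C$; this is immaterial since only $k\in\{1,2,4\}$ is ever used in the proof of Theorem~\ref{thm_invariant_index}.
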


Now, we are ready to prove Theorem \ref{thm_invariant_index}.

\begin{proof}[Proof of Theorem \ref{thm_invariant_index}]
It is sufficient to prove that, for any fixed $t_1,t_2\in [0,1]$ that are sufficiently close, 
\begin{equation} \label{eq_invariant_index_proof_1}
\mathbb{E}(\mathcal{E}_{bulk,\omega,\lambda,t_1})=\mathbb{E}(\mathcal{E}_{bulk,\omega,\lambda,t_2})
\end{equation}
with $\mathcal{E}_{bulk,\omega,\lambda,t}:=-i\cdot\mathcal{T}\big(P_{\omega,\lambda,t}\big[[P_{\omega,\lambda,t},x_1],[P_{\omega,\lambda,t},x_2]\big]\big)$.

To estimate the difference $\mathbb{E}\big(\mathcal{E}_{bulk,\omega,\lambda,t_1}-\mathcal{E}_{bulk,\omega,\lambda,t_2}\big)$, we note that the bulk index can be recast into a microscopic version thanks to the localization of spectral projections as in Lemma \ref{lem_deform_projection_localization}\footnote{Its original definition in the form of TPUV is referred to as the macroscopic version since it averages throughout the whole system.}:
\begin{equation*}
\mathbb{E}\big(\mathcal{E}_{bulk,\omega,\lambda,t_{j}}\big)=-i\cdot\mathbb{E}\Big(\text{Tr}_{\mathcal{X}}\big(P_{\omega,\lambda,t_j}\big[[P_{\omega,\lambda,t_j},\mathbbm{1}_{\{n_1\geq 0\}}],[P_{\omega,\lambda,t_j},\mathbbm{1}_{\{n_2\geq 0\}}]\big]\big)\Big) .
\end{equation*}
The proof can be found in \cite[Lemma 8]{graf2005equality}; see also \cite{Avron1994charge}. Hence, we can decompose the difference of indices as
\begin{equation} \label{eq_invariant_index_proof_2}
\begin{aligned}
\mathbb{E}\big(\mathcal{E}_{bulk,\omega,\lambda,t_1}-\mathcal{E}_{bulk,\omega,\lambda,t_2}\big)
&=-i\cdot\mathbb{E}\Big( \text{Tr}_{\mathcal{X}}\big(\tilde{P}_{\omega,\lambda,t_{1,2}}\big[[P_{\omega,\lambda,t_1},\mathbbm{1}_{\{n_1\geq 0\}}],[P_{\omega,\lambda,t_1},\mathbbm{1}_{\{n_2\geq 0\}}]\big]\big) \Big) \\
&\quad -i\cdot\mathbb{E}\Big( \text{Tr}_{\mathcal{X}}\big(P_{\omega,\lambda,t_2}\big[[\tilde{P}_{\omega,\lambda,t_{1,2}},\mathbbm{1}_{\{n_1\geq 0\}}],[P_{\omega,\lambda,t_1},\mathbbm{1}_{\{n_2\geq 0\}}]\big]\big) \Big) \\
&\quad -i\cdot\mathbb{E}\Big( \text{Tr}_{\mathcal{X}}\big(P_{\omega,\lambda,t_2}\big[[P_{\omega,\lambda,t_2},\mathbbm{1}_{\{n_1\geq 0\}}],[\tilde{P}_{\omega,\lambda,t_{1,2}},\mathbbm{1}_{\{n_2\geq 0\}}]\big]\big) \Big)
\end{aligned}
\end{equation}
with $\tilde{P}_{\omega,\lambda,t_{1,2}}:=P_{\omega,\lambda,t_1}-P_{\omega,\lambda,t_2}$. By Lemma \ref{lem_deform_projection_localization}, each of the three operators in \eqref{eq_invariant_index_proof_2} is trace-class almost everywhere.\footnote{Heuristically, these operators are trace-class because the commutator with the indicator function $\mathbbm{1}_{\{n_j\geq 0\}}$ in the $j$\textsuperscript{th} direction are localizes in $n_j$, and thus the whole composition is localized in $\mathbb{Z}^2$. The detailed justification is done by using the list of trace-class properties in Section \ref{sec_trace_class}; see also the calculation under \eqref{eq_invariant_index_proof_4}.} Moreover, we claim that the right side of \eqref{eq_invariant_index_proof_2} is bounded by $|t_1-t_2|^{\gamma}$. Here,  we only show the proof for the first term, while the other terms can be treated similarly. Calculating the trace in position basis, we see that for almost every $\omega$
\begin{equation} \label{eq_invariant_index_proof_3}
\begin{aligned}
&\text{Tr}_{\mathcal{X}}\big(\tilde{P}_{\omega,\lambda,t_{1,2}}\big[[P_{\omega,\lambda,t_1},\mathbbm{1}_{\{n_1\geq 0\}}],[P_{\omega,\lambda,t_1},\mathbbm{1}_{\{n_2\geq 0\}}]\big]\big) \\
&=\text{tr}\sum_{\bm{n}\in \mathbb{Z}^2}\sum_{\bm{m}\in \mathbb{Z}^2}\tilde{P}_{\omega,\lambda,t_{1,2}}(\bm{n},\bm{m})\sum_{\bm{k}\in \mathbb{Z}^2}[P_{\omega,\lambda,t_1},\mathbbm{1}_{\{n_1\geq 0\}}](\bm{m},\bm{k})[P_{\omega,\lambda,t_1},\mathbbm{1}_{\{n_2\geq 0\}}](\bm{k},\bm{n}) \\
&\quad -\text{tr}\sum_{\bm{n}\in \mathbb{Z}^2}\sum_{\bm{m}\in \mathbb{Z}^2}\tilde{P}_{\omega,\lambda,t_{1,2}}(\bm{n},\bm{m})\sum_{\bm{k}\in \mathbb{Z}^2}[P_{\omega,\lambda,t_1},\mathbbm{1}_{\{n_2\geq 0\}}](\bm{m},\bm{k})[P_{\omega,\lambda,t_1},\mathbbm{1}_{\{n_1\geq 0\}}](\bm{k},\bm{n}),
\end{aligned}
\end{equation}
where the decomposition into two traces is justified by the fact that the two operators at the right side are individually trace-class for almost every $\omega$. By the Hölder inequality, the expectation of the first trace at the right side of \eqref{eq_invariant_index_proof_3} is estimated as
\begin{equation} \label{eq_invariant_index_proof_4}
\begin{aligned}
&\mathbb{E}\Big(\big|\text{tr}\sum_{\bm{n}\in \mathbb{Z}^2}\sum_{\bm{m}\in \mathbb{Z}^2}\tilde{P}_{\omega,\lambda,t_{1,2}}(\bm{n},\bm{m})\sum_{\bm{k}\in \mathbb{Z}^2}[P_{\omega,\lambda,t_1},\mathbbm{1}_{\{n_1\geq 0\}}](\bm{m},\bm{k})[P_{\omega,\lambda,t_1},\mathbbm{1}_{\{n_2\geq 0\}}](\bm{k},\bm{n})\big| \Big) \\
&\leq d \sum_{\bm{n},\bm{m},\bm{k}\in\mathbb{Z}^2}
\mathbb{E}\big( \|\tilde{P}_{\omega,\lambda,t_{1,2}}(\bm{n},\bm{m})\|^2 \big)^{1/2}
\mathbb{E}\big( \|[P_{\omega,\lambda,t_1},\mathbbm{1}_{\{n_1\geq 0\}}] (\bm{m},\bm{k})\|^4 \big)^{1/4} 
\mathbb{E}\big( \|[P_{\omega,\lambda,t_1},\mathbbm{1}_{\{n_2\geq 0\}}](\bm{k},\bm{n}) \|^4 \big)^{1/4} \\
&\leq C \sum_{\bm{n},\bm{m},\bm{k}\in\mathbb{Z}^2}
|t_1-t_2|^{\frac{\gamma}{2}}e^{-\frac{\alpha}{2} |\bm{n}-\bm{m}|}
\mathbb{E}\big( \|[P_{\omega,\lambda,t_1},\mathbbm{1}_{\{n_1\geq 0\}}] (\bm{m},\bm{k})\|^4 \big)^{1/4} 
\mathbb{E}\big( \|[P_{\omega,\lambda,t_1},\mathbbm{1}_{\{n_2\geq 0\}}](\bm{k},\bm{n}) \|^4 \big)^{1/4},
\end{aligned}
\end{equation}
where \eqref{eq_deform_projection_localization_2} is applied for the last inequality. The expectation of commutators is bounded as follows. Note that
\begin{equation*}
\begin{aligned}
\|[P_{\omega,\lambda,t_1},\mathbbm{1}_{\{n_1\geq 0\}}] (\bm{m},\bm{k})\|
&= \Big\|\big(\mathbbm{1}_{\{n_1\geq 0\}^c}P_{\omega,\lambda,t_1}\mathbbm{1}_{\{n_1\geq 0\}}-\mathbbm{1}_{\{n_1\geq 0\}}P_{\omega,\lambda,t_1}\mathbbm{1}_{\{n_1\geq 0\}^c}\big)(\bm{m},\bm{k}) \Big\| \\
&\leq \Big\|\big(\mathbbm{1}_{\{n_1\geq 0\}^c}P_{\omega,\lambda,t_1}\mathbbm{1}_{\{n_1\geq 0\}}\big)(\bm{m},\bm{k}) \Big\|
+\Big\|\big(\mathbbm{1}_{\{n_1\geq 0\}}P_{\omega,\lambda,t_1}\mathbbm{1}_{\{n_1\geq 0\}^c}\big)(\bm{m},\bm{k}) \Big\| \\
&=\|P_{\omega,\lambda,t_1}(\bm{m},\bm{k})\| \mathbbm{1}_{\{n_1\geq 0\}^c}(\bm{m})\mathbbm{1}_{\{n_1\geq 0\}}(\bm{k}) \\
&\quad +\|P_{\omega,\lambda,t_1}(\bm{m},\bm{k})\|) \mathbbm{1}_{\{n_1\geq 0\}^c}(\bm{k})\mathbbm{1}_{\{n_1\geq 0\}}(\bm{m}).
\end{aligned}
\end{equation*}
Moreover,
\begin{equation*}
\begin{aligned}
\mathbb{E}\Big( \big( \|P_{\omega,\lambda,t_1}(\bm{m},\bm{k})\| \mathbbm{1}_{\{n_1\geq 0\}^c}(\bm{m})\mathbbm{1}_{\{n_1\geq 0\}}(\bm{k})\big)^4 \Big)
&=\mathbb{E}\big(\|P_{\omega,\lambda,t_1}(\bm{m},\bm{k})\|^4 \big) \mathbbm{1}_{\{n_1\geq 0\}^c}(\bm{m})\mathbbm{1}_{\{n_1\geq 0\}}(\bm{k}) \\
&\leq Ce^{-\alpha |\bm{m}-\bm{k}|} e^{-\alpha\cdot \text{dist}(\bm{m},\mathbb{Z}^{-}\times \mathbb{Z})} e^{-\alpha\cdot \text{dist}(\bm{k},\mathbb{Z}^{+}\times \mathbb{Z})} \\
&\leq Ce^{-\frac{\alpha}{3} |\bm{m}-\bm{k}|}e^{-\frac{\alpha}{3} |m_1|}e^{-\frac{\alpha}{3} |k_1|},
\end{aligned}
\end{equation*}
where \eqref{eq_deform_projection_localization_1} is applied to derive the first inequality, and the following is utilized for the second
\begin{equation*}
|\bm{m}-\bm{k}|+\text{dist}(\bm{m},\mathbb{Z}^{-}\times \mathbb{Z})+\text{dist}(\bm{k},\mathbb{Z}^{+}\times \mathbb{Z})
\geq \text{dist}(\bm{k},\mathbb{Z}^{-}\times \mathbb{Z}) + \text{dist}(\bm{k},\mathbb{Z}^{+}\times \mathbb{Z}) =|k_1| ,
\end{equation*}
\begin{equation*}
|\bm{m}-\bm{k}|+\text{dist}(\bm{m},\mathbb{Z}^{-}\times \mathbb{Z})+\text{dist}(\bm{k},\mathbb{Z}^{+}\times \mathbb{Z})
\geq \text{dist}(\bm{m},\mathbb{Z}^{+}\times \mathbb{Z}) + \text{dist}(\bm{m},\mathbb{Z}^{-}\times \mathbb{Z}) =|m_1| .
\end{equation*}
Hence, we see from \eqref{eq_invariant_index_proof_4} that
\begin{equation*}
\begin{aligned}
&\mathbb{E}\Big(\big|\text{tr}\sum_{\bm{n}\in \mathbb{Z}^2}\sum_{\bm{m}\in \mathbb{Z}^2}\tilde{P}_{\omega,\lambda,t_{1,2}}(\bm{n},\bm{m})\sum_{\bm{k}\in \mathbb{Z}^2}[P_{\omega,\lambda,t_1},\mathbbm{1}_{\{n_1\geq 0\}}](\bm{m},\bm{k})[P_{\omega,\lambda,t_1},\mathbbm{1}_{\{n_2\geq 0\}}](\bm{k},\bm{n})\big| \Big) \\
&\leq C|t_1-t_2|^{\frac{\gamma}{2}}\sum_{\bm{n},\bm{m},\bm{k}\in\mathbb{Z}^2}
e^{-\frac{\alpha}{2} |\bm{n}-\bm{m}|}e^{-\frac{\alpha}{12} |\bm{m}-\bm{k}|}e^{-\frac{\alpha}{12} |\bm{k}-\bm{n}|}e^{-\frac{\alpha}{12} |m_1|}e^{-\frac{\alpha}{12} |k_1|}e^{-\frac{\alpha}{12} |k_2|}e^{-\frac{\alpha}{12} |n_2|}
\end{aligned}
\end{equation*}
with $C>0$ being independent of $t_1,t_2$. The infinite sum at the right side, which is calculated elementarily using \eqref{eq_finite_volume_cyclicity_proof_2}, is also finite and independent of $t_1,t_2$. Thus, by estimating the other two terms in \eqref{eq_invariant_index_proof_2} similarly, we conclude that
\begin{equation*}
\mathbb{E}(\mathcal{E}_{bulk,\omega,\lambda,t_1}-\mathcal{E}_{bulk,\omega,\lambda,t_2})=\mathcal{O}(|t_1-t_2|^{\gamma}).
\end{equation*}
This, together with the quantization of bulk index $\mathbb{E}(\mathcal{E}_{bulk,\omega,\lambda,t})\in\mathbb{Z}$, completes the proof of \eqref{eq_invariant_index_proof_1}.
\end{proof}

\bigskip
\textbf{Conflict of Interest Statement}: On behalf of all authors, the corresponding author states that there is no conflict of interest.

\bigskip
\textbf{Data Availability Statement}: On behalf of all authors, the corresponding author confirms that the current paper does not have associated data.

\setcounter{equation}{0}
\setcounter{subsection}{0}
\setcounter{theorem}{0}
\renewcommand{\theequation}{B.\arabic{equation}}
\renewcommand{\thesubsection}{B.\arabic{subsection}}
\renewcommand{\thetheorem}{B.\arabic{theorem}}

\footnotesize
\bibliographystyle{plain}
\bibliography{ref}

@article {SWP1,
    AUTHOR = {Ammari, Habib and Davies, Bryn and Hiltunen, Erik Orvehed},
     TITLE = {Anderson localization in the subwavelength regime},
   JOURNAL = {Comm. Math. Phys.},
  FJOURNAL = {Communications in Mathematical Physics},
    VOLUME = {405},
      YEAR = {2024},
    NUMBER = {1},
     PAGES = {Paper No. 1, 20},
}

@article {SWP2,
    AUTHOR = {Ammari, Habib and Kosche, Thea},
     TITLE = {Topological phenomena in honeycomb {F}loquet metamaterials},
   JOURNAL = {Math. Ann.},
  FJOURNAL = {Mathematische Annalen},
    VOLUME = {388},
      YEAR = {2024},
    NUMBER = {3},
     PAGES = {2755--2785},
}

@article {SWP3,
    AUTHOR = {Ammari, Habib and Davies, Bryn and Hiltunen, Erik Orvehed},
     TITLE = {Robust edge modes in dislocated systems of subwavelength
              resonators},
   JOURNAL = {J. Lond. Math. Soc. (2)},
  FJOURNAL = {Journal of the London Mathematical Society. Second Series},
    VOLUME = {106},
      YEAR = {2022},
    NUMBER = {3},
     PAGES = {2075--2135},
}

@article {SWP4,
    AUTHOR = {Ammari, Habib and Davies, Bryn and Hiltunen, Erik Orvehed and
              Yu, Sanghyeon},
     TITLE = {Topologically protected edge modes in one-dimensional chains
              of subwavelength resonators},
   JOURNAL = {J. Math. Pures Appl. (9)},
  FJOURNAL = {Journal de Math\'ematiques Pures et Appliqu\'ees. Neuvi\`eme
              S\'erie},
    VOLUME = {144},
      YEAR = {2020},
     PAGES = {17--49},
}

@article{qi2006qwz_model,
  title={Topological quantization of the spin {H}all effect in two-dimensional paramagnetic semiconductors},
  author={Qi, Xiao-Liang and Wu, Yong-Shi and Zhang, Shou-Cheng},
  journal={Physical Review B—Condensed Matter and Materials Physics},
  volume={74},
  number={8},
  pages={085308},
  year={2006},
  publisher={APS}
}

@article{ezawa2024nonlinear_phase_transition,
  title={Nonlinearity-induced topological phase transition characterized by the nonlinear {C}hern number},
  author={Sone, Kazuki and Ezawa, Motohiko and Ashida, Yuto and Yoshioka, Nobuyuki and Sagawa, Takahiro},
  journal={Nature Physics},
  volume={20},
  number={7},
  pages={1164--1170},
  year={2024},
  publisher={Nature Publishing Group UK London}
}

@article{cornean2019gapped_dirac,
  title={Localization for gapped {D}irac Hamiltonians with random perturbations: Application to graphene antidot lattices},
  author={Barbaroux, Jean-Marie and Cornean, Horia D and Zalczer, Sylvain},
  journal={Documenta Mathematica},
  volume={24},
  pages={65--93},
  year={2019}
}

@book{stollmann2001caught,
  title={Caught by disorder: bound states in random media},
  author={Stollmann, Peter},
  volume={20},
  year={2001},
  publisher={Springer Science \& Business Media}
}

@article{aizenman1998localization,
  title={Localization bounds for an electron gas},
  author={Aizenman, Michael and Graf, Gian M.},
  journal={Journal of Physics A: Mathematical and General},
  volume={31},
  number={32},
  pages={6783},
  year={1998},
  publisher={IOP Publishing}
}

@article{najar2006band_edge,
  title={2-dimensional localization of acoustic waves in random perturbation of periodic media},
  author={Najar, Hatem},
  journal={Journal of mathematical analysis and applications},
  volume={322},
  number={1},
  pages={1--17},
  year={2006},
  publisher={Elsevier}
}

@article{combes1994localization,
  title={Localization for some continuous, random Hamiltonians in d-dimensions},
  author={Combes, Jean M. and Hislop, Peter D.},
  journal={Journal of Functional Analysis},
  volume={124},
  number={1},
  pages={149--180},
  year={1994},
  publisher={Elsevier}
}

@article{veselic2002localization,
  title={Localization for random perturbations of periodic {S}chr{\"o}dinger operators with regular {F}loquet eigenvalues},
  author={Veseli{\'c}, I.},
  journal={Annales Henri Poincar{\'e}},
  volume={3},
  issue={2},
  pages={389--409},
  year={2002},
  organization={Springer}
}

@article{klopp2003note_anderson,
  title={A note on {A}nderson localization for the random hopping model},
  author={Klopp, Fr{\'e}d{\'e}ric and Nakamura, Shu},
  journal={Journal of Mathematical Physics},
  volume={44},
  issue={11},
  pages={4975--4980},
  year={2003},
  publisher={American Institute of Physics}
}

@article{nakamura2003anderson,
  title={Anderson localization for 2D discrete {S}chr{\"o}dinger operators with random magnetic fields},
  author={Klopp, Fr{\'e}d{\'e}ric and Nakamura, Shu and Nakano, Fumihiko and Nomura, Yuji},
  journal={Annales Henri Poincar{\'e}},
  volume={4},
  number={4},
  pages={795--811},
  year={2003},
  organization={Springer}
}

@article{prado2017dynamical,
  title={Dynamical localization for discrete {A}nderson {D}irac operators},
  author={Prado, Roberto A. and de Oliveira, C{\'e}sar R. and Carvalho, Silas L.},
  journal={Journal of Statistical Physics},
  volume={167},
  number={2},
  pages={260--296},
  year={2017},
  publisher={Springer}
}

@article{prado2021density,
  title={Density of states and Lifshitz tails for discrete 1D random {D}irac operators},
  author={Prado, Roberto A. and de Oliveira, C{\'e}sar R. and de Oliveira, Edmundo C.},
  journal={Mathematical Physics, Analysis and Geometry},
  volume={24},
  number={3},
  pages={30},
  year={2021},
  publisher={Springer}
}

@article{carmona1987anderson,
  title={Anderson localization for {B}ernoulli and other singular potentials},
  author={Carmona, Ren{\'e} and Klein, Abel and Martinelli, Fabio},
  journal={Communications in Mathematical Physics},
  volume={108},
  number={1},
  pages={41--66},
  year={1987},
  publisher={Springer}
}

@article{aizenman1993localization_elementary_derivation,
  title={Localization at large disorder and at extreme energies: An elementary derivations},
  author={Aizenman, Michael and Molchanov, Stanislav},
  journal={Communications in Mathematical Physics},
  volume={157},
  number={2},
  pages={245--278},
  year={1993},
  publisher={Springer}
}

@article{chulaevsky2023anderson,
  title={Anderson Localization in Discrete Random Displacements Models},
  author={Chulaevsky, Victor},
  journal={Journal of Statistical Physics},
  volume={190},
  number={1},
  pages={5},
  year={2023},
  publisher={Springer}
}

@article{li2022anderson,
  title={Anderson--{B}ernoulli localization on the three-dimensional lattice and discrete unique continuation principle},
  author={Li, Linjun and Zhang, Lingfu},
  journal={Duke mathematical journal},
  volume={171},
  number={2},
  pages={327--415},
  year={2022},
  publisher={Duke University Press}
}

@article{ding2020localization,
  title={Localization near the edge for the {A}nderson {B}ernoulli model on the two dimensional lattice},
  author={Ding, Jian and Smart, Charles K},
  journal={Inventiones mathematicae},
  volume={219},
  number={2},
  pages={467--506},
  year={2020},
  publisher={Springer}
}

@article{aizenman1994localization,
  title={Localization at weak disorder: some elementary bounds},
  author={Aizenman, Michael},
  journal={Reviews in mathematical physics},
  volume={6},
  number={05a},
  pages={1163--1182},
  year={1994},
  publisher={World Scientific}
}

@article{kirsch2007invitation,
  title={An invitation to random {S}chr{\"o}dinger operators},
  author={Kirsch, Werner},
  journal={arXiv preprint arXiv:0709.3707},
  year={2007}
}

@article{stolz2011introduction,
  title={An introduction to the mathematics of {A}nderson localization},
  author={Stolz, G{\"u}nter},
  journal={Entropy and the quantum II. Contemp. Math},
  volume={552},
  pages={71--108},
  year={2011}
}

@article{graf2005equality,
  title={Equality of the bulk and edge {H}all conductances in a mobility gap},
  author={Elgart, Alexander and Graf, Gian M. and Schenker, Jeffrey H.},
  journal={Communications in mathematical physics},
  volume={259},
  number={1},
  pages={185--221},
  year={2005},
  publisher={Springer}
}

@article{chan2024topological_alloy,
  title={Topological photonic alloy},
  author={Qu, Tiantao and Wang, Mudi and Cheng, Xiaoyu and Cui, Xiaohan and Zhang, Ruo-Yang and Zhang, Zhao-Qing and Zhang, Lei and Chen, Jun and Chan, CT},
  journal={Physical Review Letters},
  volume={132},
  number={22},
  pages={223802},
  year={2024},
  publisher={APS}
}

@article{qiu2025bec_finite,
  title={Bulk-edge correspondence in finite photonic structure},
  author={Qiu, Jiayu and Zhang, Hai},
  journal={arXiv preprint arXiv:2501.15531},
  year={2025}
}

@article{qiu2025generalized,
  title={Generalized bulk-interface correspondence for non-quantized spin transport},
  author={Qiu, Jiayu and Zhang, Hai},
  journal={arXiv preprint arXiv:2505.16331},
  year={2025}
}

@article{aizenman2001finite_volime,
  title={Finite-Volume Fractional-Moment Criteria for {A}nderson Localization},
  author={Aizenman, Michael and Schenker, Jeffrey H. and Friedrich, Roland M. and Hundertmark, Dirk},
  journal={Communications in Mathematical Physics},
  volume={224},
  number={1},
  pages={219--253},
  year={2001},
  publisher={Springer}
}

@article{Schulz-Baldes01homotopy,
    author = {Richter, T. and Schulz-Baldes, H.},
    title = {Homotopy arguments for quantized {H}all conductivity},
    journal = {Journal of Mathematical Physics},
    volume = {42},
    number = {8},
    pages = {3439-3444},
    year = {2001},
    month = {08},
    abstract = {Using the strong localization bounds obtained by the Aizenman–Molcanov method for a particle in a magnetic field and a disordered potential, we show that the zero-temperature Hall conductivity of a gas of such particles is quantized and constant as long as both Fermi energy and disorder coupling parameter vary in a region of strong localization of the corresponding two-dimensional phase diagram.},
    issn = {0022-2488},
    doi = {10.1063/1.1379070},
    url = {https://doi.org/10.1063/1.1379070},
    eprint = {https://pubs.aip.org/aip/jmp/article-pdf/42/8/3439/19261580/3439_1_online.pdf},
}

@article{Bellissard94non_commutative_QHE,
    author = {Bellissard, J. and van Elst, A. and Schulz‐ Baldes, H.},
    title = {The noncommutative geometry of the quantum {H}all effect},
    journal = {Journal of Mathematical Physics},
    volume = {35},
    number = {10},
    pages = {5373-5451},
    year = {1994},
    month = {10},
    abstract = {An overview of the integer quantum Hall effect is given. A mathematical framework using nonommutative geometry as defined by Connes is prepared. Within this framework, it is proved that the Hall conductivity is quantized and that plateaux occur when the Fermi energy varies in a region of localized states.},
    issn = {0022-2488},
    doi = {10.1063/1.530758},
    url = {https://doi.org/10.1063/1.530758},
    eprint = {https://pubs.aip.org/aip/jmp/article-pdf/35/10/5373/19099403/5373_1_online.pdf},
}

@article{barbaroux1997localization,
  title={Localization near band edges for random {S}chr{\"o}dinger operators},
  author={Barbaroux, Jean-Marie and Combes, Jean-Michel and Hislop, Peter D.},
  journal={Helvetica Physica Acta},
  volume={70},
  number={1},
  pages={16--43},
  year={1997},
  publisher={Basel: E. Birkhauser, 1928-c1999.}
}

@article{stolz1998anderson,
  title={Anderson localization for random {S}chr{\"o}dinger operators with long range interactions},
  author={Kirsch, Werner and Stollmann, Peter and Stolz, G{\"u}nter},
  journal={Communications in mathematical physics},
  volume={195},
  number={3},
  pages={495--507},
  year={1998},
  publisher={Springer}
}

@article{seelmann2020band,
  title={Band edge localization beyond regular {F}loquet eigenvalues},
  author={Seelmann, Albrecht and T{\"a}ufer, Matthias},
  journal={Annales Henri Poincar{\'e}},
  volume={21},
  issue={7},
  pages={2151--2166},
  year={2020},
  organization={Springer}
}

@book{kato2013perturbation,
  title={Perturbation theory for linear operators},
  author={Kato, Tosio},
  volume={132},
  year={2013},
  publisher={Springer Science \& Business Media}
}

@article{Avron1994charge,
author={Avron, Joseph E.
and Seiler, Ruedi
and Simon, Barry},
title={Charge deficiency, charge transport and comparison of dimensions},
journal={Communications in Mathematical Physics},
year={1994},
month={Jan},
day={01},
volume={159},
number={2},
pages={399-422},
abstract={We study the relative index of two orthogonal infinite dimensional projections which, in the finite dimensional case, is the difference in their dimensions. We relate the relative index to the Fredholm index of appropriate operators, discuss its basic properties, and obtain various formulas for it. We apply the relative index to counting the change in the number of electrons below the Fermi energy of certain quantum systems and interpret it as the charge deficiency. We study the relation of the charge deficiency with the notion of adiabatic charge transport that arises from the consideration of the adiabatic curvature. It is shown that, under a certain covariance, (homogeneity), condition the two are related. The relative index is related to Bellissard's theory of the Integer Hall effect. For Landau Hamiltonians the relative index is computed explicitly for all Landau levels.},
issn={1432-0916},
doi={10.1007/BF02102644},
url={https://doi.org/10.1007/BF02102644}
}

@book{zworski2012semiclassical,
  title={Semiclassical Analysis},
  author={Zworski, M.},
  isbn={9780821883204},
  lccn={2012010649},
  series={Graduate studies in mathematics},
  url={https://books.google.com.hk/books?id=3Z0CAQAAQBAJ},
  year={2012},
  publisher={American Mathematical Society}
}

@article{Klein2005linear_disorder,
  title={Linear response theory for magnetic {S}chr{\"o}dinger operators in disordered media},
  author={Bouclet, Jean-Marc and Germinet, Francois and Klein, Abel and Schenker, Jeffrey H.},
  journal={Journal of Functional Analysis},
  volume={226},
  number={2},
  pages={301--372},
  year={2005},
  publisher={Elsevier}
}

@Article{vonKlitzing2020forty_years,
author={von Klitzing, Klaus
and Chakraborty, Tapash
and Kim, Philip
and Madhavan, Vidya
and Dai, Xi
and McIver, James
and Tokura, Yoshinori
and Savary, Lucile
and Smirnova, Daria
and Rey, Ana Maria
and Felser, Claudia
and Gooth, Johannes
and Qi, Xiaoliang},
title={40 years of the quantum {H}all effect},
journal={Nature Reviews Physics},
year={2020},
month={Aug},
day={01},
volume={2},
number={8},
pages={397-401},
abstract={The discovery of the quantum Hall effect (QHE) marked a turning point in condensed-matter physics. The measurement of the Hall resistance showed that electronic resistance could be defined precisely in terms of fundamental constants, even in a disordered and irregular sample. Over the past 40 years, the QHE has inspired new theories and led to experimental discoveries in a range of fields going beyond solid-state electronics to photonics and quantum entanglement. In this Viewpoint, physicists reflect on how the QHE has influenced their research.},
issn={2522-5820},
doi={10.1038/s42254-020-0209-1},
url={https://doi.org/10.1038/s42254-020-0209-1}
}

@article{Klitzing80qhe,
  title = {New Method for High-Accuracy Determination of the Fine-Structure Constant Based on Quantized {H}all Resistance},
  author = {Klitzing, K. v. and Dorda, G. and Pepper, M.},
  journal = {Phys. Rev. Lett.},
  volume = {45},
  issue = {6},
  pages = {494--497},
  numpages = {0},
  year = {1980},
  month = {Aug},
  publisher = {American Physical Society},
  doi = {10.1103/PhysRevLett.45.494},
  url = {https://link.aps.org/doi/10.1103/PhysRevLett.45.494}
}

@article{qizhang11topo_insulator,
  title = {Topological insulators and superconductors},
  author = {Qi, Xiao-Liang and Zhang, Shou-Cheng},
  journal = {Rev. Mod. Phys.},
  volume = {83},
  issue = {4},
  pages = {1057--1110},
  numpages = {0},
  year = {2011},
  month = {Oct},
  publisher = {American Physical Society},
  doi = {10.1103/RevModPhys.83.1057},
  url = {https://link.aps.org/doi/10.1103/RevModPhys.83.1057}
}

@book{bernivig13topo_insulator,
 ISBN = {9780691151755},
 URL = {http://www.jstor.org/stable/j.ctt19cc2gc},
 abstract = {This graduate-level textbook is the first pedagogical synthesis of the field of topological insulators and superconductors, one of the most exciting areas of research in condensed matter physics. Presenting the latest developments, while providing all the calculations necessary for a self-contained and complete description of the discipline, it is ideal for graduate students and researchers preparing to work in this area, and it will be an essential reference both within and outside the classroom.The book begins with simple concepts such as Berry phases, Dirac fermions, Hall conductance and its link to topology, and the Hofstadter problem of lattice electrons in a magnetic field. It moves on to explain topological phases of matter such as Chern insulators, two- and three-dimensional topological insulators, and Majorana p-wave wires. Additionally, the book covers zero modes on vortices in topological superconductors, time-reversal topological superconductors, and topological responses/field theory and topological indices. The book also analyzes recent topics in condensed matter theory and concludes by surveying active subfields of research such as insulators with point-group symmetries and the stability of topological semimetals. Problems at the end of each chapter offer opportunities to test knowledge and engage with frontier research issues.Topological Insulators and Topological Superconductorswill provide graduate students and researchers with the physical understanding and mathematical tools needed to embark on research in this rapidly evolving field.},
 author = {B. Andrei Bernevig and Taylor L. Hughes},
 publisher = {Princeton University Press},
 title = {Topological Insulators and Topological Superconductors},
 urldate = {2024-12-12},
 year = {2013}
}

@article{haldane08realization,
  title = {Possible Realization of Directional Optical Waveguides in Photonic Crystals with Broken Time-Reversal Symmetry},
  author = {Haldane, F. D. M. and Raghu, S.},
  journal = {Phys. Rev. Lett.},
  volume = {100},
  issue = {1},
  pages = {013904},
  numpages = {4},
  year = {2008},
  month = {Jan},
  publisher = {American Physical Society},
  doi = {10.1103/PhysRevLett.100.013904},
  url = {https://link.aps.org/doi/10.1103/PhysRevLett.100.013904}
}

@article{wang2009observation,
  title={Observation of unidirectional backscattering-immune topological electromagnetic states},
  author={Wang, Zheng and Chong, Yidong and Joannopoulos, John D. and Solja{\v{c}}i{\'c}, Marin},
  journal={Nature},
  volume={461},
  number={7265},
  pages={772--775},
  year={2009},
  publisher={Nature Publishing Group UK London}
}

@article{ozawa19topological_photonics,
  title = {Topological photonics},
  author = {Ozawa, Tomoki and Price, Hannah M. and Amo, Alberto and Goldman, Nathan and Hafezi, Mohammad and Lu, Ling and Rechtsman, Mikael C. and Schuster, David and Simon, Jonathan and Zilberberg, Oded and Carusotto, Iacopo},
  journal = {Rev. Mod. Phys.},
  volume = {91},
  issue = {1},
  pages = {015006},
  numpages = {76},
  year = {2019},
  month = {Mar},
  publisher = {American Physical Society},
  doi = {10.1103/RevModPhys.91.015006},
  url = {https://link.aps.org/doi/10.1103/RevModPhys.91.015006}
}

@article{Hatsugai93chern,
  title = {{C}hern number and edge states in the integer quantum {H}all effect},
  author = {Hatsugai, Yasuhiro},
  journal = {Phys. Rev. Lett.},
  volume = {71},
  issue = {22},
  pages = {3697--3700},
  numpages = {0},
  year = {1993},
  month = {Nov},
  publisher = {American Physical Society},
  doi = {10.1103/PhysRevLett.71.3697},
  url = {https://link.aps.org/doi/10.1103/PhysRevLett.71.3697}
}

@article{Kellendonk02landau+ktheory,
author = {Kellendonk, J. and Richter, T. and Schulz-Baldes, H.},
title = {Edge current channels and {C}hern numbers in the integer quantum {H}all effect},
journal = {Reviews in Mathematical Physics},
volume = {14},
number = {01},
pages = {87-119},
year = {2002},
doi = {10.1142/S0129055X02001107},
URL = {https://doi.org/10.1142/S0129055X02001107},
eprint = { https://doi.org/10.1142/S0129055X02001107},
abstract = { A quantization theorem for the edge currents is proven for discrete magnetic half-plane operators. Hence the edge channel number is a valid concept also in presence of a disordered potential. Under a gap condition on the corresponding planar model, this quantum number is shown to be equal to the quantized Hall conductivity as given by the Kubo–Chern formula. For the proof of this equality, we consider an exact sequence of C*-algebras (the Toeplitz extension) linking the half-plane and the planar problem, and use a duality theorem for the pairings of K-groups with cyclic cohomology. }
}

@article{Kellendonk04landau+functional,
author = {Kellendonk, Johannes and Schulz-Baldes, Hermann},
year = {2004},
month = {06},
pages = {388-413},
title = {Quantization of edge currents for continuous magnetic operators},
volume = {209},
journal = {Journal of Functional Analysis},
doi = {10.1016/S0022-1236(03)00174-5}
}

@article{Kellendonk2004landau+ktheory,
  title={Boundary Maps for {C}*-Crossed Products with an Application to the Quantum {H}all Effect},
  author={Johannes Kellendonk and Hermann Schulz-Baldes},
  journal={Communications in Mathematical Physics},
  year={2004},
  volume={249},
  pages={611-637},
  url={https://api.semanticscholar.org/CorpusID:18638874}
}

@misc{taarabt2014landau+Ktheory,
      title={Equality of bulk and edge {H}all conductances for continuous magnetic random {S}chr\"odinger operators}, 
      author={Amal Taarabt},
      year={2014},
      eprint={1403.7767},
      archivePrefix={arXiv},
      primaryClass={math-ph},
      url={https://arxiv.org/abs/1403.7767}, 
}

@inproceedings{cornean2021landau+functional,
  title={From orbital magnetism to bulk-edge correspondence},
  author={Cornean, Horia D. and Moscolari, Massimo and Teufel, Stefan},
  booktitle={Annales Henri Poincar{\'e}},
  pages={1--55},
  year={2024},
  organization={Springer}
}

@article{shapiro2022shrodinger+functional,
  title={Tight-binding reduction and topological equivalence in strong magnetic fields},
  author={Shapiro, Jacob and Weinstein, Michael I.},
  journal={Advances in Mathematics},
  volume={403},
  pages={108343},
  year={2022},
  publisher={Elsevier}
}

@article{drouot2021microlocal,
  title={Microlocal analysis of the bulk-edge correspondence},
  author={Drouot, Alexis},
  journal={Communications in Mathematical Physics},
  volume={383},
  pages={2069--2112},
  year={2021},
  publisher={Springer}
}

@article{bal2019dirac+functional,
  title={Continuous bulk and interface description of topological insulators},
  author={Bal, Guillaume},
  journal={Journal of Mathematical Physics},
  volume={60},
  number={8},
  year={2019},
  publisher={AIP Publishing}
}

@article{bal2023dirac+microlocal,
  title={Topological charge conservation for continuous insulators},
  author={Bal, Guillaume},
  journal={Journal of Mathematical Physics},
  volume={64},
  number={3},
  year={2023},
  publisher={AIP Publishing}
}

@article{bal2023edge_curved,
  title={Edge state dynamics along curved interfaces},
  author={Bal, Guillaume and Becker, Simon and Drouot, Alexis and Kammerer, Clotilde Fermanian and Lu, Jianfeng and Watson, Alexander B.},
  journal={SIAM Journal on Mathematical Analysis},
  volume={55},
  number={5},
  pages={4219--4254},
  year={2023},
  publisher={SIAM}
}

@article{graf2018shortrange+transfer,
  title={The bulk-edge correspondence for disordered chiral chains},
  author={Graf, Gian Michele and Shapiro, Jacob},
  journal={Communications in Mathematical Physics},
  volume={363},
  pages={829--846},
  year={2018},
  publisher={Springer}
}

@article{avila2013shortrange+transfer,
  title={Topological invariants of edge states for periodic two-dimensional models},
  author={Avila, Julio Cesar and Schulz-Baldes, Hermann and Villegas-Blas, Carlos},
  journal={Mathematical Physics, Analysis and Geometry},
  volume={16},
  number={2},
  pages={137--170},
  year={2013},
  publisher={Springer}
}

@article{ludewig2020shortrange+coarse,
  title={Cobordism invariance of topological edge-following states},
  author={Ludewig, Matthias and Thiang, Guo Chuan},
  journal={arXiv preprint arXiv:2001.08339},
  year={2020}
}

@article{graf2013shortrange+scattering,
  title={Bulk-edge correspondence for two-dimensional topological insulators},
  author={Graf, Gian Michele and Porta, Marcello},
  journal={Communications in Mathematical Physics},
  volume={324},
  pages={851--895},
  year={2013},
  publisher={Springer}
}

@inproceedings{bourne2017ktheory,
  title={The K-theoretic bulk--edge correspondence for topological insulators},
  author={Bourne, Chris and Kellendonk, Johannes and Rennie, Adam},
  booktitle={Annales Henri Poincar{\'e}},
  volume={18},
  pages={1833--1866},
  year={2017},
  organization={Springer}
}

@article{kubota2017ktheory,
  title={Controlled topological phases and bulk-edge correspondence},
  author={Kubota, Yosuke},
  journal={Communications in Mathematical Physics},
  volume={349},
  number={2},
  pages={493--525},
  year={2017},
  publisher={Springer}
}

@article{prodan2016ktheory,
  title={Bulk and boundary invariants for complex topological insulators},
  author={Prodan, Emil and Schulz-Baldes, Hermann},
  journal={K},
  year={2016},
  publisher={Springer}
}

@article{lin2022transfer,
  title={Mathematical theory for topological photonic materials in one dimension},
  author={Lin, Junshan and Zhang, Hai},
  journal={Journal of Physics A: Mathematical and Theoretical},
  volume={55},
  number={49},
  pages={495203},
  year={2022},
  publisher={IOP Publishing}
}

@article{thiang2023transfer,
  title={Bulk-interface correspondences for one-dimensional topological materials with inversion symmetry},
  author={Thiang, Guo Chuan and Zhang, Hai},
  journal={Proceedings of the Royal Society A},
  volume={479},
  number={2270},
  pages={20220675},
  year={2023},
  publisher={The Royal Society}
}

@article{ammari2024toeplitz_1,
  title={Mathematical foundations of the non-Hermitian skin effect},
  author={Ammari, Habib and Barandun, Silvio and Cao, Jinghao and Davies, Bryn and Hiltunen, Erik Orvehed},
  journal={Archive for Rational Mechanics and Analysis},
  volume={248},
  number={3},
  pages={33},
  year={2024},
  publisher={Springer}
}

@article{ammari2024toeplitz_2,
  title={Applications of Chebyshev polynomials and Toeplitz theory to topological metamaterials},
  author={Ammari, Habib and Barandun, Silvio and Liu, Ping},
  journal={Reviews in Physics},
volume={13},
pages={Paper No. 100103},
  year={2025}
}

@article {ammari2024toeplitz_3,
    AUTHOR = {Ammari, Habib and Barandun, Silvio and Davies, Bryn and
              Hiltunen, Erik Orvehed and Kosche, Thea and Liu, Ping},
     TITLE = {Exponentially localized interface eigenmodes in finite chains
              of resonators},
   JOURNAL = {Stud. Appl. Math.},
  FJOURNAL = {Studies in Applied Mathematics},
    VOLUME = {153},
      YEAR = {2024},
    NUMBER = {4},
     PAGES = {Paper No. e12765, 25},
}

@misc{braverman2018spectralflowfamilytoeplitz,
      title={The spectral Flow of a family of Toeplitz operators}, 
      author={Maxim Braverman},
      year={2018},
      eprint={1803.11101},
      archivePrefix={arXiv},
      primaryClass={math.DG},
      url={https://arxiv.org/abs/1803.11101}, 
}

@misc{drouot2024bec_curvedinterfaces,
      title={The bulk-edge correspondence for curved interfaces}, 
      author={Alexis Drouot and Xiaowen Zhu},
      year={2024},
      eprint={2408.07950},
      archivePrefix={arXiv},
      primaryClass={math-ph},
      url={https://arxiv.org/abs/2408.07950}, 
}

@article{tauber2022chiral_finite_chain,
  title={Estimating bulk and edge topological indices in finite open chiral chains},
  author={Jezequel, Lucien and Tauber, Cl{\'e}ment and Delplace, Pierre},
  journal={J. Math. Phys.},
  volume={63},
  number={12},
  year={2022},
  publisher={AIP Publishing}
}

@article{bal2024continuous,
  title={Continuous topological insulators classification and bulk edge correspondence},
  author={Bal, Guillaume},
  journal={arXiv preprint arXiv:2412.00919},
  year={2024}
}

@article{mario19proof,
  title = {Proof of the Bulk-Edge Correspondence through a Link between Topological Photonics and Fluctuation-Electrodynamics},
  author = {Silveirinha, M\'ario G.},
  journal = {Phys. Rev. X},
  volume = {9},
  issue = {1},
  pages = {011037},
  numpages = {18},
  year = {2019},
  month = {Feb},
  publisher = {American Physical Society},
  doi = {10.1103/PhysRevX.9.011037},
  url = {https://link.aps.org/doi/10.1103/PhysRevX.9.011037}
}

\end{document}